\documentclass[12pt]{article} 

\usepackage{array,epsfig,fancyheadings,rotating}
\usepackage{sectsty, secdot}
\sectionfont{\fontsize{12}{14pt plus.8pt minus .6pt}\selectfont}
\renewcommand{\theequation}{\thesection\arabic{equation}}
\subsectionfont{\fontsize{12}{14pt plus.8pt minus .6pt}\selectfont}

\textwidth=31.9pc
\textheight=46.5pc
\oddsidemargin=1pc
\evensidemargin=1pc
\headsep=15pt
\topmargin=.6cm
\parindent=1.7pc
\parskip=0pt
\RequirePackage[authoryear]{natbib}

\usepackage{xr}

\usepackage{avant}
\usepackage[T1]{fontenc}
\usepackage[latin9]{inputenc}
\usepackage{geometry}[margin = 1 inch]
\usepackage{babel}

\usepackage{stmaryrd}

\usepackage{breakurl}

\makeatletter
\@ifundefined{date}{}{\date{}}
\usepackage{babel}
\usepackage{amsthm}\usepackage{bm}\usepackage{graphicx}\usepackage{rotating}\usepackage{array}\usepackage{booktabs}\usepackage{color}\usepackage{algorithmic}
\usepackage{algorithm}\usepackage{multirow}\usepackage{enumerate}\usepackage{graphicx}\usepackage{caption}\usepackage{subcaption}

\newcommand*{\addFileDependency}[1]{
  \typeout{(#1)}
  \@addtofilelist{#1}
  \IfFileExists{#1}{}{\typeout{No file #1.}}
}

\floatstyle{ruled}
\newfloat{algorithm}{tbp}{loa}
\providecommand{\algorithmname}{Algorithm}
\floatname{algorithm}{\protect\algorithmname}



\@ifundefined{definecolor}{color}{}
\RequirePackage[displaymath]{lineno}

\@ifundefined{definecolor}{color}{}
\usepackage{amsfonts}\usepackage{latexsym}

\usepackage{mathrsfs}\usepackage{amsthm}\usepackage{latexsym}\usepackage{booktabs}\@ifundefined{definecolor}{color}{}

\@ifundefined{definecolor}{color}{}


\newcommand{\independent}{\protect\mathpalette{\protect\independenT}{\perp}}
\def\independenT#1#2{\mathrel{\rlap{$#1#2$}\mkern2mu{#1#2}}}

\RequirePackage{amsthm,amsmath,amsfonts,amssymb}
\usepackage[unicode=true,
bookmarks=false,
breaklinks=false,pdfborder={0 0 1},backref=false,colorlinks=false]
{hyperref}
\RequirePackage{graphicx}

\newtheorem{lemma}{Lemma}
\newtheorem{proposition}{Proposition}
\newtheorem{theorem}{Theorem}

\newtheorem{corollary}{Corollary}
\newtheorem{assumption}{Assumption}

\newcommand{\T}{^{\mbox{{\sf T}}}} 
\def\T{{ \mathrm{\tiny T} }}

\usepackage{babel}

\def\E{{\bf E}}
\def\cov{{\mbox{cov}}}

\makeatother

\pagestyle{fancy}

\lhead[]{}
\chead[]{}
\rhead[]{\fancyplain{}\rightmark}

\cfoot{\thepage}

\begin{document}

\newcommand{\bSigma}{{\bm \Sigma}}

\newcommand{\bX}{\mathbf{X}}
\newcommand{\bY}{\mathbf{Y}}
\newcommand{\bx}{\mathbf{x}}
\newcommand{\mby}{\mathbf{y}}
\newcommand{\bV}{\mathbf{V}}
\newcommand{\bM}{\mathbf{M}}
\newcommand{\bN}{\mathbf{N}}
\newcommand{\bbeta}{{\bm\beta}}
\newcommand{\bgamma}{{\bm\gamma}}
\newcommand{\bGamma}{{\bm\Gamma}}
\newcommand{\bA}{\mathbf{A}}
\newcommand{\bB}{\mathbf{B}}
\newcommand{\bC}{\mathbf{C}}
\newcommand{\bD}{\mathbf{D}}
\newcommand{\bE}{\mathbf{E}}
\newcommand{\bI}{\mathbf{I}}
\newcommand{\bS}{\mathbf{S}}
\newcommand{\bs}{\mathbf{s}}
\newcommand{\bu}{\mathbf{u}}
\newcommand{\bv}{\mathbf{v}}
\newcommand{\f}{\mathbf{f}}
\newcommand{\bt}{\mathbf{t}}
\newcommand{\by}{\mathbf{y}}

\newcommand{\bh}{\mathbf{h}}
\newcommand{\bzeta}{{\bm \zeta}}
\newcommand{\bdelta}{{\bm \delta}}
\newcommand{\balpha}{{\bm \alpha}}
\newcommand{\btheta}{{\bm \theta}}
\newcommand{\bTheta}{{\bm \Theta}}
\newcommand{\bmeta}{{\bm \eta}}
\newcommand{\bPhi}{{\bm \Phi}}
\newcommand{\bPsi}{{\bm \Psi}}
\newcommand{\bOmega}{{\bm \Omega}}
\newcommand{\bP}{\mathbf{P}}
\newcommand{\bT}{\mathbf{T}}
\newcommand{\bW}{\mathbf{W}}
\newcommand{\bZ}{\mathbf{Z}}
\newcommand{\bepsilon}{\bm \epsilon}
\newcommand{\bmu}{\bm \mu}
\newcommand{\bxi}{\bm \xi}
\newcommand{\bPi}{\bm \Pi}

\global\long\def\mbA{\mathbf{A}}
 \global\long\def\hatmbA{\widehat{\mathbf{A}}}
 \global\long\def\mbB{\mathbf{B}}
  \global\long\def\mbC{\mathbf{C}}
 \global\long\def\mbD{\mathbf{D}}
 \global\long\def\mbN{\mathbf{N}}
 \global\long\def\mbV{\mathbf{V}}
 \global\long\def\hatmbV{\widehat{\mathbf{V}}}
 \global\long\def\mbv{\mathbf{v}}
 \global\long\def\hatmbu{\widehat{\mathbf{u}}}
 \global\long\def\mbu{\mathbf{u}}
 \global\long\def\hatmbv{\widehat{\mathbf{v}}}
 \global\long\def\tilmbv{\widetilde{\mbv}}
 \global\long\def\mbX{\mathbf{X}}
 \global\long\def\mbY{\mathbf{Y}}
 \global\long\def\mbU{\mathbf{U}}
 \global\long\def\mbW{\mathbf{W}}
 \global\long\def\mbP{\mathbf{P}}
 \global\long\def\mbQ{\mathbf{Q}}
 \global\long\def\mbO{\mathbf{O}}
 \global\long\def\mbM{\mathbf{M}}
  \global\long\def\bM{\mathbf{M}}
 \global\long\def\mbR{\mathbf{R}}
 \global\long\def\barmbV{\overline{\mathbf{V}}}
 \global\long\def\mbS{\mathbf{S}}
 \global\long\def\mbT{\mathbf{T}}
 
 \global\long\def\mbf{\mathbf{f}}
 \global\long\def\mbG{\mathbf{G}}
 \global\long\def\mbI{\mathbf{I}}
 \global\long\def\hatmbG{\widehat{\mbG}}
 \global\long\def\mbg{\mathbf{g}}
 \global\long\def\hatmbg{\widehat{\mbg}}
 \global\long\def\hatmbM{\widehat{\mbM}}
 \global\long\def\hatmbW{\widehat{\mbW}}
 \global\long\def\hatmbU{\widehat{\mbU}}

  \global\long\def\bolepsilon{\boldsymbol{\epsilon}}

 \global\long\def\mbw{\mathbf{w}}
 \global\long\def\hatmbw{\widehat{\mbw}}
 \global\long\def\bolGamma{\boldsymbol{\Gamma}}
 \global\long\def\hatbolGamma{\widehat{\bolGamma}}
 \global\long\def\bolOmega{\boldsymbol{\Omega}}
 \global\long\def\bolmu{\boldsymbol{\mu}}
 \global\long\def\bolnu{\boldsymbol{\nu}}
 \global\long\def\bolLambda{\boldsymbol{\Lambda}}
 \global\long\def\hatbolLambda{\widehat{\boldsymbol{\Lambda}}}
 \global\long\def\bolPhi{\boldsymbol{\Phi}}
 \global\long\def\mbbR{\mathbb{R}}
 \global\long\def\mbbS{\mathbb{S}}
 \global\long\def\calE{\mathcal{E}}
 \global\long\def\boltheta{\boldsymbol{\theta}}
 \global\long\def\boleta{\boldsymbol{\eta}}
 \global\long\def\bolphi{\boldsymbol{\phi}}
 \global\long\def\bolpsi{\boldsymbol{\psi}}
 \global\long\def\hatboltheta{\widehat{\boltheta}}
 \global\long\def\bolbeta{\boldsymbol{\beta}}
  \global\long\def\bbeta{\boldsymbol{\beta}}
 \global\long\def\hatbolbeta{\widehat{\bolbeta}}
 \global\long\def\bolgamma{\boldsymbol{\gamma}}
 \global\long\def\hatbolgamma{\widehat{\bolgamma}}
 \global\long\def\hatboleta{\widehat{\boleta}}
 \global\long\def\hatbolpsi{\widehat{\bolpsi}}
 \global\long\def\hatlambda{\widehat{\lambda}}
 \global\long\def\hatbolOmega{\widehat{\bolOmega}}
 \global\long\def\bolvarepsilon{\boldsymbol{\varepsilon}}
 \global\long\def\bolalpha{\boldsymbol{\alpha}}
 \global\long\def\bolTheta{\boldsymbol{\Theta}}
  \global\long\def\bTheta{\boldsymbol{\Theta}}

 \global\long\def\bolSigma{\boldsymbol{\Sigma}}
 \global\long\def\bolDelta{\boldsymbol{\Delta}}
 \global\long\def\hatbolSigma{\widehat{\bolSigma}}
 \global\long\def\calR{{\cal R}}
 \global\long\def\calRp{\calR^{\perp}}
 \global\long\def\calG{\mathcal{G}}
 \global\long\def\calS{\mathcal{S}}
 \global\long\def\calU{\mathcal{U}}
 \global\long\def\hatphi{\widehat{\phi}}
 \global\long\def\hatcalE{\widehat{\calE}}

\global\long\def\vecc{\mathrm{vec}}
 \global\long\def\Prob{\mathrm{Pr}}
 \global\long\def\E{\mathrm{E}}
 \global\long\def\Cov{\mathrm{cov}}
 \global\long\def\Corr{\mathrm{corr}}
 \global\long\def\F{\mathrm{F}}
 \global\long\def\J{\mathrm{J}}
 \global\long\def\I{\mathcal{I}_{n}}
 \global\long\def\II{\mathcal{I}_{n}^{\mathrm{1D}}}
 \global\long\def\Jn{\mathrm{J}_{n}}
 \global\long\def\Ln{\ell_{n}}
 \global\long\def\Var{\mathrm{var}}
 \global\long\def\dimension{\mathrm{dim}}
 \global\long\def\spn{\mathrm{span}}
 \global\long\def\vech{\mathrm{vech}}
  \global\long\def\supp{\mathrm{supp}}

\global\long\def\Env{\mathrm{Env}}
 \global\long\def\tr{\mathrm{trace}}
  \global\long\def\cov{\mathrm{cov}}
 \global\long\def\dg{\mathrm{diag}}
 \global\long\def\asyVar{\mathrm{avar}}
\global\long\def\HT{\mathrm{HT}}

\global\long\def\MenvU{\calE_{\mbM}(\mbU)}
 \global\long\def\hatMenvU{\widehat{\calE}_{\mbM}(\mbU)}
 \global\long\def\MDDM{\mathrm{MDDM}}
 \global\long\def\nMDDM{\mathrm{MDDM}_{n}}
 
\newcommand{\beq}{\begin{equation}}
\newcommand{\eeq}{\end{equation}}
\newcommand{\beqn}{\begin{equation*}}
\newcommand{\eeqn}{\end{equation*}}
\newcommand{\bea}{\begin{eqnarray}}
\newcommand{\eea}{\end{eqnarray}}
\newcommand{\bean}{\begin{eqnarray*}}
\newcommand{\eean}{\end{eqnarray*}}

\def\Tr{{ \mathrm{Tr} }}




\renewcommand{\baselinestretch}{1.9}
\renewcommand{\sectionmark}[1]{\markright{\thesection. #1}}
\markright{ \hbox{\footnotesize\rm Statistica Sinica
	}\hfill\\[-13pt]
	\hbox{\footnotesize\rm
	}\hfill }

\markboth{\hfill{\footnotesize\rm Mai et al.} \hfill}
{\hfill {\footnotesize\rm Slicing-free Inverse Regression in High-dimensional Sufficient Dimension Reduction} \hfill}

\renewcommand{\thefootnote}{}
$\ $\par


\fontsize{12}{14pt plus.8pt minus .6pt}\selectfont \vspace{0.8pc}
\centerline{\large\bf Slicing-free Inverse Regression in High-dimensional }
\vspace{2pt} 
\centerline{\large\bf Sufficient Dimension Reduction}
\vspace{.4cm} 
\centerline{Qing Mai$^1$, Xiaofeng Shao$^2$, Runmin Wang$^3$ and Xin Zhang$^1$} 
\vspace{.4cm} 
\centerline{\it Florida State University$^1$}
\centerline{\it University of Illinois at Urbana-Champaign$^2$}
\centerline{\it Texas A\&M University$^3$}
\vspace{.55cm} \fontsize{9}{11.5pt plus.8pt minus.6pt}\selectfont


\begin{quotation}
	\noindent {\it Abstract:}
		Sliced inverse regression \citep[SIR,][]{SIR} is a pioneering work and the most recognized method in sufficient dimension reduction.  While promising progress has been made in theory and methods of high-dimensional  SIR, two remaining challenges are still nagging high-dimensional multivariate applications. First, choosing the number of slices in SIR is a difficult problem, and it depends on the sample size, the distribution of variables, and other practical considerations. Second, the extension of SIR from univariate response to multivariate is not trivial. Targeting at the same dimension reduction subspace as SIR, we propose a new slicing-free method that provides a unified solution to  sufficient dimension reduction with high-dimensional covariates and univariate or multivariate response. We achieve this by adopting the recently developed martingale difference divergence matrix \citep[MDDM,][]{LS18} and penalized eigen-decomposition algorithms. To establish the consistency of our method with a high-dimensional predictor and a multivariate response, we develop a new concentration inequality for sample MDDM around its population counterpart using theories for U-statistics, which may be of independent interest. Simulations and real data analysis demonstrate the favorable finite sample performance of the proposed method.

	\vspace{9pt}
	\noindent {\it Key words and phrases:}
	Multivariate response, Sliced inverse regression, Sufficient dimension reduction, U-statistic.
	\par
\end{quotation}\par

\def\thefigure{\arabic{figure}}
\def\thetable{\arabic{table}}

\renewcommand{\theequation}{\thesection.\arabic{equation}}

\fontsize{12}{14pt plus.8pt minus .6pt}\selectfont

\section{Introduction}

Sufficient dimension reduction (SDR) is an important statistical analysis tool for data visualization, summary and inference. It extracts low-rank projections of the predictors $X$ that contain all the information about the response $Y$, without specifying a parametric model beforehand. The semi-parametric nature of SDR leads to great flexibility and convenience in practice. After SDR, we can model the conditional distributions of the response given the lower dimensional projected covariate using existing parametric or non-parametric methods. A salient feature with SDR  is that the low-rank projection space can be accurately estimated at a parametric rate with the nonparametric part treated as an infinitely dimensional nuisance parameter.  For example, in multi-index models, SDR can estimate the multiple projection directions without estimating the unspecified link function. 

A cornerstone of SDR is the SIR (sliced inverse regression), pioneered by \cite{SIR}, who first discovered the inner connection between the low-rank projection space and the eigen-space of $\cov(\E(\mbX\mid Y))$, under suitable assumptions. 
SIR is performed by slicing the response $Y$ and aggregating the conditional mean of the predictor $\mbX$ given the response $Y$ within each slice. To illustrate the idea, we consider a univariate response $Y$. Slicing involves picking $K+1$ constants $-\infty=a_0<a_1<\ldots<a_K=\infty$, and defining a new random variable $H$, where $H=k$ if and only if $a_{k-1}<Y\le a_{k}$.   Upon a centering and standardization of the covariate, i.e., $\mbX\rightarrow \widetilde{\mbX}=\bolSigma_{\mbX}^{-1/2}(\mbX-\E(\mbX))$, a simple eigen-decomposition can be conducted to find linear projections that explain most of the variability in the conditional expectation of the transformed predictor given the response across slices, that is, $\cov(\E(\widetilde{\mbX}\mid H))$. As an important variation of SIR, sliced average variance estimation \citep{SAVE} utilizes the conditional variance across slices. 
A key step in these inverse regression methods is apparently the choice of the slicing scheme. If $Y$ is sliced too coarsely, we may not be able to capture the full dependence of $Y$ on the predictors, which could lead to a large bias in the estimation of $\cov(\E(\widetilde{\mbX}\mid Y))$. If $Y$ is sliced too finely, the with-in-slice sample size is small, leading to a large variability in estimation. Although \citet{SIR,HC92} showed that SDR can still be consistent in large sample even when the slicing scheme is chosen poorly, \cite{ZN95} argued that the choice of slicing scheme is critical to achieve high estimation efficiency. To the best of our knowledge, there seems no universal guidance on the choice of the slicing scheme provided in the literature.

\cite{zhu2010dimension} and \cite{CZ13} showed that it is beneficial to aggregate multiple slicing schemes rather than relying on a single one. However, proposals in the above-mentioned papers have their own limitations as they exclusively focused on the univariate response. In many real life problems, it is common to encounter multi-response data.
 Component-wise analysis may not be sufficient for multi-response data  because it does not fully make use of the componentwise dependence in the response.   But slicing multivariate response is notoriously hard due to the curse of dimensionality, a common problem in multivariate nonparametric smoothing. As the dimension for the response becomes moderately large, it is increasingly difficult to make sure that each slice contains a decent number of samples, and the estimation can be unstable in practice. Hence, it is highly desirable to develop new SDR methods that do not involve slicing.

An important line of research in the recent SDR literature is to develop SDR methods for datasets with high-dimensional covariates, as motivated by many \allowdisplaybreaks contemporary applications. The idea of SDR is naturally attractive for high-dimensional datasets, as an effective reduction of the dimension in $X$ facilitates the use of existing modeling and inference methods that are tailored for low-dimensional covariates.  However,  most classical SDR methods are not directly applicable to the large $p$ small $n$ setting, where $p$ is the dimension of $X$ and $n$ is the sample size. To overcome the challenges  with high-dimensional covariates, several methods have been proposed  recently. In \cite{LZL18}, they show that the SIR estimator is consistent if and only if $\lim p/n=0$. When the dimension $p$ is larger than $n$, they propose a diagonal thresholding screening SIR (DT-SIR) algorithm, and show that it is consistent to recover the dimension reduction space under certain sparsity assumptions on both the covariance matrix of predictors and the loadings of the directions. In \cite{LZL19}, they further introduce a simple Lasso regression method to obtain an estimate of the SDR space by  constructing artificial response variables made up from top eigenvectors of the estimated conditional covariance matrix. In \cite{TWLZ18}, they propose a two-stage computational framework to solve
the sparse generalized eigenvalue problem, which includes the high-dimensional SDR as a special case, and propose a truncated Rayleigh flow method (namely, RIFLE) to estimate the leading generalized eigenvector.
Also see \cite{LLHL20} and \cite{tan2018convex} for related recent work. 
These methods provide valuable tools to tackle high-dimensional SDR problem. However, all these methods still rely on the SIR as an important component in their methodology and involve choosing a single slicing scheme, but little guideline on the slicing scheme is provided. Consequently, these methods cannot be easily applied to data with multivariate response, and the impact from the choice of slicing scheme is unclear.


In this article, we propose a novel slicing-free SDR method in the high-dimensional setting. Our proposal is inspired by a recent nonlinear dependence metric: the so-called martingale difference divergence matrix \citep[MDDM,][]{LS18}. The MDDM was developed by \cite{LS18} as a matrix-valued extension of MDD (martingale difference divergence) in \cite{SZ14}, which measures the (conditional) mean dependence of a response variable given a covariate, and was used for dimension reduction of a multivariate time series. As recently revealed by \cite{ZLS20}, at the population level, the eigenvectors (or generalized eigenvectors) of the MDDM are always contained in the central subspace. Building on these prior works, we propose a penalized eigen-decomposition on MDDM to perform SDR in high dimensions. For the case when the covariance matrix of the predictor is identity matrix, we adopt the truncated power method with hard thresholding to estimate the top-$K$ eigenvectors of MDDM. In the case of more general covariance structure, we adopt the RIFLE algorithm \citep{TWLZ18} and apply to the sample MDDM instead of sample SIR estimator of $\cov(\E(\mbX\mid Y))$. With the use of sample MDDM, this approach is completely slicing-free and allows  to treat the univariate response and multivariate response in a unified way, thus the practical difficulty of selecting the number of slices (especially for multivariate response) is circumvented. On the theory front,  we derive a concentration inequality for sample MDDM around its population counterpart by using theories for U statistics, and obtain a rigorous non-asymptotic theoretical justification for the estimated central subspaces for both settings. Simulations and real data analysis confirm that PMDDM outperforms slicing-based methods in estimation accuracy.



The rest of this paper is organized as follows. {In Section \ref{sec:MDDM}, we give a brief review of the martingale difference divergence matrix (MDDM) and then present a new concentration inequality for the sample MDDM around its population counterpart.  In Section~\ref{sec:method}, we present our general methodology of adopting MDDM in both model-free and model-based SDR problems, where we establish population level connections between the central subspace and the eigen-decomposition and the generalized eigen-decomposition of MDDM. Algorithms for regularized eigen-decomposition and generalized eigen-decomposition problems are proposed in Sections~\ref{subsec:PMDDM} and \ref{subsec:GEPMDDM}, respectively. Theoretical properties are established in Section~\ref{sec:theory}. Section~\ref{sec:numerical} contains numerical studies. Finally, Section~\ref{sec:discussion} concludes the paper with a short discussion. The Supplementary Materials collect all additional technical details and numerical results. }


\section{MDDM and its concentration inequality}\label{sec:MDDM}

Consider a pair of random vectors $\mbV\in\mbbR^{p}$, $\mbU\in\mbbR^{q}$ such that $\E(\|\mbU\|^2+\|\mbV\|^2)<\infty$. We use $\|\mbU\|=|\mbU|_{q}$ to denote the Euclidean norm in $\mbbR^q$. 
Define
\[\MDDM(\mbV\mid\mbU)=-\E\left[\{\mbV-\E(\mbV)\}\{\mbV'-\E(\mbV')\}^{\T}\|\mbU-\mbU'\|\right]\in\mbbR^{p\times p},\]
where $(\mbV',\mbU')$ is an independent copy of $(\mbV,\mbU)$.  \cite{LS18} established the following key properties of  $\MDDM(\mbV\mid\mbU)$: (i) It is  symmetric and positive semi-definite; (ii) $\E(\mbV\mid\mbU)=\E(\mbV)$ almost surely, is equivalent to $\MDDM(\mbV\mid\mbU)=0$; (iii) For any $p\times d$ matrix $\mbA$, $\MDDM(\mbA^\T\mbV\mid\mbU)=\mbA^\T\MDDM(\mbV\mid\mbU)\mbA$; (iv) There exist $p-d$ linearly independent combinations of $\mbV$ such that they
are (conditionally) mean independent of $\mbU$ if and only if $\mbox{rank} (\MDDM(\mbV|\mbU)) = d$.

Given a random sample of size $n$, i.e., $(\mbU_k,\mbV_k)_{k=1}^{n}$, the sample estimate of $\MDDM(\mbV\mid\mbU)$, denoted by $\nMDDM(\mbV\mid\mbU)$, is defined as 
\begin{equation}
\nMDDM(\mbV\mid\mbU)=-\dfrac{1}{n^{2}}\sum_{j,k=1}^{n}(\mbV_{j}-\barmbV_{n})(\mbV_{k}-\barmbV_{n})^\T\vert\mbU_{j}-\mbU_{k}\vert_{q},\label{MDDMn}
\end{equation}
where $\barmbV_{n}=n^{-1}\sum_{k=1}^{n}\mbV_k$ is the sample mean. 

In the following, we present a concentration inequality for the sample MDDM around its population counterpart, which plays an instrumental role in our consistency proof for the proposed penalized MDDM method later. To this end, we 
let $\mbV=(V_1,\cdots,V_p)^\T\in \mbbR^p$ and  assume the following condition.


\begin{enumerate}
	\item[(C1)] There exists two positive constants $\sigma_0$ and $C_0$ such that
	\begin{align}
		\begin{split}
			&\sup_{p}\max_{1\le j\le p}\E\{\exp(2\sigma_0V_j^2)\}\le C_0,\\
			&\E\{\exp(2\sigma_0\Vert \mbU\Vert_q^2)\}\le C_0.
		\end{split}
	\end{align}

\end{enumerate}
For a matrix $A=(a_{ij})$, we denote its max norm as $\|A\|_{max}=\max_{ij} |a_{ij}|$. 
\begin{theorem}
	\label{th:main}
	Suppose that Condition (C1) holds.
	There exists a positive integer
	$n_0=n_0(\sigma_0,C_0,q)<\infty$, $\gamma=\gamma(\sigma_0,C_0,q)\in (0,1/2)$ and a finite positive constant $D_0=D_0(\sigma_0,C_0,q)<\infty$ such that when $n\ge n_0$ and
	$16>\epsilon>D_0 n^{-\gamma}$, we have
	\[P(\|\MDDM_n(\mbV|\mbU)-\MDDM(\mbV|\mbU)\|_{\max}>12\epsilon)
	\le 54 p^2\exp\left\{-\frac{\epsilon^2n}{36\log^3(n)}\right\}.\]
	
\end{theorem}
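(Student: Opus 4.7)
Proof proposal: The plan is to reduce the matrix concentration inequality to $p^2$ scalar concentration inequalities, each of which I would analyze by viewing the $(i,l)$-entry of $\nMDDM(\mbV\mid\mbU)$ as a V-statistic with an unbounded kernel and controlling it by a truncation argument combined with Hoeffding-type inequalities for U-statistics. By a union bound over the entries,
\[
P\bigl(\|\nMDDM - \MDDM\|_{\max} > 12\epsilon\bigr) \;\leq\; \sum_{i,l=1}^{p} P\bigl(|[\nMDDM]_{i,l} - [\MDDM]_{i,l}| > 12\epsilon\bigr),
\]
so it suffices to prove an entry-wise bound of the form $54\exp\{-\epsilon^2 n/(36\log^3 n)\}$.

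For a fixed pair $(i,l)$, I would expand the centered product appearing in (\ref{MDDMn}):
\[
(V_{j,i}-\bar{V}_{n,i})(V_{k,l}-\bar{V}_{n,l}) \;=\; V_{j,i}V_{k,l} - V_{j,i}\bar{V}_{n,l} - \bar{V}_{n,i}V_{k,l} + \bar{V}_{n,i}\bar{V}_{n,l},
\]
which decomposes $[\nMDDM]_{i,l}$ into four V-statistic type terms. The diagonal $j=k$ contributions vanish because $\|\mbU_j-\mbU_j\|_q=0$, so each term is a genuine second-order U-statistic (possibly multiplied by sample means). An analogous expansion of the population quantity $[\MDDM]_{i,l}$ writes the target difference as the sum of several centered U-statistic deviations plus remainders involving $\bar{V}_{n,i}-\E V_i$, handled separately by standard sub-Gaussian concentration of the sample mean, itself immediate from Condition~(C1).

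The key step is a truncation argument to turn each unbounded-kernel U-statistic into a bounded one. Under (C1), both $V_{j,i}$ and $\|\mbU_j\|_q$ have Gaussian-type tails, so with probability at least $1-Cn^{-c}$ all observations satisfy $\max_i|V_{j,i}|\vee\|\mbU_j\|_q \leq \tau_n$ for a truncation level $\tau_n \asymp \sqrt{\log n}$. On this event the truncated kernel $V_{j,i}V_{k,l}\|\mbU_j-\mbU_k\|_q$ is bounded by $M \asymp \log^{3/2}(n)$. Applying Hoeffding's inequality for bounded second-order U-statistics then gives
\[
P\bigl(|U_n^{\mathrm{trunc}} - \E U_n^{\mathrm{trunc}}| > \epsilon\bigr) \;\leq\; 2\exp\bigl(-c\,n\epsilon^2/M^2\bigr) \;=\; 2\exp\bigl(-c'\,n\epsilon^2/\log^3 n\bigr),
\]
which reproduces the exponent in the theorem. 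Summing over the four decomposition terms, over the sample-mean tails, and over the truncation bad event and then taking a union bound with proper bookkeeping of absolute constants would yield the factor $54$ in front.

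The main obstacle I foresee is controlling the truncation bias $|\E U_n^{\mathrm{trunc}} - \E U_n|$. Because the true kernel is unbounded, this bias is nonzero and must be bounded by Cauchy--Schwarz together with the exponential moment bound in (C1); straightforward computation shows it is of polynomial order $n^{-\gamma}$ for some $\gamma \in (0,1/2)$. This is precisely the source of the lower constraint $\epsilon > D_0 n^{-\gamma}$: the deviation one wishes to detect must dominate the bias created by truncation. The condition $n \geq n_0$ arises for the same reason, guaranteeing that $\tau_n = \sqrt{c\log n}$ is large enough for the sub-Gaussian moment inequalities to give effective tail bounds and that $M^2/(n\epsilon^2)$ is small enough for Hoeffding's inequality to yield a nontrivial exponent. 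The remaining work is routine but requires careful tracking of the multiplicative constants across the four decomposition terms to arrive at the precise numerical bound in the statement.
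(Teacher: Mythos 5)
Your overall architecture matches the paper's: union bound over the $p^2$ entries, expansion of the centered sample MDDM into second- and third-order V-statistics, removal of the vanishing diagonal to obtain U-statistics, truncation at level $M\asymp\log^{3/2}(n)$, Hoeffding's exponential inequality for bounded U-statistics to get the $\exp\{-cn\epsilon^2/\log^3 n\}$ exponent, and control of the truncation bias to explain the constraint $\epsilon>D_0n^{-\gamma}$. Your plan to peel off the sample means $\bar V_{n,\cdot}$ and treat them by scalar sub-Gaussian concentration, rather than working with the third-order U-statistic directly as the paper does for $\mbT_n$, is a harmless variation.

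However, there is a genuine gap in how you dispose of the truncation. You propose to restrict to the event that all observations satisfy $|V_{j,i}|\vee\|\mbU_j\|_q\le\tau_n$ with $\tau_n\asymp\sqrt{\log n}$ and to add the complementary probability, of order $n^{-c}$, into the final union bound. That cannot yield the stated inequality: for $\epsilon$ near the lower threshold $D_0n^{-\gamma}$ with $\gamma<1/2$, the target bound $\exp\{-\epsilon^2n/(36\log^3 n)\}\le\exp\{-D_0^2n^{1-2\gamma}/(36\log^3 n)\}$ is super-polynomially small in $n$, so a polynomially small bad-event probability $Cn^{-c}$ dominates it and cannot be absorbed; and enlarging $\tau_n$ enough to make the bad event super-polynomially unlikely would inflate $M=\tau_n^3$ and destroy the Hoeffding exponent $n\epsilon^2/M^2$. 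The paper avoids this by \emph{not} discarding the exceedance event: it splits the U-statistic itself into $h\,\mathbf{1}(|h|\le M)$ and $h\,\mathbf{1}(|h|>M)$, bounds the second piece by products of empirical sums such as $n^{-1}\sum_k|V_{k1}|\,\|\mbU_k\|\,\mathbf{1}(\|\mbW_k\|_{\max}>cM^{1/3})$, observes that these summands are sub-exponential with means that are polynomially small in $n$ (since $M^{2/3}=\log n$), and applies Bernstein's inequality to obtain a genuinely exponential $\exp(-Cn\epsilon^2)$ tail for the exceedance part — using $\epsilon>D_0n^{-\gamma}$ a second time to dominate those small means. Supplying this step (or an equivalent control of the unbounded remainder with a super-polynomial tail) is what your sketch is missing.
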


The above bound is non-asymptotic and holds for all $(n,p,\epsilon)$ as long as the condition is satisfied. The exponent $\dfrac{\epsilon^2n}{\log^3(n)}$ is due to the use of a truncation argument along with Hoeffding's inequality for U-statistic, and 
seems hard to improve. Nevertheless we are able to achieve an exponential type bound under a uniform sub-Gaussian condition on both $\mbV$ and $\mbU$. This result may be of independent theoretical interest. For example, in the time series dimension reduction problem studied by \cite{LS18}, our Theorem~\ref{th:main} could potentially help extend the theory there from low-dimensional multivariate time series to higher dimensions. 

\section{Slicing-free Inverse Regression via MDDM}\label{sec:method}

\subsection{Inverse regression subspace in sufficient dimension reduction}

Sufficient dimension reduction (SDR) methods aim to identify the central subspace that preserves all the information in the predictors. In this paper, we consider the SDR problem of a multivariate response $\mbY\in\mathbb{R}^q$ on a multivariate predictor $\mbX\in\mathbb{R}^{p}$. The central subspace $\cal S_{\mbY\mid\mbX}$ is defined as the intersection of all the subspaces $\cal S$ such that $\mbY\independent \mbX\mid \mbP_{\cal S}\mbX$, where $\mbP_{\cal S}$ is the projection matrix onto $\cal S$. By construction, the central subspace $\cal S_{\mbY\mid\mbX}$ is the smallest dimension reduction subspace that contains all the information in the conditional distribution of $\mbY$ given $\mbX$. Many methods have been proposed for the recovery of the central subspace or a portion of the central subspace \citep{SIR,SAVE,BC01,ccl02,yc03,CN2005,lw07,zh08}. See \citet{Li2018} for a  comprehensive review. Although the central subspace is well defined for both univariate and multivariate response, most existing SDR methods consider the case with univariate response, while extension to multivariate response is non-trivial. 

The definition of central subspace is not very constructive, as it requires taking intersection of \emph{all} subspace $\calS\subseteq\mbbR^{p}$ such that $\mbY\independent \mbX\mid \mbP_{\cal S}\mbX$. It is indeed a very ambitious goal to estimate the central subspace without specifying a model between $\mbY$ and $\mbX$. To achieve this, we often need additional assumptions such as the linearity and the coverage conditions. The linearity condition requires that, for any basis of the central subspace $\bolbeta$, we must have that $\E(\mbX\mid \bbeta^\T\mbX)$ is linear in $\bbeta^\T\mbX$. The linearity condition is guaranteed if $\mbX$ is elliptically contoured, and allows us to connect the central subspace to the conditional expectation $\E(\mbX\mid\mbY)$. Define $\bolSigma_{\mbX}$ as the covariance of $\mbX$ and the \emph{inverse regression subspace}
\begin{multline}
	\label{IRsubspace}
	\calS_{\E(\mbX\mid \mbY)}\equiv\spn\{\E(\mbX\mid \mbY=\mby)-\E(\mbX):  \mby\in\mbbR^q \mathrm{\ such\ that}\ \E(\mbX\mid \mbY=\mby)\ \mathrm{exists} \}.
\end{multline}
Then following property is well-known and often adopted in developing SDR methods.
\begin{proposition}\label{prop.linear}
Under linearity condition, we have $\calS_{\E(\mbX\mid \mbY)}\subseteq\bolSigma_{\mbX}\calS_{\mbY\mid\mbX}\subseteq\mbbR^p$.
\end{proposition}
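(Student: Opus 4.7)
The plan is to exploit two ingredients: the defining conditional independence of the central subspace, and the closed form that the linearity condition imposes on the inverse regression $\E(\mbX\mid\bbeta^\T\mbX)$. Fix any basis matrix $\bbeta\in\mbbR^{p\times d}$ of $\calS_{\mbY\mid\mbX}$ with $d=\dimension(\calS_{\mbY\mid\mbX})$. The strategy is to show that for every $\mby$ at which $\E(\mbX\mid\mbY=\mby)$ exists, the centered vector $\E(\mbX\mid\mbY=\mby)-\E(\mbX)$ lies in the column space of $\bolSigma_{\mbX}\bbeta$; since $\mathrm{col}(\bolSigma_{\mbX}\bbeta)=\bolSigma_{\mbX}\calS_{\mbY\mid\mbX}$, taking the span over admissible $\mby$ in the definition \eqref{IRsubspace} then yields the inclusion $\calS_{\E(\mbX\mid\mbY)}\subseteq\bolSigma_{\mbX}\calS_{\mbY\mid\mbX}$. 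The second inclusion in the proposition is trivially true.

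First, I apply the tower property together with the conditional independence $\mbY\independent\mbX\mid\bbeta^\T\mbX$, which yields $\E(\mbX\mid\mbY,\bbeta^\T\mbX)=\E(\mbX\mid\bbeta^\T\mbX)$, and hence
\[
\E(\mbX\mid\mbY)=\E\bigl[\E(\mbX\mid\mbY,\bbeta^\T\mbX)\mid\mbY\bigr]=\E\bigl[\E(\mbX\mid\bbeta^\T\mbX)\mid\mbY\bigr].
\]
Second, I invoke the linearity condition: $\E(\mbX\mid\bbeta^\T\mbX)$ is an affine function of $\bbeta^\T\mbX$. Matching the unconditional mean $\E(\mbX)$ and the cross-covariance $\cov(\mbX,\bbeta^\T\mbX)=\bolSigma_{\mbX}\bbeta$ uniquely pins down the affine form
\[
\E(\mbX\mid\bbeta^\T\mbX)=\E(\mbX)+\bolSigma_{\mbX}\bbeta(\bbeta^\T\bolSigma_{\mbX}\bbeta)^{-1}\bbeta^\T(\mbX-\E(\mbX)),
\]
under the standard nondegeneracy that $\bbeta^\T\bolSigma_{\mbX}\bbeta$ is invertible (otherwise a generalized inverse suffices). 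Substituting back gives
\[
\E(\mbX\mid\mbY)-\E(\mbX)=\bolSigma_{\mbX}\bbeta(\bbeta^\T\bolSigma_{\mbX}\bbeta)^{-1}\bbeta^\T\E\bigl(\mbX-\E(\mbX)\mid\mbY\bigr),
\]
and the right-hand side manifestly belongs to $\mathrm{col}(\bolSigma_{\mbX}\bbeta)$.

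There is essentially no serious obstacle here: the argument is folklore in the SDR literature going back to \cite{SIR}. The one place where care is required is the derivation of the closed-form affine expression for $\E(\mbX\mid\bbeta^\T\mbX)$ from the linearity condition alone (rather than assuming joint normality). This is a standard moment-matching exercise and would be the only step I would write out carefully; everything else is bookkeeping with the conditional independence and the definition \eqref{IRsubspace}.
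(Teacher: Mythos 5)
Your proof is correct. Note, however, that the paper itself supplies no proof of Proposition~\ref{prop.linear}: it is stated as a well-known fact from the SDR literature (the supplement proves only Proposition~\ref{prop1} and Lemma~\ref{sub}), so there is no in-paper argument to compare against. What you have written is precisely the classical argument going back to \cite{SIR}: use $\mbY\independent\mbX\mid\bbeta^\T\mbX$ to replace $\E(\mbX\mid\mbY,\bbeta^\T\mbX)$ by $\E(\mbX\mid\bbeta^\T\mbX)$, apply the tower property, and then use the moment-matching consequence of the linearity condition, $\E(\mbX\mid\bbeta^\T\mbX)=\E(\mbX)+\bolSigma_{\mbX}\bbeta(\bbeta^\T\bolSigma_{\mbX}\bbeta)^{-1}\bbeta^\T\{\mbX-\E(\mbX)\}$, to read off that $\E(\mbX\mid\mbY=\mby)-\E(\mbX)$ lies in $\mathrm{col}(\bolSigma_{\mbX}\bbeta)=\bolSigma_{\mbX}\calS_{\mbY\mid\mbX}$. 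All steps are sound; the invertibility of $\bbeta^\T\bolSigma_{\mbX}\bbeta$ is automatic here since $\bbeta$ has full column rank and the paper implicitly assumes $\bolSigma_{\mbX}$ is positive definite (its inverse is used throughout), so the generalized-inverse caveat is not even needed. The only step worth writing out in full, as you note, is the derivation of the affine form from the linearity condition alone, via $\cov(\mbX,\bbeta^\T\mbX)=\bolSigma_{\mbX}\bbeta$; everything else is bookkeeping.
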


The coverage condition further assumes that $\calS_{\E(\mbX\mid \mbY)}=\bolSigma_{\mbX}\calS_{\mbY\mid\mbX}$. It follows that we can estimate the central subspace by modeling the conditional expectation of $\mbX$. Indeed, many SDR methods approximate $\E(\mbX\mid\mbY)$. For example, the most classical SDR method, sliced inverse regression (SIR), slices univariate $Y$ into several categories and estimate the mean of $\mbX$ within each slice. Most later methods also follow this slice-and-estimate procedure. Apparently, the slicing scheme  is important to the estimation. If there are too few slices, we may not be able to fully capture the dependence of $\mbX$ on $Y$; however, if there are too many slices, there is a lack of  enough samples within each slice to allow accurate estimation.

\subsection{MDDM in SDR}

In this section, we lay the foundation for the application of MDDM in SDR. We show that the subspace spanned by MDDM coincides with the inverse regression subspace in \eqref{IRsubspace}. In particular, we have the following Proposition~\ref{prop1}, which was used in \cite{ZLS20}, without a proof, in the context of multivariate linear regression. 

\begin{proposition}\label{prop1}
For multivariate $\mbX\in\mbbR^{p}$ and  $\mbY\in \mbbR^{q}$, assuming
the existence of $\E(\mbX)$, $\E(\mbX\mid \mbY)$ and $\MDDM(\mbX\mid \mbY)$,
we have $
\calS_{\E(\mbX\mid \mbY)}=\spn\{\MDDM(\mbX\mid \mbY)\}$.
\end{proposition}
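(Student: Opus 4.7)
The plan is to establish equality of the two subspaces by showing that their orthogonal complements coincide, leveraging the three key population-level properties of $\MDDM$ stated in Section~\ref{sec:MDDM}: symmetry/positive semi-definiteness (i), the characterization $\MDDM(\mbV\mid\mbU)=0\Leftrightarrow \E(\mbV\mid\mbU)=\E(\mbV)$ a.s.\ (ii), and the linear transformation rule $\MDDM(\mbA^\T\mbV\mid\mbU)=\mbA^\T\MDDM(\mbV\mid\mbU)\mbA$ (iii). Since $\MDDM(\mbX\mid\mbY)$ is symmetric, $\spn\{\MDDM(\mbX\mid\mbY)\}$ equals the column space, whose orthogonal complement is the null space of $\MDDM(\mbX\mid\mbY)$.

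The chain of equivalences I would run is as follows. For any $\mathbf{a}\in\mbbR^p$, by (i) the matrix $\MDDM(\mbX\mid\mbY)$ is positive semi-definite, so $\MDDM(\mbX\mid\mbY)\mathbf{a}=\mathbf{0}$ if and only if $\mathbf{a}^\T\MDDM(\mbX\mid\mbY)\mathbf{a}=0$. Applying (iii) with the $p\times 1$ matrix $\mathbf{a}$ yields $\mathbf{a}^\T\MDDM(\mbX\mid\mbY)\mathbf{a}=\MDDM(\mathbf{a}^\T\mbX\mid\mbY)$, which by (ii) vanishes if and only if $\E(\mathbf{a}^\T\mbX\mid\mbY)=\E(\mathbf{a}^\T\mbX)$ almost surely, equivalently $\mathbf{a}^\T\{\E(\mbX\mid\mbY=\mby)-\E(\mbX)\}=0$ for $P_{\mbY}$-almost every $\mby$. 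The latter condition is exactly the statement that $\mathbf{a}$ is orthogonal to every generator of $\calS_{\E(\mbX\mid\mbY)}$ as defined in~\eqref{IRsubspace}. Hence
\begin{equation*}
\spn\{\MDDM(\mbX\mid\mbY)\}^{\perp}=\mathrm{null}\{\MDDM(\mbX\mid\mbY)\}=\calS_{\E(\mbX\mid\mbY)}^{\perp},
\end{equation*}
and taking orthogonal complements gives the desired identity.

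The only subtle point, and the one I would be most careful with, is the compatibility between the ``almost sure'' statement coming from property (ii) and the ``for every $\mby$'' wording in~\eqref{IRsubspace}. Because conditional expectations are only defined up to $P_{\mbY}$-null sets, the span in~\eqref{IRsubspace} is implicitly taken over values $\mby$ in the support of $\mbY$ (as signaled by the qualifier ``such that $\E(\mbX\mid\mbY=\mby)$ exists''). Once this interpretation is fixed, the equivalence between $\mathbf{a}^\T\{\E(\mbX\mid\mbY)-\E(\mbX)\}=0$ $P_{\mbY}$-a.s.\ and $\mathbf{a}\perp\calS_{\E(\mbX\mid\mbY)}$ is immediate. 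No additional assumptions (such as linearity or coverage) are needed, since the statement is purely about the geometry of $\MDDM$ and inverse mean subspaces; the moment condition already assumed in the proposition guarantees that $\MDDM(\mbX\mid\mbY)$ is well defined and that properties (i)--(iii) apply.
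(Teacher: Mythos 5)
Your proof is correct and follows essentially the same route as the paper's: both arguments reduce the claim to the equivalence $\mathbf{a}^\T\MDDM(\mbX\mid\mbY)\mathbf{a}=\MDDM(\mathbf{a}^\T\mbX\mid\mbY)=0 \Leftrightarrow \mathbf{a}^\T\{\E(\mbX\mid\mbY)-\E(\mbX)\}=0$ via properties (i)--(iii), and then identify orthogonal complements (the paper phrases this as two separate inclusions using an orthogonal basis $(\bolbeta,\bolbeta_0)$, while you package it as a single chain of equivalences on null spaces). Your remark on the almost-sure versus pointwise interpretation of the generators of $\calS_{\E(\mbX\mid\mbY)}$ is a sensible clarification that the paper leaves implicit.
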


Therefore, the rank of $\MDDM(\mbX\mid \mbY)$ is the dimensionality of the inverse regression subspace; and the non-trivial eigenvectors of $\MDDM(\mbX\mid \mbY)$ contain all the information for $\calS_{\E(\mbX\mid \mbY)}$. 
Combining Proposition\ref{prop.linear}\&~\ref{prop1},  we immediately have that (i) under the linearity condition, $\bolSigma_{\mbX}^{-1}\spn\{\MDDM(\mbX\mid \mbY)\}\subseteq\calS_{\mbY\mid\mbX}$; and (ii) under the linearity and coverage conditions, $\bolSigma_{\mbX}^{-1}\spn\{\MDDM(\mbX\mid \mbY)\}=\calS_{\mbY\mid\mbX}$. 

Henceforth, we assume both the linearity and coverage conditions, which are assumed either explicitly or implicitly in inverse regression type dimension reduction methods \citep[e.g.,][]{SIR,CN2005, zhu2010dimension, CZ13}. Then the central subspace is related to the eigen-decomposition of $\MDDM(\mbX\mid \mbY)$. Specifically, we have the following scenarios.

If $\Cov(\mbX)=\sigma^2\mbI_{p}$ for some $\sigma^2>0$, then obviously $\spn\{\MDDM(\mbX\mid \mbY)\}=\calS_{\mbY\mid\mbX}$. This includes single index and multiple index models with uncorrelated predictors. Let $K$ be the rank of $\MDDM(\mbX\mid \mbY)$, then the dimension of the central subspace is $K$; and the first $K$ eigenvectors of $\MDDM(\mbX\mid \mbY)$ span the central subspace. 

If $\Cov(\mbX\mid \mbY)=\sigma^{2}\mbI_{p}$ for some $\sigma^2>0$, then we have $\bolSigma_\mbX=\sigma^2\mbI_p + \Cov\{\E(\mbX\mid \mbY)\}$. Because $\spn[\Cov\{\E(\mbX\mid \mbY)\}]= \calS_{\E(\mbX\mid\mbY)}$, we can show that $\calS_{\mbY\mid\mbX}=\bolSigma_\mbX^{-1}\spn\{\MDDM(\mbX\mid \mbY)\}=\spn\{\MDDM(\mbX\mid \mbY)\}$. To see this, let $\Cov\{\E(\mbX\mid \mbY)\}=\mbU\mbU^\T$ for some $\mbU\in\mbbR^{p\times K}$, then $\spn(\mbU)=\spn[\Cov\{\E(\mbX\mid \mbY)\}]=\spn\{\MDDM(\mbX\mid \mbY)\}$ and we may also write $\MDDM(\mbX\mid \mbY)=\mbU\boldsymbol{\Psi}\mbU^{\T}$ for some symmetric positive definite matrix $\boldsymbol{\Psi}\in\mbbR^{K\times K}$. Then the result follows by applying the Woodbury matrix identity to $\bolSigma_\mbX^{-1}=(\sigma^2\mbI_p + \mbU\mbU^\T)^{-1}=\sigma^{-2}\mbI_p -\sigma^{-2} \mbU(\sigma^2\mbI_K + \mbU^{\T}\mbU)^{-1}\mbU^{\T}$. The non-trivial eigenvectors of $\MDDM(\mbX\mid \mbY)$  again span the central subspace.

For general covariance structure, the $d$-dimensional central subspace $\calS_{\mbY\mid\mbX}=\bolSigma_\mbX^{-1}\spn\{\MDDM(\mbX\mid \mbY)\}$ can be obtained via generalized eigen-decomposition. Specifically, consider the generalized eigenvalue problem
\begin{equation}\label{GEV}
\MDDM(\mbX\mid \mbY)\mbv_i = \varphi_i\bolSigma_{\mbX}\mbv_i,\ \varphi_i\geq0,\ \mbv_i\in\mbbR^p,
\end{equation}
where $\mbv_i^\T\bolSigma_{\mbX}\mbv_j=0$ for $i\neq j$. Then, similar to \citep{li2007sparse,chen2010coordinate}, it is straightforward to show that the generalized eigenvector spans the central subspace, $\calS_{\mbY\mid\mbX}=\spn(\mbv_1,\dots,\mbv_K)$.
 
Existing works in SDR often focus on the eigen-decomposition or the generalized eigen-decomposition of $\Cov\{\E(\mbX\mid Y=y)\}$, where non-parametric estimates of $\E(\mbX\mid Y=y)$ are obtained from
slicing the support of the univariate response $Y$. Comparing to these approaches, the MDDM approach requires no tuning parameter selection (i.e. specifying slicing schemes). Moreover, high-dimensional theoretical study of MDDM is easier and does not require additional assumptions on the conditional mean function $\E(\mbX\mid \mbY)$ such as smoothness in the empirical mean function of $\mbX$ given $Y$ (e.g.~sliced stable condition in \citet{LZL18}).

\subsection{MDDM for model-based SDR}
 
So far, we have discussed model-free SDR. Another important research area in SDR is model-based methods, which provide invaluable intuition for the use of inverse regression estimation under the assumption that the conditional distribution of $\mbX\mid \mbY$ is normal. In this section, we consider the principal fitted component (PFC) model, which was discussed in details in \cite{CF09} and \cite{C07}, and generalize it from univariate response to multivariate response. 
We argue that (generalized) eigen-decomposition of MDDM is potentially advantageous to the likelihood-based approaches under PFC model. This is somewhat surprising but reasonable, considering that the advantages of MDDM over least squares and likelihood-based estimation was demonstrated in \cite{ZLS20} under the multivariate linear model.

Let $\mbX_{\mby}\sim\mbX\mid(\mbY=\mby)$ denote the conditional variable,
then the PFC model is
\begin{equation}
\mbX_{\mby}=\bolmu+\bolGamma\bolnu_{\mby}+\bolvarepsilon,\quad \bolvarepsilon\sim N(0,\bolDelta),
\end{equation}
where $\bolGamma\in\mbbR^{p\times K}$, $K<p$, is a non-stochastic
orthogonal matrix, $\bolnu_{\mby}\in\mbbR^{K}$ is the latent variable
that depends on $\mby$. 
%
Then the latent variable $\bolnu_{\mby}$ is fitted as $\bolnu_{\mby}=\bolalpha\mbf_{\mby}$ with some user-specified functions $\mbf_{\mby}=(f_{1}(\mby),\ldots,f_{m}(\mby))^{\T}\in\mbbR^{m}$,
$m\geq K$, that maps $q$-dimensional response to $m$-dimensional. In the univariate PFC model, $q=1$, so the $m$ functions can be viewed as an expansion of the response (similar to slicing). For our multivariate extensions of the PFC model, there is no requirement of $m\geq q$. The PFC model can be written as
\begin{equation}
\mbX_{y}=\bolmu+\bolGamma\bolalpha\mbf_{\mby}+\bolvarepsilon,
\end{equation}
where $\bolGamma$ and $\bolalpha$ are estimated similarly to the multivariate
reduced-rank regression with $\mbX\in\mbbR^{p}$ being the response
and $\mbf_{\mby}\in\mbbR^{m}$ being the predictor. Finally, the central
subspace under this PFC model is $\bolDelta^{-1}\spn(\bolGamma)$,
which simplifies to $\spn(\bolGamma)$ if we further assume isotropic
error (i.e. isotropic PFC model) $\bolDelta=\Cov(\mbX\mid\mbY)=\sigma^{2}\mbI_{p}$. 

For the PFC model, our MDDM approach is the same as the model-free MDDM counterpart, and has two main advantages over the likelihood-based PFC estimation: (i)
there is no need to specify the functions $\mbf_{\mby}$, and thus no
risk of mis-specification; (ii) extensions to high-dimensional setting is much more straightforward.
Moreover, under the isotropic PFC model, the central subspace $\calS_{\mbY\mid\mbX}=\spn(\bolGamma)$ is exactly the first $K$ eigenvectors of $\MDDM(\mbX\mid \mbY)$.

%
%


%

\section{Estimation}\label{sec:Estimation}

\subsection{Penalized decomposition of MDDM}\label{subsec:PMDDM}

Based on the results in the last section, penalized eigen-decomposition of MDDM can be used for estimating the central subspace in high dimension when the covariance $\bolSigma_\mbX$ or the conditional covariance $\Cov(\mbX\mid \mbY)$ is proportional to the identity matrix $\mbI_p$. We consider the construction of such an estimate. It is worth mentioning that the penalized decomposition of MDDM we develop here is immediately applicable to the dimension reduction of multivariate stationary time series in \citep{LS18}, which is beyond the scope of this article.
Moreover, it is well-known that $\bolSigma_{\mbX}^{-1}$ is not easy to estimate in high dimensions. Then, even when for general covariance structure, the eigen-decomposition of MDDM provides estimate of the inverse regression subspace (though may differ from the central subspace) that is useful for exploratory data analysis (e.g.~detecting and visualizing non-linear mean function).


As such, we consider the estimation of the eigenvectors of  $\MDDM(\mbX\mid \mbY)$. We assume that $\MDDM(\mbX\mid \mbY)$ has $K$ nontrivial eigenvectors, denoted by $\bolbeta_1,\ldots,\bolbeta_K$. We use the shorthand notation $\mbM=\MDDM(\mbX\mid \mbY)$. Also, we note that, given the first $k-1$ eigenvectors, $\bolbeta_{k}$ is the top eigenvector of $\mbM_{k}$, where $\mbM_{k}=\mbM-\sum_{l<k}(\bolbeta_l^\T\mbM\bolbeta_l)\bolbeta_l\bolbeta_l^\T$.

It is well-known that the eigenvectors cannot be accurately estimated in high dimensions without additional assumptions. We adopt the popular sparsity assumption that many entries in $\bolbeta_k$ are zero. To estimate these sparse eigenvectors, denote $\widehat{\mbM}_1=\MDDM_n(\mbX\mid \mbY)$, where the sample $\MDDM_n$ is defined in \eqref{MDDMn}. We find $\widehat\bolbeta_k,k=1,\ldots,K$ as follows:
\begin{equation}\label{Penalized Eigen Decompose}
\widehat\bolbeta_k=\arg\max_{\bolbeta}\bolbeta^\T\widehat{\mbM}_k\bolbeta \mbox{ s.t. $\bolbeta^\T\bolbeta=1,\Vert\bolbeta\Vert_0\le s,$}
\end{equation}
where $\widehat{\mbM}_1=\MDDM_n(\mbX\mid \mbY)$, $\widehat{\mbM}_k=\widehat{\mbM}_1-\sum_{l<k}\delta_l\widehat\bolbeta_l\widehat\bolbeta_l^\T$ for $k>1$ with $\delta_l=\widehat\bolbeta_l^\T\widehat{\mbM}_1\widehat\bolbeta_l$, and $s$ is a tuning parameter.

We solve the above problem by combining the truncated power method with hard thresholding. For a vector $\mbv\in \mathbb{R}^p$ and a positive integer $s$, denote $v_s^*$ as the $s$-th largest value of $\vert v_j\vert ,j=1,\ldots, p$. The hard-thresholding operator is $\HT(\mbv,s)=(v_1 I(\vert v_1\vert\geq v_s^*),\ldots,v_p I(\vert v_p\vert\geq v_s^*))^\T$, which sets the $p-s$ elements in $\mbv$ to zero. We solve \eqref{Penalized Eigen Decompose} by Algorithm~\ref{alg1}, where the initialization $\widehat\bolbeta_1^{(0)}$ may be randomly generated. Note that \citet{yuan2013truncated} proposed Algorithm~\ref{alg1} to perform principal component analysis through penalized eigen-decomposition on the sample covariance. 

{In our algorithm, we require a pre-specified sparsity level $s$ and subspace dimension $K$. In theory, we show that our estimators for $\bolbeta_k$, $k=1,\dots,K$, are all consistent for their population counterparts when the sparsity $s$ is sufficiently large (i.e.~larger than the population sparsity level) and the number of directions $K$ is no bigger than the true dimension of the central subspace. Therefore, our method is flexible in the sense that the pre-specified $s$ and $K$ do not have to be exactly correct. In practice, especially in exploratory data analysis, the number of sequentially extracted directions are often set to be small (i.e.~$K=1,2 ~\mbox{or}~3$), while the determination of true central subspace dimension is a separate and important research topic in SDR \citep[e.g.][]{bura2011dimension,luo2016combining} and is beyond the scope of this paper. Moreover, the pre-specified sparsity level $s$ combined with $\ell_0$ regularization is potentially convenient for post-dimension reduction inference \citep{kim2020post}, as seen in the post-selection inference of canonical correlation analysis that is done over subsets of variables of pre-specified cardinality \citep{mckeague2020significance}.  }

As pointed out by a referee, other sparse principal component analysis (PCA) methods can potentially be applied to decompose MDDM. We choose to extend the algorithm in \citet{yuan2013truncated} to facilitate computation and theoretical development. For computationally efficient sparse PCA methods such as \citet{zou2006sparse,witten2009penalized}, their theoretical properties are unfortunately unknown. Hence, we expect the theoretical study of their MDDM-variants to be very challenging. On the other hand, for the theoretically justified sparse PCA methods such as \citet{vu2013minimax,cai2013sparse}, the computation is less efficient.

\renewcommand{\baselinestretch}{1}

\begin{algorithm}[t!]
\begin{enumerate}
\item Input: $s, K,\widehat{\mbM}_1=\widehat\mbM=\MDDM_n(\mbX\mid \mbY)$. 

\item Initialize $\widehat\bolbeta_1^{(0)}$. 

\item For $k=1,\ldots, K$, do

\begin{enumerate}
\item Iterate over $t$ until convergence:
\begin{enumerate}
         \item Set $\widehat\bolbeta_k^{(t)}=\widehat{\mbM}_k\widehat\bolbeta_k^{(t-1)}$. 
         \item If $\Vert\widehat\bolbeta_k^{(t)}\Vert_0\le s$, set 
         $$\widehat\bolbeta_k^{(t)}=\dfrac{\widehat\bolbeta_k^{(t)}}{\Vert\widehat\bolbeta_k^{(t)}\Vert_2};$$ 
         else 
         $$\widehat\bolbeta_k^{(t)}=\dfrac{\HT(\widehat\bolbeta_k^{(t)},s)}{\Vert \HT(\widehat\bolbeta_k^{(t)},s)\Vert_2}$$

\end{enumerate}
\item Set $\widehat\bolbeta_k=\widehat\bolbeta_k^{(t)}$ at convergence and $\widehat{\mbM}_{k+1}=\widehat{\mbM}_k-\widehat\bolbeta_k^\T\widehat{\mbM}\widehat\bolbeta_k\cdot \widehat\bolbeta_k\widehat\bolbeta_k^\T$.
\end{enumerate}
\item Output $\widehat\calS_{\mbY\mid\mbX}=\spn(\widehat\bolbeta_1,\ldots,\widehat\bolbeta_K)$.
\end{enumerate}
\caption{Penalized eigen-decomposition of MDDM.}\label{alg1}
\end{algorithm}

\renewcommand{\baselinestretch}{2}

\subsection{Generalized eigenvalue problems with MDDM}\label{subsec:GEPMDDM}

Now we consider the general (arbitrary) covariance structure $\bolSigma_{\mbX}$. We continue to use $\bolbeta_1,\ldots,\bolbeta_K$ to denote the nontrivial eigenvectors of $\bolSigma_{\mbX}^{-1}\spn(\MDDM(\mbX\mid \mbY))$ so that the central subspace is spanned by the $\bolbeta$'s. Again, we assume that these eigenvectors are sparse. In principle, we could assume that $\bolSigma_{\mbX}^{-1}$ is also sparse and construct its estimate accordingly. However, $\bolSigma_{\mbX}^{-1}$ is a nuisance parameter for our ultimate goal, and additional assumptions on it may unnecessarily limit the applicability of our method. Hence, we take a different approach as follows.

To avoid estimating $\bolSigma_{\mbX}^{-1}$, we note that $\bolbeta_1,\ldots,\bolbeta_K$ can also be viewed as the generalized eigenvectors defined as follows, which is equivalent to \eqref{GEV},
\begin{equation}
    \label{GEP1}
	\bolbeta_k=\arg\max_{\bolbeta}\bolbeta^\T\mbM\bolbeta, \mbox { s.t. $\bolbeta^\T\bolSigma_{\mbX}\bolbeta=1, \bolbeta_l^\T\bolSigma_{\mbX}\bolbeta=0$ for any $l<k$.}
\end{equation}
	
Directly solving the generalized eigen-decomposition problem in \eqref{GEP1} is not easy if we want to further impose penalties, because it is difficult to satisfy the orthogonality constraints. Therefore, we further consider another form for \eqref{GEP1} that does not involve the orthogonal constraints. This alternative form is based on the following lemma.

\begin{lemma}\label{sub}
Let $\lambda_j=\bolbeta_j^\T\mbM\bolbeta_j$ and $\mbM_k=\mbM-\bolSigma_{\mbX}(\sum_{j<k}\lambda_j\bolbeta_j\bolbeta_j^\T)\bolSigma_{\mbX}$. We have 
\beq
\bolbeta_{k}=\arg\max_{\bolbeta}\bolbeta^\T\mbM_k\bolbeta,\quad\mbox{ s.t. $\bolbeta^\T\bolSigma_{\mbX}\bolbeta=1$.}
\eeq
\end{lemma}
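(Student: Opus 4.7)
The plan is to reduce the generalized eigenvalue problem to the standard one by a symmetric whitening change of variables, and then recognize the statement as the classical Hotelling deflation for symmetric eigenvalue problems.

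First I would introduce $\widetilde{\mbM} = \bolSigma_\mbX^{-1/2} \mbM \bolSigma_\mbX^{-1/2}$ and $\widetilde{\bolbeta}_j = \bolSigma_\mbX^{1/2} \bolbeta_j$. Since $\bolbeta_j$ are generalized eigenvectors of $(\mbM, \bolSigma_\mbX)$ with generalized eigenvalues $\lambda_j$ and $\bolSigma_\mbX$-orthonormality $\bolbeta_i^\T \bolSigma_\mbX \bolbeta_j = \delta_{ij}$, it follows that the $\widetilde{\bolbeta}_j$ are ordinary orthonormal eigenvectors of the symmetric matrix $\widetilde{\mbM}$ with eigenvalues $\lambda_j$. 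Under the substitution $\bolbeta = \bolSigma_\mbX^{-1/2} \mbu$, the constraint $\bolbeta^\T \bolSigma_\mbX \bolbeta = 1$ becomes $\mbu^\T \mbu = 1$, and the objective $\bolbeta^\T \mbM_k \bolbeta$ becomes $\mbu^\T \bolSigma_\mbX^{-1/2} \mbM_k \bolSigma_\mbX^{-1/2} \mbu$.

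Second I would compute
\begin{equation*}
\bolSigma_\mbX^{-1/2} \mbM_k \bolSigma_\mbX^{-1/2} = \widetilde{\mbM} - \sum_{j<k} \lambda_j \bigl(\bolSigma_\mbX^{1/2}\bolbeta_j\bigr)\bigl(\bolSigma_\mbX^{1/2}\bolbeta_j\bigr)^\T = \widetilde{\mbM} - \sum_{j<k} \lambda_j \widetilde{\bolbeta}_j \widetilde{\bolbeta}_j^\T,
\end{equation*}
using that the $\bolSigma_\mbX$ factors surrounding $\bolbeta_j\bolbeta_j^\T$ cancel appropriately with the $\bolSigma_\mbX^{-1/2}$ on each side. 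Writing $\widetilde{\mbM}$ in its spectral decomposition $\sum_j \lambda_j \widetilde{\bolbeta}_j \widetilde{\bolbeta}_j^\T$ (including any zero eigenvalues), the deflated matrix collapses to $\sum_{j\geq k} \lambda_j \widetilde{\bolbeta}_j \widetilde{\bolbeta}_j^\T$, whose largest eigenvalue is $\lambda_k$ and corresponding unit eigenvector is $\widetilde{\bolbeta}_k$.

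Third I would invoke the Courant--Fischer / Rayleigh quotient characterization: the unconstrained maximizer of $\mbu^\T \bolSigma_\mbX^{-1/2} \mbM_k \bolSigma_\mbX^{-1/2} \mbu$ on the unit sphere is $\mbu^\star = \widetilde{\bolbeta}_k$ with maximum value $\lambda_k$, so transforming back via $\bolbeta = \bolSigma_\mbX^{-1/2} \mbu$ yields $\bolbeta^\star = \bolbeta_k$, which is precisely the claim of the lemma. I would close with a brief remark that if $\lambda_k$ has multiplicity greater than one, then the maximizer is only unique up to rotation within the corresponding eigenspace, but any such maximizer lies in the span of the generalized eigenvectors associated with $\lambda_k$, so the identity holds in the usual subspace sense employed throughout the paper. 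There is no real obstacle here; the main care needed is just the algebraic verification that the $\bolSigma_\mbX$ factors in the definition of $\mbM_k$ are exactly those required to make the deflation of $\widetilde{\mbM}$ standard.
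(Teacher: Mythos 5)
Your proposal is correct and is essentially the paper's own argument: the paper likewise whitens via $\bolSigma_{\mbX}^{1/2}$ (its Lemma S1 is exactly your change of variables $\widetilde{\bolbeta}_j=\bolSigma_{\mbX}^{1/2}\bolbeta_j$), writes $\mbM=\bolSigma_{\mbX}\bigl(\sum_{j}\lambda_j\bolbeta_j\bolbeta_j^\T\bigr)\bolSigma_{\mbX}$, and observes that the deflation leaves $\bolSigma_{\mbX}\bigl(\sum_{j\ge k}\lambda_j\bolbeta_j\bolbeta_j^\T\bigr)\bolSigma_{\mbX}$, whose leading generalized eigenvector is $\bolbeta_k$. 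Your added remark about eigenvalue multiplicity is a minor refinement the paper omits (it implicitly relies on the gap condition stated later as Condition (C2)).
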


Motivated by Lemma~\ref{sub}, we consider the penalized problem that 
$\bolbeta_k=\arg\max_{\bolbeta}\bolbeta^\T\widehat{\mbM}_k\bolbeta$ such that  $\bolbeta^\T\bolSigma_{\mbX}\bolbeta=1,\Vert\bolbeta\Vert_0\le s$, where $\widehat{\mbM}_1=\MDDM_n(\mbX\mid\mbY)$ and $\widehat{\mbM}_k=\widehat{\mbM}_1-\widehat{\bolSigma}_\mbX\left(\sum_{l<k}\delta_l\widehat{\bolbeta}_l\widehat{\bolbeta}_l^\T\right)\widehat{\bolSigma}_\mbX$ for $k>1$ with $\delta_l=\widehat{\bolbeta}_l^\T\widehat{\mbM}\widehat{\bolbeta}_l$, and $s$ is a tuning parameter. We adopt the RIFLE algorithm in \citet{TWLZ18} to solve this problem. See the details in Algorithm~\ref{alg2}.  
In our simulation studies, we considered randomly generated initial value $\widehat\bolbeta_1^{(0)}$ and fixed step size $\eta=1$, and observed reasonably good performance.

Although Algorithm~\ref{alg2} is a generalization of the RIFLE Algorithm in \citet{TWLZ18}, there are important differences between these two. On one hand, the RIFLE Algorithm only extracts the first generalized eigenvector, whereas Algorithm~\ref{alg2} is capable of estimating multiple generalized eigenvectors by properly deflating the MDDM. In sufficient dimension reduction problems, the central subspace often has a structural dimension greater than 1, and it is necessary to find more than one  generalized eigenvector. Hence, Algorithm~\ref{alg2} is potentially more useful than the RIFLE algorithm in practice. On the other hand, the usefulness of RIFLE Algorithm has been demonstrated in several statistical applications, including sparse sliced inverse regression. Here Algorithm~\ref{alg2} decomposes the MDDM, which is the first time the penalized generalized eigenvector problem is used to perform sufficient dimension reduction in a slicing-free manner in high dimensions. A brief analysis of the computation complexity is included in Section~S3 in the Supplementary Materials.

\renewcommand{\baselinestretch}{1}

\begin{algorithm}[t!]
\begin{enumerate}
\item Input: $s, K, \widehat{\mbM}_1=\widehat\mbM$ and step size $\eta>0$. 
\item Initialize $\widehat\bolbeta_1^{(0)}$. 

\item For $k=1,\ldots, K$, do

\begin{enumerate}

\item Iterate over $t$ until convergence:
\begin{enumerate}
         \item Set $\rho^{(t-1)}=\dfrac{(\widehat\bolbeta_k^{(t-1)})^\T\widehat{\mbM}_k\widehat{\bolbeta}_k^{(t-1)}}{(\widehat\bolbeta_k^{(t-1)})^\T\widehat{\bolSigma}_\mbX\widehat{\bolbeta}_k^{(t-1)}}$.
         \item $\mbC=\mbI+(\eta/\rho^{(t-1)})\cdot(\widehat{\mbM}_k-\rho^{(t-1)}\widehat{\bolSigma}_\mbX)$
         \item $\widetilde{\bolbeta}_k^{(t)}=\mbC\widehat{\bolbeta}_k^{(t-1)}/\Vert\mbC\widehat{\bolbeta}_k^{(t-1)}\Vert_2$.
         \item $\widehat{\bolbeta}_k^{(t)}=\dfrac{\HT(\widetilde{\bolbeta}_k, s)}{\Vert\HT(\widetilde{\bolbeta}_k, s)\Vert_2}$

\end{enumerate}
\item Set $\widetilde\bolbeta_k=\widehat\bolbeta_k^{(t)}$ at convergence and scale it to obtain $\widehat\bbeta_k=\dfrac{\widetilde{\bolbeta}_k}{\sqrt{\widetilde\bolbeta_k^\T\widehat{\bSigma}_\mbX\widetilde\bolbeta_k}}$.

\item Set $\widehat{\mbM}_{k+1}=\widehat{\mbM}_k-\widehat{\bolSigma}_\mbX\widehat\bolbeta_k^\T\widehat{\mbM}\widehat\bolbeta_k\cdot \widehat\bolbeta_k\widehat\bolbeta_k^\T\widehat{\bolSigma}_\mbX$.
\end{enumerate}
\item Output $\widehat\calS_{\mbY\mid\mbX}=\spn(\widehat\bolbeta_1,\ldots,\widehat\bolbeta_K)$.
\end{enumerate}
\caption{Generalized eigen-decomposition of MDDM.}\label{alg2}
\end{algorithm}
\renewcommand{\baselinestretch}{2}

\section{Theoretical properties}\label{sec:theory}

In this section, we consider theoretical properties of the generalized eigenvectors of $(\MDDM(\mbX\mid\mbY),\bSigma_{\bX})$. Recall that, if we know that $\bSigma_{\bX}=\bI$, the generalized eigenvectors reduce to eigenvectors, and can be estimated by Algorithm~\ref{alg1}. If we do not have any information about $\bSigma_{\bX}$, we can find the generalized eigenvectors with Algorithm~\ref{alg2}. Either way, we let $\bbeta_k, k=1,\ldots, K,$ be the first $K$ (generalized) eigenvectors of $\MDDM(\mbX\mid\mbY)$. Throughout the proof, we let $C$ denote a generic constant that can vary from line to line. We show the consistency of $\widehat\bbeta_k$ by 
proving that $\eta_k=|\sin\Theta(\widehat\bbeta_k,\bbeta_k)|\le Cs\epsilon$.  We assume that $K$ is fixed, and $s\epsilon\le 1$. {Recall that we define 
$\lambda_j=\bolbeta_j^\T\mbM\bolbeta_j$ as the (generalized) eigenvalue. Further define $d=\max_{k=1}^K\{\Vert\bbeta_k\Vert_{0}\}$. When we study Algorithm~\ref{alg1} or Algorithm~\ref{alg2}, we assume that $s=d+2s'$, where $s'=Cd$ for a sufficiently large $C$. To apply the concentration inequalities for $\MDDM$, we restate below Condition (C1) in terms of $\mbX$ and $\mbY$ as Condition (C1'), along with other suitable conditions:

\renewcommand{\baselinestretch}{1}

\begin{enumerate}
	\item[(C1')] There exist two positive constants $\sigma_0$ and $C_0$ such that
	$\E\{\exp(2\sigma_0\Vert \mbY\Vert_q^2)\}\le C_0$ and
	$\sup_{p}\max_{1\le j\le p}\E\{\exp(2\sigma_0X_j^2)\}\le C_0$.
\item[(C2)] There exist $\Delta>0$ such that $\min_{k=1,\ldots,K}(\lambda_k-\lambda_{k+1})\ge\Delta$.
\item[(C3)] There exists constants $U,L$ that do not depend on $n,p$ such that $L\le \lambda_{K}\le \lambda_1\le U$.
\item[(C4)] As $n\rightarrow \infty$, $dn^{-1/2} {(\log p)^{1/2}(\log{n})^{3/2}}\rightarrow 0$.
\end{enumerate}

\renewcommand{\baselinestretch}{2}

Condition (C2) guarantees that the eigenvectors are well-defined. Condition (C3) imposes bounds on the eigenvalues of $\bM$. Researchers often impose similar assumptions on the covariance matrix to achieve consistent estimation. Condition (C4) restricts the growth rate of $p,d$ with respect to $n$. {Note that $d$ is the population sparsity level of $\bolbeta_k$'s, while $s$ is the user-specified sparsity level in Algorithms~\ref{alg1} and \ref{alg2}.} If we fix $d$, the dimension is allowed to grow at the rate $\log{p}=o(n\log^{-3}{n})$. When we allow $d$ to diverge, we require it to diverge more slowly than $\left\{{n}/{(\log{p}\log^3{n})}\right\}^{\frac{1}{2}}$. 

{We present the non-asymptotic results for Algorithm~\ref{alg1} in the following theorem, where the constants $D_1,D_2,\sigma_0,\gamma,C_0$ are defined previously in Theorem 1 under Condition (C1).}

\begin{theorem}\label{consistency}
Assume that Conditions (C1'), (C2) \& (C3) hold and $\bSigma_{\bX}=\bI$. Further assume that, there exists $\theta\in(0,1/2)$, such that for $k=1,\ldots, K$, we have $
(\widehat{\bolbeta}_k^0)^\T\bolbeta_k\ge 2\theta,$ and 
\begin{multline}
	\mu=\sqrt{[1+2\{(\frac{d}{s'})^{1/2}+\frac{d}{s'}\}]\{1-0.5\theta(1+\theta)(1-(\gamma^*)^2)\}}<1,
\end{multline}
where $\gamma^*=\dfrac{\lambda_K-\frac{3}{4}\Delta}{\lambda_K-\frac{1}{4}\Delta}$. Then there exists a positive integer $n_0=n_0(\sigma_0,C_0,q)<\infty$, $\gamma=\gamma(\sigma_0,C_0,q)\in (0,1/2)$ and a finite positive $D_0=D_0(\sigma_0,C_0,q)$ such that when $n> n_0$, we have $D_0n^{-\gamma}<\dfrac{\Delta}{4s}$ and for any
$D_0 n^{-\gamma}<\epsilon<\min\{\dfrac{\Delta}{4s},\theta\}$, with a probability greater than $1- 54p^2\exp\left\{-\dfrac{\epsilon^2 n}{36\log^3{n}}\right\}
$, 
\beq
|\sin\bTheta(\widehat\bbeta_k,\bbeta_k)|\le Cs\epsilon,\quad k=1,\dots,K.
\eeq
\end{theorem}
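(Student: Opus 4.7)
The plan is to reduce the consistency statement to (i) a high-probability concentration event for $\widehat{\mbM}_1 = \MDDM_n(\mbX\mid\mbY)$ around $\mbM$ supplied by Theorem~\ref{th:main}, and (ii) a deterministic truncated power iteration analysis carried out on that event. First I would invoke Theorem~\ref{th:main} with the given $\epsilon$: the hypotheses $D_0 n^{-\gamma}<\epsilon<\Delta/(4s)<16$ are satisfied, so on an event $\mathcal{A}$ with $\Prob(\mathcal{A})\ge 1-54p^2\exp\{-\epsilon^2 n/(36\log^3 n)\}$ one has $\|\widehat{\mbM}_1-\mbM\|_{\max}\le 12\epsilon$. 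All subsequent bounds are stated on $\mathcal{A}$.

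The next step is to upgrade the entrywise bound to a sparse-restricted operator-norm bound. Because every iterate $\widehat\bbeta_k^{(t)}$ produced by Algorithm~\ref{alg1} is $s$-sparse and each population target $\bbeta_k$ is $d$-sparse with $d\le s$, for any unit vectors $\mbu,\mbv$ whose supports have size at most $s+d$ the elementary estimate $|\mbu^{\T}(\widehat{\mbM}_1-\mbM)\mbv|\le \|\mbu\|_1\|\mbv\|_1\|\widehat{\mbM}_1-\mbM\|_{\max}\le (s+d)\cdot 12\epsilon$ holds. This replaces the usual full spectral-norm control that drives standard PCA perturbation theory and is the only probabilistic input for the rest of the argument.

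With this perturbation bound in hand, for $k=1$ I would run the truncated power iteration analysis of \citet{yuan2013truncated}, specialized to MDDM. Tracking $\eta^{(t)}=|\sin\bTheta(\widehat\bbeta_1^{(t)},\bbeta_1)|$, I would establish a one-step recursion of the form
\begin{equation*}
\eta^{(t+1)}\le \mu\,\eta^{(t)}+C s\epsilon,
\end{equation*}
in which $\mu$ decomposes as (a) the noiseless contraction factor $\gamma^*=(\lambda_K-\tfrac{3}{4}\Delta)/(\lambda_K-\tfrac{1}{4}\Delta)$ induced by the eigengap after accounting for the $\Delta/4$ budget consumed by the sparse-restricted perturbation, and (b) the multiplicative inflation $\sqrt{1+2(\sqrt{d/s'}+d/s')}$ coming from projecting onto the top-$s$ coordinates via $\HT(\cdot,s)$; together these reproduce the constant $\mu<1$ assumed in the theorem. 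The initialization hypothesis $(\widehat\bbeta_1^{(0)})^{\T}\bbeta_1\ge 2\theta$ keeps the iterates in the basin where this contraction is valid, and iterating drives the error to its stationary value $Cs\epsilon$. For $k\ge 2$ I would proceed by induction on the deflation, bounding
\begin{equation*}
\|\widehat{\mbM}_k-\mbM_k\|_{\mathrm{op},\,s+d}\le \|\widehat{\mbM}_1-\mbM\|_{\mathrm{op},\,s+d}+\sum_{l<k}\bigl\|\delta_l\widehat\bbeta_l\widehat\bbeta_l^{\T}-\lambda_l\bbeta_l\bbeta_l^{\T}\bigr\|_{\mathrm{op},\,s+d},
\end{equation*}
where the summands are controlled by the previously proved sin-angle bounds together with Condition~(C3), which forces $|\delta_l-\lambda_l|=O(s\epsilon)$. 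Condition~(C2) ensures that $\bbeta_k$ is still the strict top eigenvector of $\mbM_k$ with gap at least $\Delta$, so the same power-iteration argument applied to $\widehat{\mbM}_k$ yields $|\sin\bTheta(\widehat\bbeta_k,\bbeta_k)|\le Cs\epsilon$.

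The main obstacle I anticipate is the one-step contraction itself: I must show that, although the noise $\widehat{\mbM}_k-\mbM_k$ is not small in the full spectral norm, its restriction to $s$-sparse directions combined with hard thresholding preserves the eigengap-driven contraction of the power step with a constant strictly below one. This requires decomposing each iterate into components along $\bbeta_k$ and along the complementary $\mbM_k$-invariant subspace, tracking how hard thresholding increases the tangent of the angle when the support of $\bbeta_k$ is only $d$ but the thresholded support has size $s$, and then propagating the resulting $\mu<1$ through the deflation induction uniformly in $k$; this is the most delicate bookkeeping in the argument.
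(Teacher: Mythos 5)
Your proposal is correct and follows essentially the same route as the paper: concentration of $\MDDM_n$ from Theorem~\ref{th:main}, the upgrade to a sparse-restricted quadratic-form bound $\rho(\widehat{\mbM}-\mbM,s)\le Cs\epsilon$, the truncated-power-method contraction (which the paper imports as Proposition~\ref{prop.ped} from \citet{yuan2013truncated} rather than re-deriving, as you propose to do), and an induction over the deflated matrices $\widehat{\mbM}_k$ controlled via $|\delta_l-\lambda_l|=O(s\epsilon)$ and the sine-angle bounds for $l<k$. The only substantive difference is expository: the one-step recursion you flag as the delicate part is exactly the content of the cited result, so no new analysis is actually required there.
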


{Let $n^{-1/2} {(\log p)^{1/2}\log{^{3/2}n}} \ll\epsilon\ll d^{-1}$, then Theorem~\ref{consistency} directly implies the following asymptotic result that justifies the consistency of our estimator.}
\begin{corollary}\label{cor.consistency}
Assume that Conditions (C1'), (C2)--(C4) hold. Suppose there exists $\gamma>0$ such that $d \le s\ll \min\{n^{\gamma},\dfrac{n
^{\frac{1}{2}}}{(\log p)^{\frac{1}{2}}(\log{n})^{\frac{3}{2}}}\}$. Under the conditions in Theorem~\ref{consistency}, the quantities $|\sin\bTheta(\widehat\bbeta_k,\bbeta_k)|\rightarrow 0$ with a probability tending to 1, for $k=1,\ldots,K$. 
\end{corollary}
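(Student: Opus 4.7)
The plan is to deduce Corollary~\ref{cor.consistency} directly from the non-asymptotic bound of Theorem~\ref{consistency} by choosing a sequence $\epsilon = \epsilon_n$ that balances the two lower thresholds imposed by the theorem against the upper threshold $\min\{\Delta/(4s),\theta\}$. Specifically, I want $\epsilon_n$ to vanish slowly enough that the failure probability $54 p^2 \exp\{-\epsilon_n^2 n /(36 \log^3 n)\}$ tends to zero and also exceeds the concentration threshold $D_0 n^{-\gamma}$, yet rapidly enough that the error bound $C s \epsilon_n$ tends to zero.

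First I would translate each requirement into a rate condition on $\epsilon_n$. The failure-probability bound going to zero requires $\epsilon_n^2 n / \log^3 n \gg \log p$, equivalently $\epsilon_n \gg n^{-1/2}(\log p)^{1/2}(\log n)^{3/2}$. The theorem's admissibility requires $\epsilon_n > D_0 n^{-\gamma}$ and $\epsilon_n < \min\{\Delta/(4s),\theta\}$, and the desired conclusion $s\epsilon_n \to 0$ demands $\epsilon_n \ll 1/s$. Since $\Delta$ and $\theta$ are fixed positive constants by Conditions (C2)--(C3), while $s$ may grow, the binding upper constraint is simply $\epsilon_n \ll 1/s$.

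The hypothesis $d \le s \ll \min\{n^{\gamma}, n^{1/2}/[(\log p)^{1/2}(\log n)^{3/2}]\}$ is precisely what guarantees compatibility: the first factor yields $D_0 n^{-\gamma} \ll 1/s$, and the second yields $n^{-1/2}(\log p)^{1/2}(\log n)^{3/2} \ll 1/s$. Hence the window
\[\max\bigl\{D_0 n^{-\gamma},\, n^{-1/2}(\log p)^{1/2}(\log n)^{3/2}\bigr\} \;\ll\; \epsilon_n \;\ll\; \frac{1}{s}\]
is non-empty for all sufficiently large $n$, and any $\epsilon_n$ in this range (for instance the geometric mean of its two endpoints) will serve. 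Condition (C4) merely ensures $s \to \infty$ slowly enough that this window does not collapse.

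Finally I would apply Theorem~\ref{consistency} with this $\epsilon_n$ to obtain $|\sin\bTheta(\widehat\bbeta_k,\bbeta_k)| \le C s \epsilon_n \to 0$ on an event whose complementary probability is $O\bigl(p^2 \exp\{-\epsilon_n^2 n/(36 \log^3 n)\}\bigr) \to 0$. A union bound over the $K$ (fixed) directions preserves both the vanishing error and the high-probability statement, giving the corollary. There is no substantive obstacle; the proof is essentially a calibration exercise, and the only care needed is to verify that the two rate conditions on $s$ in the corollary's hypothesis are exactly those that open the admissible window for $\epsilon_n$.
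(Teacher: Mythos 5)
Your proposal is correct and follows essentially the same route as the paper, which proves the corollary by choosing $\epsilon$ in the window $n^{-1/2}(\log p)^{1/2}(\log n)^{3/2} \ll \epsilon \ll d^{-1}$ and invoking Theorem~\ref{consistency}; your version merely makes explicit the additional (non-binding, given $s\ll n^{\gamma}$) lower threshold $D_0 n^{-\gamma}$ and the equivalence of $1/s$ and $1/d$ as upper endpoints since $s\asymp d$. No gaps.
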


Corollary~\ref{cor.consistency} reveals that, without specifying a model, Algorithm~\ref{alg1} can achieve consistency when $p$ grows at an exponential rate of $n$. To be exact, we can allow $\log{p}=o\{n/(d^2\log^{3}{n})\}$. Here the theoretical results are established for the output of Algorithm~\ref{alg1}, instead of the solution of the optimization problem \eqref{Penalized Eigen Decompose}. Note that it is possible to have some gap between the theoretical optimal solution of \eqref{Penalized Eigen Decompose} and the estimate we use in practice, because the optimization problem is nonconvex, and numerically we might not achieve the global maximum.  Thus it might be more meaningful to study the property of the estimate obtained as the output of Algorithm~\ref{alg1}. The above theorem  guarantees that the estimate we use in practice has the desired theoretical properties. 

{In the meantime, although our rate in Theorem~\ref{consistency} is not as high as in the case of sparse sliced inverse regression, as established very recently by \citet{LZL18} when $\Sigma_X=I$ and for general $\Sigma_X$ in \citet{LZL19}, and \citet{tan2020sparse}, we have some unique advantages over these proposals. For simplicity, we assume that $d$ is fixed in the subsequent discussion. First, both sliced inverse regression methods require estimation of with-in-slice means rather than the MDDM. As shown in Theorem~\ref{th:main}, MDDM converges to its population counterpart at a slower rate than the sample with-in-slice mean. However, by adopting MDDM, we no longer need to determine the slicing scheme, and  we do not encounter the curse-of-dimensionality problem when slicing multivariate response. Second, \citet{LZL18} only achieves the optimal rate when $p=o(n^2)$, and  cannot handle ultra-high dimensions. In contrast, Algorithm~\ref{alg1} allows $p$ to diverge at an exponential rate of $n$, and is more suitable for ultra-high-dimensional data. Third, although \citet{tan2020sparse} achieves consistency when $\log{p}=o(n)$, they make much more restrictive model assumptions. For example, they assume that $Y$ is categorical and $\bX$ is normal within each slice of $Y$; they randomly split the dataset to form independent batches to facilitate their proofs, which is not done in their numerical studies. The theoretical properties for their proposal is unclear beyond the (conditionally) Gaussian model and without the sample splitting. In contrast, our method makes no model assumption between $\bX$ and $Y$, and our theory requires no sample splitting. Thus, our results are more widely applicable, and the rates we obtain seem hard to improve. Also, unlike the theory in \citet{tan2020sparse}, our theoretical result characterizes the exact method we use in practice. Moreover, the convergence rate of our method has an additional factor of $\log^3(n)$ compared to \citet{tan2020sparse}, which grows at a slow rate of $n$ that only imposes mild restriction on the dimensionality. For example, for any positive constant $\xi\in (0,1)$, if $\log{p}=O(n^{1-\xi})$, our method is consistent. In this sense, although we cannot handle the optimal dimensionality of $\log{p}=o(n)$, the gap is very small.

Next, we further consider the penalized generalized eigen-decomposition in Algorithm~\ref{alg2}. We assume that the step size $\eta$ satisfies $\eta\lambda_{\max}(\bSigma_{\mbX})<1/2$ and
\begin{equation}
	\sqrt{[1+2\{(\frac{d}{s'})^{1/2}+\frac{d}{s'}\}][1-\dfrac{\eta\lambda_{\min}(\bSigma_{\mbX})(1-\frac{\lambda_2}{\lambda_1})}{16\kappa(\bSigma_{\mbX})+16\frac{\lambda_2}{\lambda_1}}]} < 1,
\end{equation}
{where $\lambda_{\max}(\bolSigma_{\mbX})$, $\lambda_{\min}(\bolSigma_{\mbX})$ and $\kappa(\bSigma_{\mbX})$ are respectively the largest eigenvalue, the smallest eigenvalue and the condition number of $\bolSigma_{\mbX}$. The non-asymptotic results are as follows.}
\begin{theorem}\label{consistency.GEP}
Assume that Conditions (C1'), (C2) \& (C3) hold. Suppose there exists $\gamma\in (0,1/2)$ such that $d\le s=o(n^{\gamma})$, and there exists a constant $\theta(\kappa(\bSigma_{\bX}),\lambda_{\max}(\bSigma_{\bX}),\Delta, \lambda_1,\lambda_K,\eta)\in (0,1)$ such that $\dfrac{(\widehat\bbeta_k^0)^\T\bbeta_k}{\Vert\widehat\bbeta_k^0\Vert_2}\ge 1-\theta$. Then there exists a positive integer $n_0=n_0(s_0,C_0)<\infty$ and four finite positive constants $D_0=D_0(\gamma,\sigma_0,C_0)\in (0,\infty)$, $D_1=D_1(C_0)\in (0,\infty)$,
$D_2=D_2(\sigma_0,C_0)\in (0,\infty)$ and $\epsilon_0=\epsilon_0(\lambda_1,\lambda_2,\lambda_{\min}(\bSigma),\Delta)$ such that for any $\epsilon$ that satisfies $s\epsilon<\epsilon_0$ and $D_0n^{-\gamma}<\epsilon\le 1$, with a probability greater than $1-D_1p^2n\exp\{-D_2\epsilon^2n/\log^3{n}\}$, we have $|\sin\Theta(\widehat{\bbeta}_k,\bbeta_k)|\le Cs\epsilon$, for $k=1,\ldots,K$.
\end{theorem}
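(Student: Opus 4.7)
The plan is to combine the sample-level concentration bound for $\MDDM_n$ from Theorem~\ref{th:main} with a RIFLE-style contraction analysis, carried out inductively across the $K$ deflation steps in Algorithm~\ref{alg2}. First I would translate the problem into high-probability events: under Condition (C1'), Theorem~\ref{th:main} delivers $\|\widehat{\mbM}-\mbM\|_{\max}\le 12\epsilon$ with probability at least $1-54p^2\exp\{-\epsilon^2 n/(36\log^3 n)\}$ for $\epsilon\gtrsim n^{-\gamma}$. I also need $\|\widehat{\bolSigma}_{\mbX}-\bolSigma_{\mbX}\|_{\max}\le C\epsilon$, which follows from a standard sub-Gaussian max-norm concentration under (C1') via a union bound; accumulating these bounds across the $K$ iterations contributes the $D_1 p^2 n$ prefactor in the stated probability. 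On the intersection of these events the subsequent analysis becomes deterministic.

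For the base case $k=1$, I would invoke the RIFLE contraction analysis of \citet{TWLZ18}. Their result shows that, given a warm start with $(\widehat{\bolbeta}_1^{(0)})^\T\bolbeta_1/\|\widehat{\bolbeta}_1^{(0)}\|_2\ge 1-\theta$ and a step size satisfying $\eta\lambda_{\max}(\bolSigma_{\mbX})<1/2$, the gradient-plus-threshold iterates contract in sin-Theta distance at the geometric rate displayed in the theorem statement, with the sparsity-dependent prefactor $\sqrt{1+2\{(d/s')^{1/2}+d/s'\}}$ arising from the hard-thresholding step exactly as in Theorem~\ref{consistency}. Iterating until the contraction meets the statistical noise floor, which is of order $s(\|\widehat{\mbM}-\mbM\|_{\max}+\|\widehat{\bolSigma}_{\mbX}-\bolSigma_{\mbX}\|_{\max})\le Cs\epsilon$ on the $s$-sparse cone, yields $|\sin\bTheta(\widehat{\bolbeta}_1,\bolbeta_1)|\le Cs\epsilon$.

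For the inductive step, suppose $|\sin\bTheta(\widehat{\bolbeta}_l,\bolbeta_l)|\le Cs\epsilon$ for every $l<k$. The essential estimate is a bound on $\|\widehat{\mbM}_k-\mbM_k\|_{\max}$, where $\mbM_k=\mbM-\bolSigma_{\mbX}(\sum_{l<k}\lambda_l\bolbeta_l\bolbeta_l^\T)\bolSigma_{\mbX}$ is the population deflated matrix from Lemma~\ref{sub}. Decomposing this difference term by term gives three sources of error: the base discrepancy $\widehat{\mbM}-\mbM$, the sandwiching errors $\widehat{\bolSigma}_{\mbX}-\bolSigma_{\mbX}$, and the plug-in errors $\delta_l\widehat{\bolbeta}_l\widehat{\bolbeta}_l^\T-\lambda_l\bolbeta_l\bolbeta_l^\T$ from the previous steps. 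Using the inductive hypothesis, the $s$-sparsity of every $\widehat{\bolbeta}_l$, and a perturbation bound $|\delta_l-\lambda_l|\le Cs\epsilon$ obtained by controlling the Rayleigh-quotient discrepancy on the sparse cone, each contribution is of order $s\epsilon$. Since $K$ is fixed, the aggregate error stays at $Cs\epsilon$. Lemma~\ref{sub} guarantees that $\bolbeta_k$ is the top generalized eigenvector of $(\mbM_k,\bolSigma_{\mbX})$ with the eigen-gap preserved by Condition (C2), so the base-case RIFLE argument applies verbatim to the pair $(\widehat{\mbM}_k,\widehat{\bolSigma}_{\mbX})$, closing the induction.

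The main obstacle I anticipate is controlling the deflation-induced error without degrading the rate: $\widehat{\mbM}_k$ is a nonlinear functional of the estimates produced at earlier steps, so its closeness to $\mbM_k$ must be established term by term rather than by a direct appeal to Theorem~\ref{th:main}. Special care is needed for $|\delta_l-\lambda_l|$, where the Rayleigh quotient couples the errors in $\widehat{\bolbeta}_l$, $\widehat{\mbM}$, and $\widehat{\bolSigma}_{\mbX}$; restricting to the $s$-sparse cone prevents this coupling from producing a dimensional blow-up. A secondary concern is propagation of the warm-start condition from one deflation step to the next, which can be handled either by the initialization hypothesis in the theorem statement or, alternatively, by seeding each step with a top eigenvector of $\widehat{\mbM}_k$ obtained from Algorithm~\ref{alg1}-type preprocessing, paralleling the treatment used for Theorem~\ref{consistency}.
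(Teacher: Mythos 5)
Your proposal follows essentially the same route as the paper's own proof: condition on the max-norm concentration events for $\widehat{\mbM}$ and $\widehat{\bolSigma}_{\mbX}$, restrict to the $s$-sparse cone so that max-norm errors translate into $\rho(\cdot,s)\le s\epsilon$ bounds, apply the RIFLE contraction of \citet{TWLZ18} for the first direction, and then induct over the deflation steps by controlling $|\delta_l-\lambda_l|$ via a Rayleigh-quotient perturbation argument and decomposing $\widehat{\mbM}_k-\mbM_k$ term by term (this is exactly the content of the paper's Lemmas on $|\hat\lambda_k-\lambda_k|$ and on $\delta(s)\le Cs\epsilon$ for the deflated matrices). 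The argument is correct and no genuinely different ideas are involved.
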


Theorem~\ref{consistency.GEP} is proved by showing that $\widehat{\bM}_k$ and $\widehat{\bSigma}_{\bX}$ are close to their counterparts in the sense that $\bu^\T\widehat{\bM}_k\bu,\bu^\T\widehat{\bSigma}_{\bX}\bu$ are close to $\bu^\T{\bM}_k\bu,\bu^\T{\bSigma}_{\bX}\bu$ for any $\bu$ with only $s$ nonzero elements, respectively. Then we use the fact that Algorithm~\ref{alg2} is a generalization of the RIFLE algorithm [\citet{TWLZ18}], and some  properties of the latter allow us to establish  the consistency of Algorithm~\ref{alg2}. By comparison, our proofs are significantly more involved than the one in \citet{TWLZ18}, because we have to estimate $K$ generalized eigenvectors instead of just the first one. We need to carefully control the error bounds to guarantee that the estimation errors do not accumulate to a higher order beyond the first generalized eigenvector.

Analogous to Corollary 1, we can easily obtain asymptotic consistency results by translating Theorem~\ref{consistency.GEP}. 

\begin{corollary}\label{cor.consistency.GEP}
Assume that Conditions (C1)--(C4) hold. Suppose there exists $\gamma\in (0,1/2)$ such that $d \le s\ll \min\{n^{\gamma},\dfrac{n
^{\frac{1}{2}}}{(\log p\log^3{n})^{\frac{1}{2}}}\}$. Under the conditions in Theorem~\ref{consistency.GEP}, the quantities $|\sin\bTheta(\widehat\bbeta_k,\bbeta_k)|\rightarrow 0$ with a probability tending to 1, for $k=1,\ldots,K$. 
\end{corollary}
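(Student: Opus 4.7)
The plan is to deduce the corollary as a direct asymptotic consequence of the non-asymptotic bound in Theorem~\ref{consistency.GEP} by choosing a sequence $\epsilon=\epsilon_n$ that simultaneously drives the error bound $Cs\epsilon$ to zero and the failure probability $D_1 p^2 n\exp\{-D_2\epsilon^2 n/\log^3 n\}$ to zero, while still satisfying the three admissibility constraints $D_0 n^{-\gamma}<\epsilon\le 1$ and $s\epsilon<\epsilon_0$.

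First I would translate the three requirements on $\epsilon$ into a single feasibility window. For the probability bound, $p^2 n\exp\{-D_2\epsilon^2 n/\log^3 n\}\to 0$ is equivalent to $\epsilon\gg \sqrt{(\log p)\log^3 n/n}$, i.e.~$\epsilon$ must exceed $(\log p)^{1/2}(\log n)^{3/2}n^{-1/2}$ by a slowly diverging factor. For the admissibility constraint $\epsilon>D_0 n^{-\gamma}$ and to keep $s\epsilon<\epsilon_0$ (and in particular $s\epsilon\to 0$), $\epsilon$ must be $\ll s^{-1}$. Hence a valid choice exists exactly when
\begin{equation*}
\max\bigl\{D_0 n^{-\gamma},\ (\log p)^{1/2}(\log n)^{3/2} n^{-1/2}\bigr\}\ \ll\ s^{-1},
\end{equation*}
which rearranges to $s\ll\min\{n^{\gamma},\ n^{1/2}/((\log p)^{1/2}(\log n)^{3/2})\}$. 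This is precisely the hypothesis on $s$ in the corollary (noting $(\log p\log^3 n)^{1/2}=(\log p)^{1/2}(\log n)^{3/2}$), so the window is nonempty for all sufficiently large $n$.

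Next I would construct an explicit $\epsilon_n$ in this window. A clean choice is to let $\omega_n\to\infty$ slowly (e.g.~$\omega_n=\log\log n$) and set
\begin{equation*}
\epsilon_n = \omega_n\cdot\max\bigl\{D_0 n^{-\gamma},\ (\log p)^{1/2}(\log n)^{3/2} n^{-1/2}\bigr\}.
\end{equation*}
Under the assumed growth rate of $s$, one checks $s\epsilon_n\to 0$ (so in particular $s\epsilon_n<\epsilon_0$ for large $n$), while $\epsilon_n>D_0 n^{-\gamma}$ and $\epsilon_n\le 1$ hold automatically. For the probability, $\epsilon_n^2 n/\log^3 n\ge \omega_n^2\log p$, so $D_1 p^2 n\exp\{-D_2\epsilon_n^2 n/\log^3 n\}\le D_1 p^2 n\exp\{-D_2\omega_n^2\log p\}\to 0$ since $\omega_n\to\infty$.

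Finally, applying Theorem~\ref{consistency.GEP} with this $\epsilon_n$ yields $|\sin\Theta(\widehat{\bbeta}_k,\bbeta_k)|\le Cs\epsilon_n\to 0$ for each $k=1,\ldots,K$ on an event of probability tending to one, which is the statement of Corollary~\ref{cor.consistency.GEP}. There is no substantive obstacle here: the work has already been done in Theorem~\ref{consistency.GEP}, and the only task is the bookkeeping of choosing $\epsilon_n$ so all three constraints are simultaneously met. The main thing to be careful about is verifying that the rate hypothesis on $s$ in the corollary matches exactly the feasibility condition for the chosen $\epsilon_n$; this is precisely why the corollary states $s\ll\min\{n^{\gamma}, n^{1/2}/(\log p\log^3 n)^{1/2}\}$.
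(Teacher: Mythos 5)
Your proof is correct and takes essentially the same route as the paper, which proves this corollary simply by translating Theorem~\ref{consistency.GEP}: one picks $\epsilon=\epsilon_n$ in the window $n^{-1/2}(\log p)^{1/2}(\log n)^{3/2}\ll \epsilon_n\ll s^{-1}$ (as the paper does explicitly for Corollary~\ref{cor.consistency}), and your construction with the slowly diverging factor $\omega_n$ just makes that bookkeeping concrete. The only cosmetic point is that killing the extra factor of $n$ in $D_1p^2n\exp\{-D_2\epsilon_n^2n/\log^3 n\}$ when $\log p$ grows slowly also requires using $\epsilon_n\gg n^{-\gamma}$ with $\gamma<1/2$ (so that $\epsilon_n^2 n/\log^3 n$ diverges polynomially), which your choice of $\epsilon_n$ already guarantees.
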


Corollary~\ref{cor.consistency.GEP} shows that Algorithm~\ref{alg2} produces consistent estimates of the generalized eigenvectors $\beta_k$ even when $p$ grows at an exponential rate of the sample size $n$, and thus is suitable for ultra-high-dimensional problems. Similar to Corollary~\ref{cor.consistency}, Corollary~\ref{cor.consistency.GEP} has no gap between theory and the numerical outputs, as it is a result concerning the outputs of Algorithm~\ref{alg2}. We note that the dimensionality in Corollary~\ref{cor.consistency.GEP} is the same as that in Corollary~\ref{cor.consistency}. Thus, with a properly chosen step size $\eta$, the penalized generalized eigen-decomposition is intrinsically no more difficult than the penalized eigen-decomposition. But if we have knowledge about $\bSigma_{\bX}$ being the identity matrix, it is still beneficial to exploit such information and use Algorithm~\ref{alg1}, because Algorithm~\ref{alg1} does not involve the step size and is more convenient in practice. Also, although Algorithm~\ref{alg2} does not achieve the same rate of convergence as recent sparse sliced inverse regression proposals, it has many practical and theoretical advantages just as Algorithm~\ref{alg1}, which we do not repeat.

Finally, we note that our theoretical studies require conditions on the initial value. Specifically, we require the initial value to be non-orthogonal to the truth. This is a common technical condition for iterative algorithms; see \citet{yuan2013truncated,TWLZ18} for example.  Such conditions do not seem critical for our algorithms to work in practice. In our numerical studies to be presented, we use randomly generated initial values, and the performance of our methods appears to be   competitive.

\section{Numerical studies} \label{sec:numerical}

\subsection{Simulations}

We compare our slicing-free approaches to the state-of-the-art high-dimensional extensions of sliced inverse regression estimators. Both univariate response and multivariate response settings are considered. Specifically, for univariate response simulations, we include Rifle-SIR \citep{TWLZ18} and Lasso-SIR \citep{LZL19} as two main competitors; for multivariate response simulations, we mainly compare our method with the projective resampling approach to SIR \citep[PR-SIR,][]{PR}, which is a computationally expensive method that repeatedly projects the multivariate response to one-dimensional subspaces. For Rifle-SIR, we adopt the Rifle algorithm to estimate the leading eigenvector of the sample matrix $\cov\{\E(\mbX\mid Y)\}$ based on slicing. In addition, we include the oracle-SIR as a benchmark method, where we perform SIR on the subset of truly relevant variables (hence a low-dimensional estimation problem). For all these SIR-based methods, we include two different slicing schemes by setting the number of slices to be $3$ and $10$, where $3$ is the minimal number of slices required to obtain our two-dimensional central subspace and $10$ is a typical choice used in the literature. To evaluate the performances of these SDR methods, we use the subspace estimation error defined as $\mathcal{D}(\widehat{\bolbeta},\bolbeta) = \|\mbP_{\widehat{\bolbeta}} - \mbP_{\bolbeta}\|_F/\sqrt{2K}$, where $\hatbolbeta, \bolbeta\in\mbbR^{p\times K}$ are the estimated and the true basis matrices of the central subspace and $\mbP_{\hatbolbeta},\mbP_{\bolbeta}\in\mbbR^{p\times p}$ are the corresponding projection matrices. This subspace estimation error is always between $0$ and $1$, and a small value indicates a good estimation.





First, we consider the following six models for univariate response regression: $\mathcal{M}_1$ and $\mathcal{M}_2$ are single index models (i.e.~$K=1$), $\mathcal{M}_3$--$\mathcal{M}_5$ are multiple index models (i.e.~$K=2$), that are widely used in SDR literature; $\mathcal{M}_6$ is isotropic PFC model with $K=1$. Specifically,
\begin{align*}
&\mathcal{M}_1: Y = (\bolbeta_1^\T\mbX) + \sin(\bolbeta_1^\T\mbX) + \epsilon,\quad
\mathcal{M}_2: Y = 2\arctan(\bolbeta_1^\T\mbX) + 0.1(\bolbeta_1^\T\mbX)^3 + \epsilon,\\
&\mathcal{M}_3: Y = \frac{\bolbeta_1^\T\mbX}{0.5 + (1.5+\bolbeta_2^\T\mbX)^2} + 0.2\epsilon,\quad
\mathcal{M}_4: Y = \bolbeta_1^\T\mbX + (\bolbeta_1^\T\mbX)\cdot(\bolbeta_2^\T\mbX)+ 0.3\epsilon,\\
&\mathcal{M}_5: Y = sign(\bolbeta_1^\T\mbX)\cdot \log(|\bolbeta_2^\T\mbX + 5|) + 0.2\epsilon,\quad\mathcal{M}_6: \mbX = 2\bolbeta_1\exp(Y)/3 + 0.5\bolepsilon,
\end{align*}
where $\mbX \sim N_p(0,\bolSigma_{\mbX})$ and $\epsilon \sim N(0,1)$ for $\mathcal{M}_1$--$\mathcal{M}_5$, and $Y\sim N(0,1), \bolepsilon\sim N_p(0,\mbI_p)$ for the isotropic PFC model ($\mathcal{M}_6$). 
The sparse directions in the central subspace $\bolbeta_1,\bolbeta_2\in\mbbR^{p}$ are orthogonal as we let the first $s=6$ elements in $\bolbeta_1$ and the $6$-th to $12$-th elements in $\bolbeta_2$ to be $1/\sqrt{6}$ (while all other elements are zero). For  $\mathcal{M}_1$--$\mathcal{M}_5$, we consider both the independent predictor setting with $\bolSigma_{\mbX}=\mbI_p$ and the correlated predictor setting with auto-regressive correlation that $\Sigma_X(i,j) = 0.5^{|i-j|}$ for $i,j = 1,2,...,p$. 
For each of model settings, we vary the sample size $n \in  \{200,500,800\}$ and predictor dimension $p \in \{200,500,800,1200,2000\}$ and simulate 1000 independent data sets. 

For our method, we applied the generalized eigen-decomposition algorithm (Algorithm~\ref{alg2}) in all these six models (even when the covariance $\mbX$ is identity matrix). In the single index models $\mathcal{M}_1$ and $\mathcal{M}_2$, we use the random initialization ($\hatbolbeta^{(0)}$ is randomly generated from $p$-dimensional standard normal) for our algorithm and Rifle-SIR to demonstrate the robustness to initialization. The step size in the algorithm is simply fixed as $\eta=1$. For more challenging multiple index models, $\mathcal{M}_3 - \mathcal{M}_6$, we consider the best case scenarios for each method, therefore true parameter $\bolbeta$ is used as the initial value and an optimal $\eta\in\{0.1,0.2,\dots,1.0\}$ is selected from a separate training sample with 400 observations. The results based on 1000 replications for $n = 200$ and $p = 800$ are summarized in Table~\ref{tab:URM}, while the rest of the results can be found in the Supplemental Materials.
Overall, the slicing-free MDDM approach is much more accurate than existing SIR-based methods. It is almost as accurate as the oracle-SIR. Moreover, it is clear that SIR-type methods are rather sensitive to the choice of the number of slices.

\begin{table*}[t!]
	\resizebox{\textwidth}{!}{
		
		\begin{tabular}{cccc|cc|cc|cc|cc|cc|cc}
			\hline
			$\bolSigma_{\mbX}$&\multicolumn{1}{c}{}&  \multicolumn{2}{c}{MDDM}                           & \multicolumn{2}{c}{Oracle-SIR(3)}                         & \multicolumn{2}{c}{Oracle-SIR(10)}                         & \multicolumn{2}{c}{Rifle-SIR(3)}                    & \multicolumn{2}{c}{Rifle-SIR(10)}                                       & \multicolumn{2}{c}{LassoSIR(3)}                   & \multicolumn{2}{c}{LassoSIR(10)}                                         \\
			
			&\multicolumn{1}{c}{}    & \multicolumn{1}{c}{Error} & \multicolumn{1}{c}{SE} & \multicolumn{1}{c}{Error} & \multicolumn{1}{c}{SE} & \multicolumn{1}{c}{Error} & \multicolumn{1}{c}{SE} &  \multicolumn{1}{c}{Error} & \multicolumn{1}{c}{SE} & \multicolumn{1}{c}{Error} & \multicolumn{1}{c}{SE} & \multicolumn{1}{c}{Error} & \multicolumn{1}{c}{SE} & \multicolumn{1}{c}{Error} & \multicolumn{1}{c}{SE} \\\cline{2-16}
			\multirow{5}{*}{$\mbI_p$}&{$\mathcal{M}_1$} & 10.1 & 0.1 & 12.5 & 0.1 & 10.3 & 0.1 & 25.2 & 1.0 & 53.7 & 1.4 & 37.9 & 0.4 & 59.9 & 0.7  \\
			
			&{$\mathcal{M}_2$}&10.3 & 0.1 & 13.1 & 0.1 & 10.6 & 0.1 & 26.1 & 1.0 & 54.7 & 1.4 & 40.1 & 0.4 & 61.5 & 0.7  \\
			
			&$\mathcal{M}_3$ &17.7 & 0.2 & 40.8 & 0.2 & 27.7 & 0.2 & 71.3 & 0.0 & 71.2 & 0.0 & 76.5 & 0.2 & 85.0 & 0.2\\
			
			&$\mathcal{M}_4$&23.0 & 0.2 & 45.8 & 0.3 & 36.4 & 0.3 & 71.9 & 0.0 & 71.6 & 0.0 & 85.2 & 0.2 & 91.5 & 0.2 \\
			
			&$\mathcal{M}_5$&30.8 & 0.6 & 28.8 & 0.2 & 22.1 & 0.1 & 71.6 & 0.0 & 71.2 & 0.0 & 71.2 & 0.3 & 81.3 & 0.3\\\cline{2-16} 
			\multirow{6}{*}{AR}&$\mathcal{M}_1$&18.7 & 0.3 & 21.0 & 0.2 & 17.6 & 0.2 & 34.7 & 0.8 & 39.8 & 1.1 & 35.3 & 0.3 & 35.5 & 0.3  \\ 
			&$\mathcal{M}_2$&14.2 & 0.2 & 20.7 & 0.2 & 14.8 & 0.2 & 33.1 & 0.7 & 33.6 & 1.1 & 34.6 & 0.3 & 30.5 & 0.3  \\
			&$\mathcal{M}_3$&25.2 & 0.3 & 44.6 & 0.2 & 34.1 & 0.2 & 71.5 & 0.0 & 71.3 & 0.0 & 54.8 & 0.2 & 47.1 & 0.3 \\
			&$\mathcal{M}_4$&59.1 & 0.5 & 75.1 & 0.2 & 69.9 & 0.3 & 81.0 & 0.2 & 78.7 & 0.2 & 89.7 & 0.2 & 92.1 & 0.2 \\
			&$\mathcal{M}_5$ &46.2 & 0.6 & 46.4 & 0.2 & 35.5 & 0.2 & 73.8 & 0.1 & 72.4 & 0.0 & 66.5 & 0.2 & 61.4 & 0.3\\\cline{2-16}
			PFC &$\mathcal{M}_{6}$&34.6&0.6&48.9&0.5&33.4&0.5&40.1&0.7&30.8&0.6&70.7&0.0&70.7&0.0\\\hline
			
		\end{tabular}}
	
	\caption{ Averaged subspace estimation errors and the corresponding standard errors (after multiplied by $100$) for univariate response models ($n = 200, p = 800$).} \label{tab:URM}
\end{table*}

Next, we further consider the following three multivariate response models, where the response dimension is $q=4$. These three models are respectively a multivariate linear model, a single-index heteroschedastic error model, and an isotropic PFC model. The predictors satisfy $\mbX  \sim N_p(0,\mbI_p)$ in the following two forward regression model. Therefore, we applied Algorithm~\ref{alg1} for our method under models $\mathcal{M}_7$ and $\mathcal{M}_8$. For the isotropic PFC model $\mathcal{M}_9$, where $\mbX\mid\mbY  \sim N_p(\bolbeta f(\mbY),\mbI_p)$, we still apply Algorithm~\ref{alg2} to be consistent with the univariate case. For the projective resampling methods, PR-SIR and PR-Oracle-SIR, we generated a sufficiently large number of  $n\log(n)$ random projections so that the PR methods reach their fullest potential. 
\begin{enumerate}
	\item[$\mathcal{M}_7$]: $Y_1 = \bolbeta_1^{\T}X + \epsilon_1$, $Y_2 = \bolbeta_2^\T\mbX + \epsilon_2$,  $Y_3 = \epsilon_3$ and $Y_4 = \epsilon_4$. The errors $(\epsilon_1,\dots,\epsilon_4)$ are independent standard normal except for $\cov(\epsilon_1,\epsilon_2)=-0.5$. For this model, the central subspace is spanned by $\bolbeta_1 = (1,0,0,0,...,0)^\T$ and $\bolbeta_2 = (0,2,1,0,...,0)^\T$.
	\item[$\mathcal{M}_8$]: $Y_1 = \exp(\epsilon_1)$ and $Y_i = \epsilon_i$ for $i = 2,3,4$, where $(\epsilon_1,\dots,\epsilon_4)$ are independent standard normal except for $\cov(\epsilon_1,\epsilon_2)=\sin(\bolbeta^{\T}\mbX)$. For this model, the central subspace is $\bolbeta = (0.8,0.6,0,0,\dots,0)^\T$. Note that marginally each response is independent of $\mbX$.
	\item[$\mathcal{M}_{9}$]: $\mbX = \bolbeta\left(\frac{1}{3}\sin(Y_1) + \frac{2}{3}\exp(Y_2) + Y_3\right) + \bolepsilon$, where $\bolbeta=(1/\sqrt{6}\cdot \mathrm{1}_6,\mathrm{0}_{p-6})$ , and $\bolepsilon\sim N(0,\mbI_p)$. Hence, $\calS_{\mbY\mid\mbX}=\spn(\bolbeta)$.
\end{enumerate}
Again we considered various sample size and predictor dimension setups, each with 1000 replicates. We summarize the subspace estimation errors in the Table~\ref{tab:MRM}. For $p = 800$ and $1200$, the results are gathered in the supplement. It is clear that the proposed MDDM approach is much better than PR-SIR, and also improves much faster than PR-SIR when we increase the sample size. The MDDM method performed better in inverse regression models such as the isotropic PFC model than forward regression models such as the linear model and index models. This finding is more apparent in the multivariate response simulations than in the univariate response simulations. This is expected, as the MDDM directly targets at the inverse regression subspace, which is more directly driven by the response in the isotropic PFC models.

\begin{table*}[t!]
	\centering
	\resizebox{\textwidth}{!}{
		\begin{tabular}{cccccccc|cccccc|cccccc}
			\hline
			&        & \multicolumn{6}{c}{$n = 100$}        & \multicolumn{6}{c}{$n = 200$}  & \multicolumn{6}{c}{$n = 400$} \\  \cline{3-20}
			&        &\multicolumn{2}{c}{$p = 100$}&\multicolumn{2}{c}{$p = 200$}&\multicolumn{2}{c}{$p = 400$}&\multicolumn{2}{c}{$p = 100$}&\multicolumn{2}{c}{$p = 200$}&\multicolumn{2}{c}{$p = 400$}&\multicolumn{2}{c}{$p = 100$}&\multicolumn{2}{c}{$p = 200$}&\multicolumn{2}{c}{$p = 400$} \\
			&&Error&SE&Error&SE&Error&SE&Error&SE&Error&SE&Error&SE&Error&SE&Error&SE&Error&SE\\\cline{2-20}
			\multirow{6}{*}{$\mathcal{M}_7$}& MDDM   & 37.1 & 0.5 & 39.8 & 0.5 & 42.5 & 0.5 & 24.0 & 0.4 & 25.3 & 0.4 & 26.9 & 0.4 & 16.1 & 0.3 & 17.3 & 0.3 & 18.6 & 0.3  \\  
			& PR-Oracle-SIR(3) &12.6&0.2&12.2&0.2&12.0&0.2&8.8&0.1&8.5&0.1&8.7&0.1&5.9&0.1&5.8&0.1&5.8&0.1\\
			& PR-Oracle-SIR(10) & 16.2&0.3&15.7&0.3&15.6&0.3&9.6&0.2&9.4&0.2&95.2&0.2&6.0&0.1&6.0&0.1&6.0&0.1\\
			& PR-SIR(3) & 79.9 & 0.1 & 88.2 & 0.1 & 93.5 & 0.0 & 67.9 & 0.1 & 79.3 & 0.1 & 87.8 & 0.0 & 54.6 & 0.1 & 67.6 & 0.1 & 79.0 & 0.0  \\  
			& PR-SIR(10) & 83.5 & 0.1 & 90.6 & 0.1 & 94.9 & 0.0 & 70.1 & 0.1 & 81.6 & 0.1 & 90.1 & 0.1 & 55.3 & 0.1 & 68.2 & 0.1 & 80.2 & 0.1  \\  \cline{2-20} 
			\multirow{6}{*}{$\mathcal{M}_8$}& MDDM   &79.4 & 0.9 & 85.8 & 0.8 & 90.0 & 0.7 & 55.9 & 1.2 & 61.0 & 1.2 & 68.4 & 1.2 & 27.1 & 0.9 & 30.3 & 1.0 & 31.0 & 1.0  \\ 
			& PR-Oracle-SIR(3) &40.9&0.9&41.3&0.9&41.4&0.9&26.0&0.7&24.9&0.7&25.0&0.6&14.9&0.4&14.9&0.4&15.0&0.4\\
			& PR-Oracle-SIR(10) & 44.1&0.9&43.8&0.9&43.5&0.9&25.1&0.6&23.7&0.6&24.1&0.6&13.1&0.3&13.0&0.3&13.2&0.3\\ 
			& PR-SIR(3) & 99.3 & 0.0 & 99.7 & 0.0 & 99.8 & 0.0 & 99.2 & 0.0 & 99.7 & 0.0 & 99.8 & 0.0 & 98.8 & 0.0 & 99.6 & 0.0 & 99.8 & 0.0 \\  
			& PR-SIR(10) & 99.3 & 0.0 & 99.7 & 0.0 & 99.9 & 0.0 & 99.1 & 0.0 & 99.6 & 0.0 & 99.8 & 0.0 & 98.4 & 0.1 & 99.6 & 0.0 & 99.8 & 0.0  \\ \cline{2-20} 
			\multirow{6}{*}{$\mathcal{M}_{9}$}& MDDM   & 15.3 & 0.3 & 15.4 & 0.3 & 15.7 & 0.3 & 9.9 & 0.1 & 10.1 & 0.1 & 10.0 & 0.1 & 7.1 & 0.1 & 7.2 & 0.1 & 7.1 & 0.1 \\
			& PR-Oracle-SIR(3) &15.2 & 0.2 & 15.2 & 0.2 & 14.9 & 0.2 & 10.5 & 0.1 & 10.6 & 0.1 & 10.5 & 0.1 & 7.5 & 0.1 & 7.6 & 0.1 & 7.4 & 0.1 \\ 
			& PR-Oracle-SIR(10) &13.8 & 0.2 & 13.9 & 0.2 & 13.6 & 0.2 & 9.4 & 0.1 & 9.7 & 0.1 & 9.6 & 0.1 & 6.8 & 0.1 & 6.8 & 0.1 & 6.7 & 0.1 \\ 
			& PR-SIR(3) & 58.5 & 0.2 & 72.3 & 0.2 & 84.0 & 0.2 & 44.6 & 0.1 & 58.2 & 0.1 & 71.4 & 0.1 & 33.1 & 0.1 & 44.6 & 0.1 & 57.9 & 0.1 \\
			& PR-SIR(10) & 54.8 & 0.2 & 68.5 & 0.2 & 80.6 & 0.2 & 41.1 & 0.2 & 54.3 & 0.2 & 67.7 & 0.2 & 30.2 & 0.1 & 41.0 & 0.1 & 54.2 & 0.1 \\   \hline			
	\end{tabular}}
	\caption{Averaged subspace estimation errors and the corresponding standard errors (after multiplied by $100$) for multivariate response models.} \label{tab:MRM}
\end{table*}

\subsection{Real Data Illustration}

	\begin{figure*}[ht!]
		\centering
		\includegraphics[scale = 0.5]{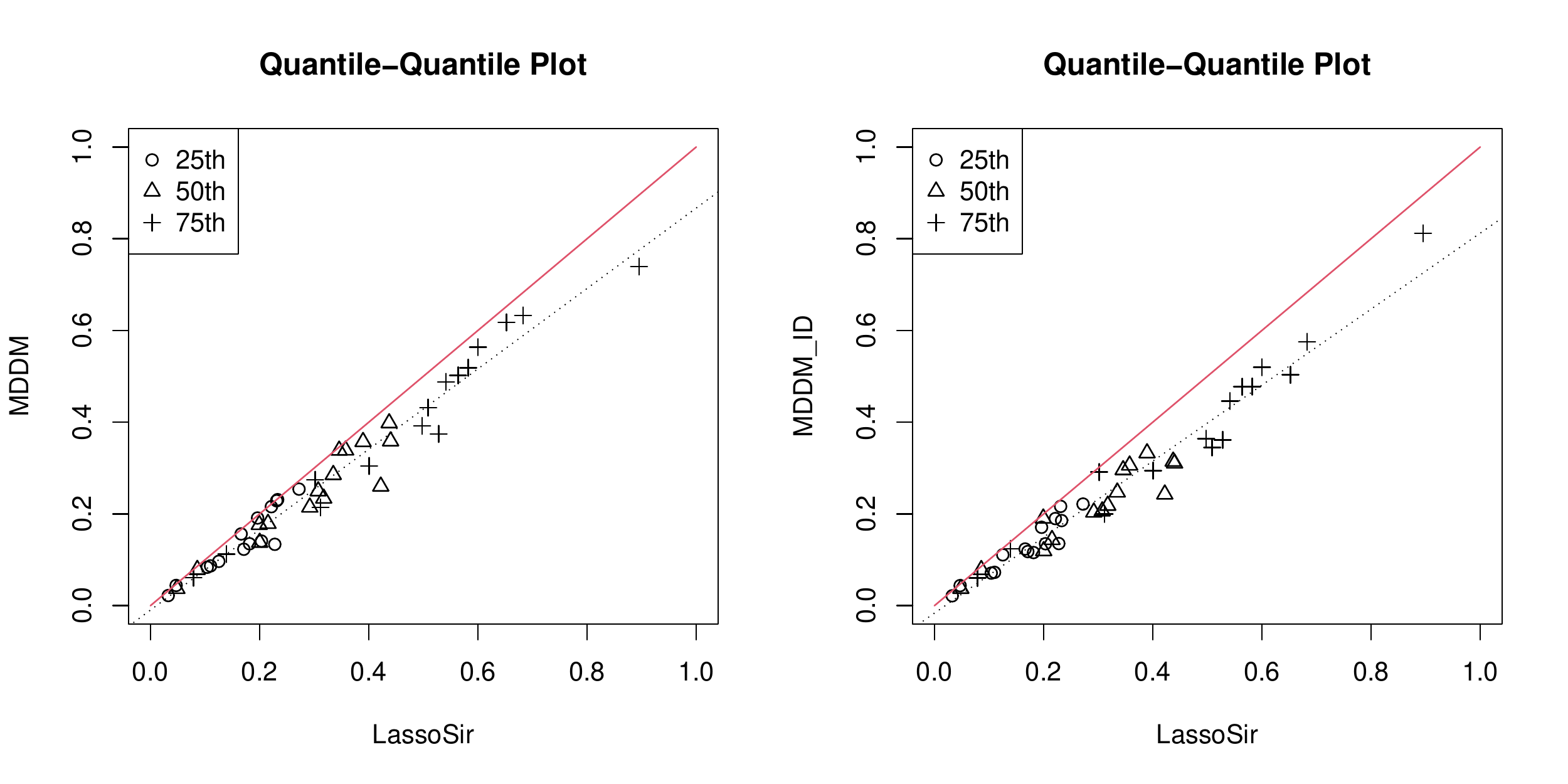}
		\caption{Quantile-quantile plots for prediction error comparisons between MDDM and Lasso-SIR (left panel), and between MDDM-ID and Lasso-SIR (right panel). Each point corresponds to the prediction mean squared errors for one of the $q=15$ response variables, where different shapes represents different quantiles. }\label{fig:quantile}
	\end{figure*}
	
	
	\begin{figure}[ht!]
		\centering
		\includegraphics[scale = 0.5]{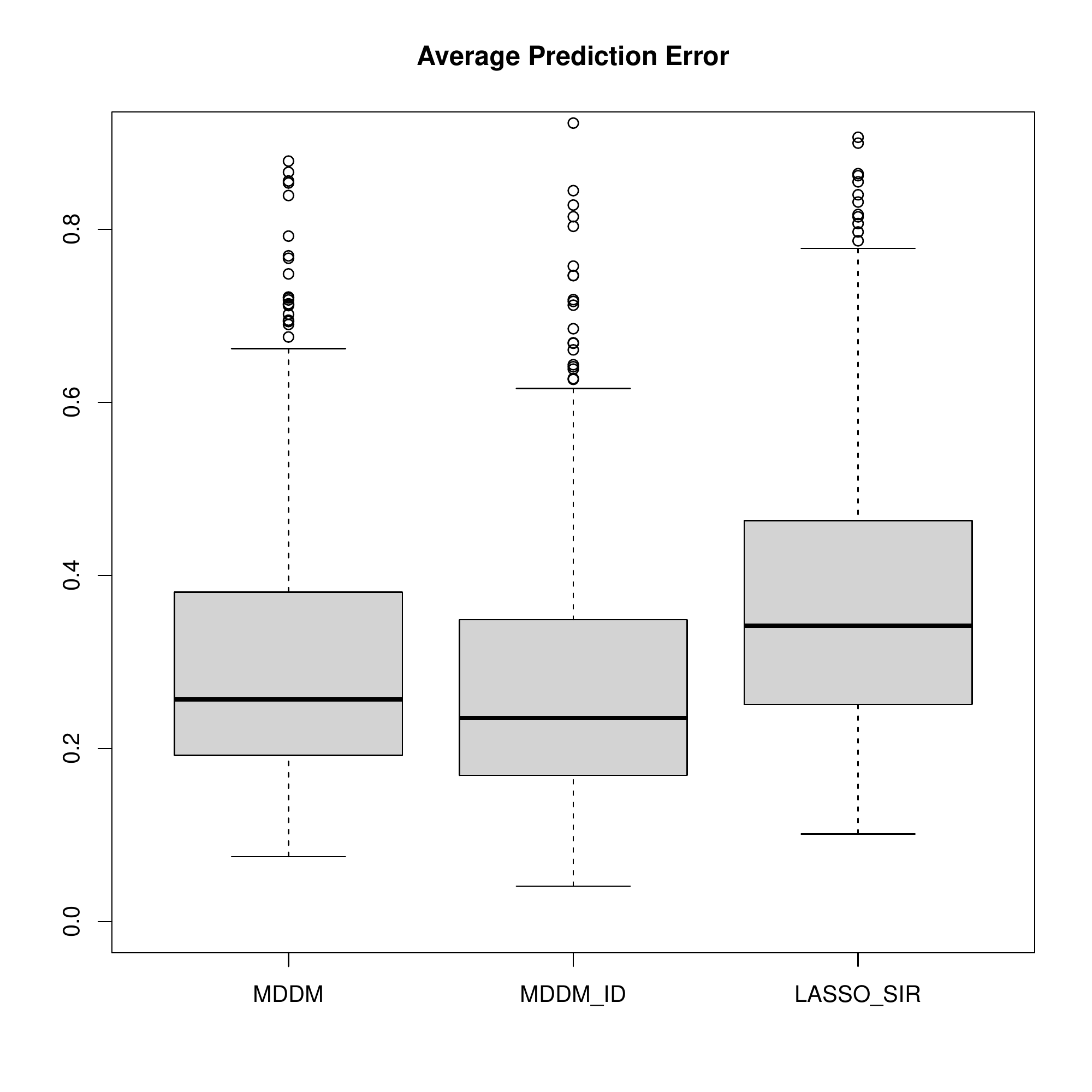}
		\caption{The averaged prediction error over $500$ training-testing sample splits and over $q=15$ response variables. }\label{fig:avgbox}
	\end{figure}

	In this section we use our method to analyze the NCI-60 data set \citep{NCI60} that contains the microRNA expression profiles and cancer drug activity measurements on the NCI-60 cell lines. The multivariate response is the cancer drug activities of $q = 15$ drugs; the predictor is $p = 365$ different microRNA; the sample size is $n = 60$.
	
	First, we examine the predictive performance of our method by $500$ random training-testing sample splits; each time we randomly pick $5$ observations to form the test set. We consider $K=5$ for all methods. For MDDM, we included both the eigen-decomposition (Algorithm~\ref{alg1}) and the generalized eigen-decomposition (Algorithm~\ref{alg2}). To distinguish the two versions of MDDM, we have ``MDDM-ID'' for the eigen-decomposition approach because it implicitly assumes the covariance of $\mbX$ or the conditional covariance of $\mbX\mid\mbY$ is constant times identity matrix. We use random initial values, and choose the sparsity level to be $s=25$ in the way described in Section~S2 in the Supplementary Materials.
  Then the five reduced predictors $\bolbeta_k^{\T}\mbX$, $k=1,\dots,5$, are fed into a generalized additive model for each drug. Finally, we evaluate the mean squared prediction error based on the test sample. The Rifle-SIR can only estimate a one-dimensional subspace, which did not yield accurate prediction in this data set. Hence for comparison, we compute five leading directions from Lasso-SIR. The $25$th, $50$th and $75$th percentiles of the squared prediction errors for each of the 15 responses for all three models are obtained and we construct quantile-quantile plots in Figure \ref{fig:quantile}. The red line is the $y=x$ line, and the black dashed line is a simple linear regression fit for the results indicated by the y-axis label against that indicated by the x-axis. Clearly, for all the quantiles and for all the response variables, the MDDM results (MDDM or MDDM-ID) are better than Lasso-SIR in terms of prediction. In addition, we construct side-by-side boxplots of the prediction error averaged over all response variables in Figure \ref{fig:avgbox} to evaluate the overall improvement. Interestingly, the MDDM-ID is slightly better than the MDDM approach. This is likely due to the small sample size -- with only $55$ training sample,  the sample covariance of $p=365$ variables is difficult to estimate accurately. We further include additional real data analysis results in Section S4 in Supplementary Materials.


\section{Discussion}\label{sec:discussion}

In this paper, we propose a slicing-free high-dimensional SDR method by a penalized eigen-decomposition of sample MDDM. Our proposal is motivated by the usefulness of MDDM for dimension reduction and yields a relatively straightforward implementation in view of the recently developed  RIFLE algorithm \citep{TWLZ18} by simply replacing the slicing-based estimator with sample MDDM. Our methodology and implementation involve no slicing and treats the univariate and multivariate response in a unified fashion. Both theoretical support and finite sample investigations provide convincing evidence that MDDM is a very competitive alternative compared to SIR and may be used as a surrogate to SIR-based estimator routinely for many related sufficient dimension reduction problems.

As with most SDR methods, our proposal requires the linearity condition, the violation of which can make SDR very challenging to tackle.  It seems that existing proposals that relax the linearity condition are often practically difficult due to excessive computational costs, and cannot be easily extended to high dimensions \citep{cook1994reweighting,ma2012semiparametric}. One potentially useful approach is to transform data before SDR to alleviate obvious violation of the linearity assumption \citep{mai2015nonparametric}. But an in-depth study along this line is beyond the scope of the current paper. In addition, we observe from our simulation studies that RIFLE requires the choice of several tuning parameters, such as the step size and the initial value, and the optimization  error could depend on these tuning parameters in a nontrivial way.  Further investigation on the optimization error and data-driven choice for these tuning parameters would be desirable and is left for future research. 

As pointed out by a referee, many SDR methods beyond SIR involve slicing. It will be interesting to study how to perform them in a slicing-free fashion as well. For example, \citet{SAVE} attempt to perform dimension reduction by estimating the conditional covariance of $\bX$, while \citet{yc03} consider the conditional third moment. These methods slice the response to estimate the conditional moments. In the future, one can develop slicing-free methods to estimate these higher-order moments and conduct SDR.

\section*{Supplementary Materials}
Supplement to ``Slicing-free Inverse Regression in High-dimensional Sufficient Dimension Reduction''. In the supplement, we present additional simulation results and proofs.
\section*{Acknowledgements}
The authors are grateful to the Editor, Associate Editor and anonymous referees, whose suggestions led great improvement of this work.
The authors contributed equally to this work and are listed in alphabetical order. Mai and Zhang's research in this article is supported in part by National Science Foundation grants CCF-1908969. Shao's research in this article is supported in part by National Science Foundation grants  DMS-1607489. 



\baselineskip=15pt
\bibliography{ref}

@article{SAVE,
	Author = {Cook, R. D. and Weisberg, S.},
	Journal = {Journal of American Statistical Association},
	Pages = {328--332},
	Title = {Comment on ``Sliced inverse regression for dimension reduction''},
	Volume = {86},
	Year = {1991}}

@article{SIR,
	Author = {Li, K. C.},
	Journal = {Journal of the American Statistical Association},
	Pages = {316--327},
	Title = {Sliced Inverse Regression for Dimension Reduction},
	Volume = {86},
	Year = {1991}}

@article{huo2016fast,
  title={Fast computing for distance covariance},
  author={Huo, Xiaoming and Sz{\'e}kely, G{\'a}bor J},
  journal={Technometrics},
  volume={58},
  number={4},
  pages={435--447},
  year={2016},
  publisher={Taylor \& Francis}
}

@article{szekely2007measuring,
  title={Measuring and testing dependence by correlation of distances},
  author={Sz{\'e}kely, G{\'a}bor J and Rizzo, Maria L and Bakirov, Nail K},
  year={2007},
  journal={Annals of Statistics},
  volume={35},
  pages={ 2769--2794}
}

@book{vershynin2018high,
  title={High-dimensional probability: An introduction with applications in data science},
  author={Vershynin, Roman},
  volume={47},
  year={2018},
  publisher={Cambridge university press}
}

@article{BC01,
	Author = {Bura, E. and Cook, R. D.},
	Journal = {Journal of the American Statistical Association},
	Pages = {996--1003},
	Title = {Extending sliced inverse regression: The weighted chi-squared test},
	Volume = {96},
	Year = {2001}}

@article{ccl02,
	Author = {Chiaromonte, R. and Cook, R. D. and Li, B.},
	Journal = {The Annals of Statistics},
	Pages = {475--497},
	Title = {Sufficient dimension reduction in regressions with categorical predictors},
	Volume = {30},
	Year = {2002}}

@article{CN2005,
	Author = {Cook, R. D. and Ni, L.},
	Date-Modified = {2014-10-19 15:06:20 +0000},
	Journal = {Journal of the American Statistical Association},
	Number = {470},
	Pages = {410--428},
	Read = {0},
	Title = {Sufficient Dimension Reduction via Inverse Regression: A Minimum Discrepancy Approach},
	Volume = {100},
	Year = {2005}}

@article{lw07,
	Author = {Li, B. and Wang, S.},
	Journal = {Journal of the American Statistical Association},
	Pages = {997--1008},
	Title = {On Directional Regression for Dimension Reduction},
	Volume = {102},
	Year = {2007}}

@article{zh08,
	Author = {Zhou, J. and He, X.},
	Journal = {The Annals of Statistics},
	Pages = {1649--1668},
	Title = {Dimension reduction based on constrained canonical correlation and variable filtering},
	Volume = {36},
	Year = {2008}}

@article{bura2011dimension,
  title={Dimension estimation in sufficient dimension reduction: a unifying approach},
  author={Bura, Efstathia and Yang, Jiao},
  journal={Journal of Multivariate analysis},
  volume={102},
  number={1},
  pages={130--142},
  year={2011},
  publisher={Elsevier}
}

@article{mckeague2020significance,
  title={Significance testing for canonical correlation analysis in high dimensions},
  author={McKeague, Ian W and Zhang, Xin},
  journal={arXiv preprint arXiv:2010.08673},
  year={2020}
}

@article{kim2020post,
  title={On post dimension reduction statistical inference},
  author={Kim, Kyongwon and Li, Bing and Yu, Zhou and Li, Lexin and others},
  journal={Annals of Statistics},
  volume={48},
  number={3},
  pages={1567--1592},
  year={2020},
  publisher={Institute of Mathematical Statistics}
}

@article{luo2016combining,
  title={Combining eigenvalues and variation of eigenvectors for order determination},
  author={Luo, Wei and Li, Bing},
  journal={Biometrika},
  volume={103},
  number={4},
  pages={875--887},
  year={2016},
  publisher={Oxford University Press}
}

@article{yc03,
	Author = {Yin, X. and Cook, R. D.},
	Journal = {Biometrika},
	Pages = {113--125},
	Title = {Estimating Central Subspaces via Inverse Third Moments},
	Volume = {90},
	Year = {2003}}

@article{mai2015nonparametric,
  title={Nonparametric variable transformation in sufficient dimension reduction},
  author={Mai, Qing and Zou, Hui},
  journal={Technometrics},
  volume={57},
  number={1},
  pages={1--10},
  year={2015},
  publisher={Taylor \& Francis}
}

@article{tan2018convex,
  title={A convex formulation for high-dimensional sparse sliced inverse regression},
  author={Tan, Kean Ming and Wang, Zhaoran and Zhang, Tong and Liu, Han and Cook, R Dennis},
  journal={Biometrika},
  volume={105},
  number={4},
  pages={769--782},
  year={2018},
  publisher={Oxford University Press}
}

@article{mai2019iterative,
  title={An iterative penalized least squares approach to sparse canonical correlation analysis},
  author={Mai, Qing and Zhang, Xin},
  journal={Biometrics},
  volume={75},
  number={3},
  pages={734--744},
  year={2019},
  publisher={Wiley Online Library}
}

@article{cook1994reweighting,
  title={Reweighting to achieve elliptically contoured covariates in regression},
  author={Cook, R Dennis and Nachtsheim, Christopher J},
  journal={Journal of the American Statistical Association},
  volume={89},
  number={426},
  pages={592--599},
  year={1994},
  publisher={Taylor \& Francis}
}

@article{ma2012semiparametric,
  title={A semiparametric approach to dimension reduction},
  author={Ma, Yanyuan and Zhu, Liping},
  journal={Journal of the American Statistical Association},
  volume={107},
  number={497},
  pages={168--179},
  year={2012},
  publisher={Taylor \& Francis}
}

@article{TWLZ18,
	title={Sparse Generalized Eigenvalue Problem: Optimal Statistical Rates via Truncated Rayleigh Flow},
	author={Tan, K. and Wang, Z. and Liu, H. and Zhang, T.},
	journal={Journal of the Royal Statistical Society: Series B},
	volume={80},
	number={5},
	pages={1057--1086},
	year={2018}
}

@article{LZL18,
	title={ On consistency and sparsity for sliced inverse regression in high dimension},
	author={Lin, Q. and Zhao, Z. and Liu, J.},
	journal={Annals of Statistics},
	volume={46},
	number={2},
	pages={580--610},
	year={2018}
}

@article{LLHL20,
	title={ On the optimality of sliced inverse regression in high dimensions},
	author={Lin, Q. and Li, X. and Huang, D. and Liu, J.},
	journal={Annals of Statistics, to appear},
	year={2020}
}

@article{cruz2014fast,
  title={Fast regularized canonical correlation analysis},
  author={Cruz-Cano, Raul and Lee, Mei-Ling Ting},
  journal={Computational Statistics \& Data Analysis},
  volume={70},
  pages={88--100},
  year={2014},
  publisher={Elsevier}
}

@article{NCI60,
  title={The NCI60 human tumour cell line anticancer drug screen},
  author={Shoemaker, Robert H},
  journal={Nature Reviews Cancer},
  volume={6},
  number={10},
  pages={813--823},
  year={2006},
  publisher={Nature Publishing Group}
}

@article{LZL19,
  title={Sparse Sliced Inverse Regression Via Lasso},
  author={Lin, Qian and Zhao, Zhigen and Liu, Jun S},
  journal={Journal of the American Statistical Association},
  volume={114},
  pages={1726--1739},
  year={2019},
  publisher={Taylor \& Francis}
}

@article{PR,
  title={On a projective resampling method for dimension reduction with multivariate responses},
  author={Li, Bing and Wen, Songqiao and Zhu, Lixing},
  journal={Journal of the American Statistical Association},
  volume={103},
  number={483},
  pages={1177--1186},
  year={2008},
  publisher={Taylor \& Francis}
}

@article{yuan2013truncated,
  title={Truncated power method for sparse eigenvalue problems},
  author={Yuan, Xiao-Tong and Zhang, Tong},
  journal={Journal of Machine Learning Research},
  volume={14},
  number={Apr},
  pages={899--925},
  year={2013}
}

@article{tan2020sparse,
  title={Sparse SIR: Optimal rates and adaptive estimation},
  author={Tan, Kai and Shi, Lei and Yu, Zhou},
  journal={The Annals of Statistics},
  volume={48},
  number={1},
  pages={64--85},
  year={2020},
  publisher={Institute of Mathematical Statistics}
}

@article{li2007sparse,
  title={Sparse sufficient dimension reduction},
  author={Li, Lexin},
  journal={Biometrika},
  volume={94},
  number={3},
  pages={603--613},
  year={2007},
  publisher={Oxford University Press}
}

@article{chen2010coordinate,
  title={Coordinate-independent sparse sufficient dimension reduction and variable selection},
  author={Chen, Xin and Zou, Changliang and Cook, R Dennis and others},
  journal={The Annals of Statistics},
  volume={38},
  number={6},
  pages={3696--3723},
  year={2010},
  publisher={Institute of Mathematical Statistics}
}

@article{ZN95,
	Author = {Zhu, L. and Ng, K. W.},
	Date-Added = {2014-03-12 03:05:29 +0000},
	Date-Modified = {2014-03-12 03:06:11 +0000},
	Journal = {Statist. Sinica},
	Pages = {727--736},
	Title = {Asymptotics of sliced inverse regression},
	Volume = {5},
	Year = {1995}}

@article{HC92,
	Author = {Hsing, T. and Carroll, R.},
	Date-Added = {2014-03-12 03:01:22 +0000},
	Date-Modified = {2014-03-12 03:02:23 +0000},
	Journal = {Ann. Statist},
	Pages = {1040--1061},
	Title = {An asymptotic theory for sliced inverse regression},
	Volume = {20},
	Year = {1992}}

@article{CZ13,
	Author = {Cook, R. D. and Zhang, X.},
	Journal = {J. Amer. Statist. Assoc.},
	Title = {Fused estimators of the central subspace in sufficient dimension reduction},
	Volume = {109},
	Pages={815--827},
	Year = {2014}}

@article{zhu2010dimension,
  title={Dimension reduction in regressions through cumulative slicing estimation},
  author={Zhu, Li-Ping and Zhu, Li-Xing and Feng, Zheng-Hui},
  journal={Journal of the American Statistical Association},
  volume={105},
  number={492},
  pages={1455--1466},
  year={2010},
  publisher={Taylor \& Francis}
}

@article{LS18,
	Author = {Lee, C. E. and Shao, X.},
	Journal = {J. Amer. Statist. Assoc.},
	Title = {Martingale difference divergence matrix and its application to dimension reduction for stationary multivariate time series},
	Volume = {113},
	Pages={216--229},
	Year = {2018}}

@article{SZ14,
Title = {Martingale difference correlation and its use in high dimensional variable screening},
	Author = {Shao, X. and Zhang, J.},
	Journal = {J. Amer. Statist. Assoc.},
	Volume = {109},
	Pages={1302--1318},
	Year = {2014}}

@article{ZLS20,	
Title = {Envelopes in Multivariate Regression Models with Nonlinearity and Heteroscedasticity},
Author = {Zhang, X. and Lee, C. E. and Shao, X.},
	Journal = {Biometrika},
	Volume = {107},
	Pages={965--981},
	Year = {2020}, 
	publisher={Taylor \& Francis}
	
	}

@article{CF09,
	Author = {Cook, R. D. and Forzani, L.},
	Journal = {Statistical Science},
	Title = {Principal fitted components for dimension reduction
	in regression},
	Volume = {485},
	Pages={485--501},
	Year = {2009}}

@article{zou2006sparse,
  title={Sparse principal component analysis},
  author={Zou, Hui and Hastie, Trevor and Tibshirani, Robert},
  journal={Journal of computational and graphical statistics},
  volume={15},
  number={2},
  pages={265--286},
  year={2006},
  publisher={Taylor \& Francis}
}

@article{witten2009penalized,
  title={A penalized matrix decomposition, with applications to sparse principal components and canonical correlation analysis},
  author={Witten, Daniela M and Tibshirani, Robert and Hastie, Trevor},
  journal={Biostatistics},
  volume={10},
  number={3},
  pages={515--534},
  year={2009},
  publisher={Oxford University Press}
}

@article{vu2013minimax,
  title={Minimax sparse principal subspace estimation in high dimensions},
  author={Vu, Vincent Q and Lei, Jing},
  journal={Annals of Statistics},
  year={2013},
  volume={41},
  pages={2905--2947}
}

@article{cai2013sparse,
  title={Sparse PCA: Optimal rates and adaptive estimation},
  author={Cai, T Tony and Ma, Zongming and Wu, Yihong},
  journal={Annals of Statistics},
  volume={41},
  pages={3074--3110},
  year={2013}
}

@article{C07,
	Author = {Cook, R. D.},
	Journal = {Statistical Science},
	Title = { Fisher lecture: Dimension reduction in regression (with discussion)},
	Volume = {22},
	Pages={1--26},
	Year = {2007}}

@book{S80,
	Author = {Serfling, R. J.},
Title = { Approximation Theorems of Mathematical Statistics},
publisher = {Wiley Series in Probability and Mathematical Statistics},
Year = {1980}}

@book{Li2018,
	Author = {Li, B.},
Title = { Sufficient Dimension Reduction, Methods and Applications with R},
publisher = {CRC Press},
Year = {2018}



 }
\bibliographystyle{agsm}

\vskip .65cm
\noindent
Qing Mai, Florida State University
\vskip 2pt
\noindent
E-mail: qmai@fsu.edu
\vskip 2pt

\noindent
Xiaofeng Shao, University of Illinois at Urbana-Champaign
\vskip 2pt
\noindent
E-mail: xshao@illinois.edu

\noindent
Runmin Wang, Texas A\&M University
\vskip 2pt
\noindent
E-mail: runminw@tamu.edu

\noindent
Xin Zhang, Florida State University
\vskip 2pt
\noindent
E-mail: xzhang8@fsu.edu

\clearpage
\newpage


\global\long\def\theequation{S\arabic{equation}}%
\global\long\def\thetheorem{S\arabic{theorem}}%
\global\long\def\thesection{S\arabic{section}}%
\global\long\def\thetable{S\arabic{table}}%
\global\long\def\thefigure{S\arabic{figure}}%
\renewcommand{\thelemma}{S\arabic{lemma}}
\renewcommand{\theproposition}{S\arabic{proposition}}

\renewcommand{\theassumption}{S\arabic{assumption}}

\setcounter{equation}{0}
\setcounter{section}{0}
\setcounter{figure}{0}
\setcounter{table}{0}

\renewcommand{\baselinestretch}{1.5}
\fontsize{12}{14.5}\selectfont

\title{{Supplementary Materials for ``Slicing-free Inverse Regression in  High-Dimensional Sufficient Dimension Reduction''}}
\author{Qing Mai, Xiaofeng Shao, Runmin Wang and Xin Zhang}
\maketitle

In the supplement, we present some additional simulation results in Section~\ref{sec:simulation}, additional real data analysis results in Section~\ref{sec:realdata},  some discussion of computational complexity in Section~\ref{sec:computation}, 
proofs of Proposition~\ref{prop1} \& Lemma~\ref{sub}
in Section~\ref{sec:proplemma}, proofs of Theorem~\ref{th:main} in Section~\ref{sec:theorem1}, proofs of Theorem~\ref{consistency} \& \ref{consistency.GEP} in Section~\ref{sec:consistency.proof}. 

	\section{Additional Simulation Results} \label{sec:simulation} 
 In this section, we shall present all simulation results for models described in Section \ref{sec:numerical}, except the ones which have already been presented in the main paper. Specifically, Table \ref{tab:single1} and \ref{tab:single2} contain the results for single index models ($\mathcal{M}_1$ and 
 $\mathcal{M}_2$), Table \ref{tab:multi1} and \ref{tab:multi2} contain results for multiple index models ($\mathcal{M}_3$ - $\mathcal{M}_5$). Results for PFC model (univariate response) $\mathcal{M}_6$ are summarized in Table \ref{tab:pfc}, and we gather the rest of 
 results for models with multivariate response variables ($\mathcal{M}_7$-$\mathcal{M}_9$) in Table \ref{tab:MRM}.

 The patterns are similar to those presented in the paper. Overall the newly proposed method outperforms the competitors in most scenarios, especially when the dimensionality is significantly larger than the sample size (i.e., high-dimensional setting). It is also observed that SIR-based methods are rather sensitive to the choice of number of slices, whereas our method is slicing-free and  is thus easier to use in practice.
	
	\begin{table}[h!]
	\resizebox{\textwidth}{!}{
		
		\begin{tabular}{ccccc|cc|cc|cc|cc|cc|cc}
			\hline
			\multicolumn{3}{c}{}&  \multicolumn{2}{c}{MDDM}                           & \multicolumn{2}{c}{Oracle-SIR(3)}                         & \multicolumn{2}{c}{Oracle-SIR(10)}                         & \multicolumn{2}{c}{Rifle-SIR(3)}                    & \multicolumn{2}{c}{Rifle-SIR(10)}                                       & \multicolumn{2}{c}{LassoSIR(3)}                   & \multicolumn{2}{c}{LassoSIR(10)}\\
			
			\multicolumn{3}{c}{}    & \multicolumn{1}{c}{Error} & \multicolumn{1}{c}{SE} & \multicolumn{1}{c}{Error} & \multicolumn{1}{c}{SE} & \multicolumn{1}{c}{Error} & \multicolumn{1}{c}{SE} &  \multicolumn{1}{c}{Error} & \multicolumn{1}{c}{SE} & \multicolumn{1}{c}{Error} & \multicolumn{1}{c}{SE} & \multicolumn{1}{c}{Error} & \multicolumn{1}{c}{SE} & \multicolumn{1}{c}{Error} & \multicolumn{1}{c}{SE} \\\cline{2-17}
			\multirow{15}{*}{$\mathcal{M}_1$} & \multirow{5}{*}{n = 200} & p = 200 &10.3 & 0.1 & 12.6 & 0.1 & 10.5 & 0.1 & 16.4 & 0.6 & 30.0 & 1.2 & 28.5 & 0.2 & 30.6 & 0.3 \\ 
			&                          & p = 500 &10.4 & 0.1 & 12.7 & 0.1 & 10.6 & 0.1 & 22.1 & 0.9 & 47.1 & 1.4 & 34.6 & 0.3 & 47.7 & 0.6\\
			&                          & p = 800 &10.1 & 0.1 & 12.5 & 0.1 & 10.3 & 0.1 & 25.2 & 1.0 & 53.7 & 1.4 & 37.9 & 0.4 & 59.9 & 0.7\\
   &                          & p = 1200 &10.0 & 0.1 & 12.4 & 0.1 & 10.3 & 0.1 & 27.5 & 1.0 & 54.3 & 1.4 & 42.4 & 0.5 & 71.4 & 0.7  \\ 
   &                          & p = 2000 & 10.1 & 0.1 & 12.6 & 0.1 & 10.5 & 0.1 & 29.7 & 1.1 & 63.4 & 1.4 & 48.4 & 0.6 & 81.7 & 0.6\\ 
   \cline{4-17}
			& \multirow{5}{*}{n = 500} & p = 200 &6.3 & 0.1 & 7.6 & 0.1 & 6.3 & 0.1 & 8.9 & 0.4 & 12.1 & 0.7 & 15.0 & 0.1 & 12.8 & 0.1 \\
			&                          & p = 500 &6.4 & 0.1 & 7.9 & 0.1 & 6.5 & 0.1 & 11.8 & 0.6 & 21.6 & 1.1 & 16.3 & 0.1 & 14.6 & 0.1 \\ 
			&                          & p = 800 &6.2 & 0.1 & 7.7 & 0.1 & 6.4 & 0.1 & 13.2 & 0.7 & 26.1 & 1.2 & 16.6 & 0.1 & 16.2 & 0.2 \\
   	&                          & p = 1200 &6.4 & 0.1 & 7.6 & 0.1 & 6.4 & 0.1 & 13.4 & 0.7 & 30.3 & 1.3 & 17.3 & 0.2 & 17.9 & 0.2\\
    	&                          & p = 2000 & 6.3 & 0.1 & 7.7 & 0.1 & 6.3 & 0.1 & 14.9 & 0.8 & 32.9 & 1.3 & 18.3 & 0.2 & 21.8 & 0.3\\\cline{4-17}
			& \multirow{5}{*}{n =800}  & p = 200 &5.0 & 0.1 & 6.0 & 0.1 & 5.0 & 0.1 & 7.4 & 0.4 & 8.7 & 0.6 & 11.1 & 0.1 & 9.3 & 0.1\\
			&                          & p = 500 & 5.2 & 0.1 & 6.1 & 0.1 & 5.1 & 0.1 & 8.3 & 0.4 & 12.9 & 0.8 & 11.9 & 0.1 & 10.1 & 0.1\\ 
			&                          & p = 800 &5.1 & 0.1 & 6.1 & 0.1 & 5.1 & 0.1 & 9.5 & 0.6 & 20.1 & 1.1 & 12.4 & 0.1 & 11.1 & 0.1\\
   	&                          & p = 1200 &5.1 & 0.1 & 6.1 & 0.1 & 5.1 & 0.1 & 9.5 & 0.6 & 20.1 & 1.1 & 12.4 & 0.1 & 11.1 & 0.1\\
    	&                          & p = 2000 & 4.9 & 0.1 & 6.1 & 0.1 & 5.0 & 0.1 & 10.6 & 0.6 & 23.1 & 1.2 & 12.8 & 0.1 & 12.2 & 0.1\\\cline{2-17}
			\multirow{15}{*}{$\mathcal{M}_2$} & \multirow{5}{*}{n = 200} & p = 200 &10.4 & 0.1 & 13.1 & 0.1 & 10.7 & 0.1 & 17.5 & 0.6 & 29.7 & 1.2 & 30.3 & 0.2 & 31.6 & 0.3 \\ 
			&                          & p = 500 &10.6 & 0.1 & 13.3 & 0.1 & 10.8 & 0.1 & 23.8 & 0.9 & 48.9 & 1.4 & 36.7 & 0.3 & 49.9 & 0.6\\ 
			&                          & p = 800 &10.3 & 0.1 & 13.1 & 0.1 & 10.6 & 0.1 & 26.1 & 1.0 & 54.7 & 1.4 & 40.1 & 0.4 & 61.5 & 0.7 \\
   && p = 1200 &54.7 & 0.8 & 12.9 & 0.1 & 10.4 & 0.1 & 74.4 & 0.8 & 95.1 & 0.4 & 45.0 & 0.5 & 71.4 & 0.7 \\ 

   && p = 2000 &55.3 & 0.8 & 13.1 & 0.1 & 10.6 & 0.1 & 76.5 & 0.7 & 96.8 & 0.3 & 51.2 & 0.6 & 82.7 & 0.6\\ 
\cline{4-17}
			& \multirow{5}{*}{n = 500} & p = 200 &6.4 & 0.1 & 8.0 & 0.1 & 6.5 & 0.1 & 9.7 & 0.4 & 12.0 & 0.7 & 15.8 & 0.1 & 13.4 & 0.1 \\ 
			&                          & p = 500 &6.7 & 0.1 & 8.2 & 0.1 & 6.6 & 0.1 & 12.4 & 0.6 & 21.2 & 1.1 & 17.1 & 0.1 & 15.1 & 0.1 \\ 
			&                          & p = 800 &6.5 & 0.1 & 8.0 & 0.1 & 6.5 & 0.1 & 13.6 & 0.7 & 25.1 & 1.2 & 17.6 & 0.2 & 16.7 & 0.2 \\
   			&                          & p = 1200 &17.1 & 0.7 & 8.0 & 0.1 & 6.5 & 0.1 & 37.4 & 1.1 & 74.6 & 1.0 & 18.2 & 0.2 & 18.3 & 0.2\\ 
			&                          & p = 2000 &16.9 & 0.8 & 8.0 & 0.1 & 6.5 & 0.1 & 38.3 & 1.2 & 77.4 & 0.9 & 19.0 & 0.2 & 22.2 & 0.3\\ 
\cline{4-17}
			& \multirow{5}{*}{n =800}  & p = 200 &5.1 & 0.1 & 6.3 & 0.1 & 5.1 & 0.1 & 6.9 & 0.3 & 8.3 & 0.5 & 11.6 & 0.1 & 9.6 & 0.1\\ 
			&                          & p = 500 &5.3 & 0.1 & 6.4 & 0.1 & 5.2 & 0.1 & 7.8 & 0.4 & 13.1 & 0.8 & 12.5 & 0.1 & 10.5 & 0.1\\
			&                          & p = 800 & 5.0 & 0.1 & 6.3 & 0.1 & 5.1 & 0.1 & 8.4 & 0.4 & 16.9 & 1.0 & 12.6 & 0.1 & 10.8 & 0.1 \\
   			&                          & p = 1200 &10.8 & 0.6 & 6.4 & 0.1 & 5.2 & 0.1 & 23.9 & 1.0 & 53.9 & 1.3 & 13.1 & 0.1 & 11.4 & 0.1 \\
			&                          & p = 2000 &11.3 & 0.6 & 6.3 & 0.1 & 5.1 & 0.1 & 26.6 & 1.1 & 57.6 & 1.2 & 13.6 & 0.1 & 12.4 & 0.1 \\\hline
		\end{tabular}
	}
	\caption{$d(V,\hat{V})$ and corresponding standard errors (in $10^{-2}$) for single index models (identity variance) \label{tab:single1}}
\end{table}

\begin{table}[h!]
	\resizebox{\textwidth}{!}{
		
		\begin{tabular}{ccccc|cc|cc|cc|cc|cc|cc}
			\hline
			\multicolumn{3}{c}{}&  \multicolumn{2}{c}{MDDM}                           & \multicolumn{2}{c}{Oracle-SIR(3)}                         & \multicolumn{2}{c}{Oracle-SIR(10)}                         & \multicolumn{2}{c}{Rifle-SIR(3)}                    & \multicolumn{2}{c}{Rifle-SIR(10)}                                       & \multicolumn{2}{c}{LassoSIR(3)}                   & \multicolumn{2}{c}{LassoSIR(10)}  \\
			
			\multicolumn{3}{c}{}    & \multicolumn{1}{c}{Error} & \multicolumn{1}{c}{SE} & \multicolumn{1}{c}{Error} & \multicolumn{1}{c}{SE} & \multicolumn{1}{c}{Error} & \multicolumn{1}{c}{SE} &  \multicolumn{1}{c}{Error} & \multicolumn{1}{c}{SE} & \multicolumn{1}{c}{Error} & \multicolumn{1}{c}{SE} & \multicolumn{1}{c}{Error} & \multicolumn{1}{c}{SE} & \multicolumn{1}{c}{Error} & \multicolumn{1}{c}{SE} \\\cline{2-17}
			\multirow{15}{*}{$\mathcal{M}_1$} & \multirow{5}{*}{n = 200} & p = 200 &17.6 & 0.2 & 20.8 & 0.2 & 17.8 & 0.2 & 26.2 & 0.5 & 25.3 & 0.7 & 32.1 & 0.3 & 28.4 & 0.3\\ 
			&                          & p = 500 &18.3 & 0.3 & 21.1 & 0.2 & 18.0 & 0.2 & 32.9 & 0.7 & 35.0 & 1.0 & 34.5 & 0.3 & 33.1 & 0.3 \\ 
			&                          & p = 800 &18.7 & 0.3 & 21.0 & 0.2 & 17.6 & 0.2 & 34.7 & 0.8 & 39.8 & 1.1 & 35.3 & 0.3 & 35.5 & 0.3 \\
   && p = 1200 &26.1 & 0.6 & 21.3 & 0.2 & 18.0 & 0.2 & 47.0 & 1.0 & 81.2 & 0.7 & 36.7 & 0.3 & 41.2 & 0.4  \\ 
   &  & p = 2000 &26.3 & 0.7 & 21.0 & 0.2 & 17.7 & 0.2 & 47.8 & 1.0 & 81.4 & 0.7 & 39.1 & 0.4 & 49.6 & 0.6 \\ \cline{4-17}
			& \multirow{5}{*}{n = 500} & p = 200 &10.8 & 0.1 & 13.2 & 0.1 & 11.0 & 0.1 & 13.7 & 0.2 & 12.7 & 0.4 & 18.6 & 0.2 & 15.3 & 0.1  \\ 
			&                          & p = 500 &11.0 & 0.1 & 13.3 & 0.1 & 11.2 & 0.1 & 14.3 & 0.3 & 15.7 & 0.6 & 19.5 & 0.2 & 16.5 & 0.2  \\ 
			&                          & p = 800 &10.9 & 0.1 & 13.5 & 0.1 & 11.1 & 0.1 & 15.1 & 0.4 & 18.4 & 0.8 & 20.1 & 0.2 & 17.1 & 0.2 \\
   &                          & p = 1200 &14.2 & 0.5 & 13.4 & 0.1 & 11.1 & 0.1 & 23.1 & 0.8 & 49.9 & 1.1 & 20.3 & 0.2 & 17.8 & 0.2\\
	
			&                          & p = 2000 &13.6 & 0.5 & 13.5 & 0.1 & 11.2 & 0.1 & 26.9 & 0.9 & 53.8 & 1.2 & 20.7 & 0.2 & 18.6 & 0.2\\\cline{4-17}
			& \multirow{5}{*}{n =800}  & p = 200 &8.8 & 0.1 & 10.7 & 0.1 & 8.9 & 0.1 & 10.6 & 0.1 & 9.7 & 0.3 & 14.5 & 0.1 & 11.9 & 0.1\\ 
			&                          & p = 500 &8.7 & 0.1 & 10.5 & 0.1 & 8.9 & 0.1 & 11.1 & 0.3 & 9.9 & 0.3 & 14.8 & 0.1 & 12.5 & 0.1\\ 
			&                          & p = 800 &8.6 & 0.1 & 10.5 & 0.1 & 8.8 & 0.1 & 11.7 & 0.4 & 12.4 & 0.6 & 15.0 & 0.1 & 12.5 & 0.1 \\
   		&                          &p = 1200 &10.2 & 0.4 & 10.4 & 0.1 & 8.8 & 0.1 & 18.3 & 0.8 & 37.9 & 1.1 & 15.1 & 0.1 & 12.9 & 0.1\\

			&                          & p = 2000 &11.1 & 0.5 & 10.6 & 0.1 & 8.8 & 0.1 & 20.5 & 0.8 & 38.2 & 1.1 & 15.8 & 0.1 & 13.4 & 0.1\\\cline{2-17}
			\multirow{15}{*}{$\mathcal{M}_2$} & \multirow{5}{*}{n = 200} & p = 200 &14.0 & 0.2 & 20.8 & 0.2 & 14.8 & 0.2 & 25.4 & 0.5 & 21.4 & 0.7 & 31.8 & 0.3 & 23.3 & 0.2  \\ 
			&                          & p = 500 &14.3 & 0.2 & 21.1 & 0.2 & 15.0 & 0.2 & 31.4 & 0.7 & 29.0 & 1.0 & 34.1 & 0.3 & 27.4 & 0.3 \\ 
			&                          & p = 800 &14.2 & 0.2 & 20.7 & 0.2 & 14.8 & 0.2 & 33.1 & 0.7 & 33.6 & 1.1 & 34.6 & 0.3 & 30.5 & 0.3 \\
   & & p = 1200 &23.9 & 0.7 & 21.1 & 0.2 & 14.9 & 0.2 & 45.1 & 1.0 & 75.9 & 0.9 & 36.9 & 0.3 & 34.9 & 0.4\\ 
		   &  & p = 2000 &24.9 & 0.8 & 20.6 & 0.2 & 14.6 & 0.2 & 47.9 & 1.0 & 77.2 & 0.9 & 38.6 & 0.4 & 42.5 & 0.5 \\ \cline{4-17}
			& \multirow{5}{*}{n = 500} & p = 200 &8.7 & 0.1 & 13.2 & 0.1 & 9.0 & 0.1 & 13.7 & 0.3 & 10.4 & 0.4 & 18.6 & 0.2 & 12.6 & 0.1 \\ 
			&                          & p = 500 &8.9 & 0.1 & 13.3 & 0.1 & 9.3 & 0.1 & 14.3 & 0.3 & 13.4 & 0.6 & 19.4 & 0.2 & 13.4 & 0.1 \\ 
			&                          & p = 800 &8.8 & 0.1 & 13.3 & 0.1 & 9.1 & 0.1 & 14.0 & 0.3 & 14.9 & 0.7 & 19.9 & 0.2 & 13.8 & 0.1 \\ 
   &                          & p = 1200 &12.8 & 0.5 & 13.3 & 0.1 & 9.2 & 0.1 & 24.3 & 0.8 & 45.9 & 1.2 & 20.1 & 0.2 & 14.6 & 0.1\\ 
						&                          & p = 2000 &12.5 & 0.5 & 13.5 & 0.1 & 9.3 & 0.1 & 27.0 & 0.9 & 49.7 & 1.2 & 20.6 & 0.2 & 15.1 & 0.1\\ 
\cline{4-17}
			& \multirow{5}{*}{n =800}  & p = 200 &7.1 & 0.1 & 10.6 & 0.1 & 7.4 & 0.1 & 10.6 & 0.1 & 7.9 & 0.2 & 14.4 & 0.1 & 9.8 & 0.1\\ 
			&                          & p = 500 &7.1 & 0.1 & 10.6 & 0.1 & 7.3 & 0.1 & 10.9 & 0.2 & 9.3 & 0.4 & 15.0 & 0.1 & 10.2 & 0.1\\ 
			&                          & p = 800 & 7.1 & 0.1 & 10.6 & 0.1 & 7.3 & 0.1 & 11.6 & 0.3 & 10.8 & 0.6 & 15.1 & 0.1 & 10.2 & 0.1 \\ 
   &                          & p = 1200 &8.8 & 0.4 & 10.4 & 0.1 & 7.2 & 0.1 & 17.5 & 0.7 & 32.8 & 1.1 & 15.0 & 0.1 & 10.5 & 0.1 \\

			&                          & p = 2000 &9.7 & 0.5 & 10.5 & 0.1 & 7.3 & 0.1 & 19.7 & 0.8 & 34.3 & 1.2 & 15.9 & 0.1 & 10.8 & 0.1  \\\hline
		\end{tabular}
	}
	\caption{$d(V,\hat{V})$ and corresponding standard errors (in $10^{-2}$) for single index models (AR-type variance) \label{tab:single2}}
\end{table}

\begin{table}[h!]
\centering
	\resizebox{0.85\textwidth}{!}{
		
		\begin{tabular}{ccccc|cc|cc|cc|cc|cc|cc}
			\hline
			\multicolumn{3}{c}{}&  \multicolumn{2}{c}{MDDM}                           & \multicolumn{2}{c}{Oracle-SIR(3)}                         & \multicolumn{2}{c}{Oracle-SIR(10)}                         & \multicolumn{2}{c}{Rifle-SIR(3)}                    & \multicolumn{2}{c}{Rifle-SIR(10)}                                       & \multicolumn{2}{c}{LassoSIR(3)}                   & \multicolumn{2}{c}{LassoSIR(10)}  \\
			
			\multicolumn{3}{c}{}    & \multicolumn{1}{c}{Error} & \multicolumn{1}{c}{SE} & \multicolumn{1}{c}{Error} & \multicolumn{1}{c}{SE} & \multicolumn{1}{c}{Error} & \multicolumn{1}{c}{SE} &  \multicolumn{1}{c}{Error} & \multicolumn{1}{c}{SE} & \multicolumn{1}{c}{Error} & \multicolumn{1}{c}{SE} & \multicolumn{1}{c}{Error} & \multicolumn{1}{c}{SE} & \multicolumn{1}{c}{Error} & \multicolumn{1}{c}{SE} \\\cline{2-17}
			\multirow{15}{*}{$\mathcal{M}_3$} & \multirow{5}{*}{n = 200} & p = 200 &17.5 & 0.2 & 40.6 & 0.2 & 27.7 & 0.2 & 71.3 & 0.0 & 71.2 & 0.0 & 69.6 & 0.2 & 67.0 & 0.3 \\
			&                          & p = 500 & 18.1 & 0.2 & 40.7 & 0.2 & 28.2 & 0.2 & 71.3 & 0.0 & 71.2 & 0.0 & 74.8 & 0.2 & 80.1 & 0.3 \\
			&                          & p = 800 &17.7 & 0.2 & 40.8 & 0.2 & 27.7 & 0.2 & 71.3 & 0.0 & 71.2 & 0.0 & 76.5 & 0.2 & 85.0 & 0.2\\
   & & p = 1200 &17.9 & 0.2 & 40.7 & 0.2 & 28.0 & 0.2 & 31.7 & 0.4 & 18.6 & 0.2 & 78.4 & 0.2 & 88.8 & 0.2  \\ 
			  &  & p = 2000 &18.1 & 0.2 & 40.8 & 0.2 & 28.0 & 0.2 & 32.1 & 0.4 & 18.9 & 0.2 & 80.6 & 0.2 & 92.5 & 0.2  \\ 
\cline{4-17}
			& \multirow{5}{*}{n = 500} & p = 200 &10.6 & 0.1 & 27.8 & 0.2 & 17.0 & 0.1 & 70.9 & 0.0 & 70.9 & 0.0 & 48.6 & 0.3 & 30.1 & 0.2 \\
			&                          & p = 500 &10.8 & 0.1 & 27.5 & 0.2 & 17.2 & 0.1 & 70.9 & 0.0 & 70.9 & 0.0 & 53.7 & 0.3 & 39.1 & 0.3\\
			&                          & p = 800 &10.8 & 0.1 & 27.4 & 0.2 & 17.3 & 0.1 & 70.9 & 0.0 & 70.9 & 0.0 & 57.1 & 0.2 & 46.0 & 0.4 \\
    &                          & p = 1200 &10.7 & 0.1 & 27.4 & 0.2 & 17.1 & 0.1 & 18.9 & 0.2 & 11.0 & 0.1 & 59.1 & 0.2 & 53.0 & 0.4 \\

			&                          & p = 2000 &10.7 & 0.1 & 27.6 & 0.2 & 17.1 & 0.1 & 19.0 & 0.2 & 11.2 & 0.1 & 62.1 & 0.2 & 60.7 & 0.4 \\\cline{4-17}
			& \multirow{5}{*}{n =800}  & p = 200 &8.2 & 0.1 & 22.0 & 0.1 & 13.5 & 0.1 & 70.8 & 0.0 & 70.8 & 0.0 & 36.0 & 0.2 & 20.8 & 0.1 \\
			&                          & p = 500 &8.3 & 0.1 & 21.9 & 0.1 & 13.5 & 0.1 & 70.8 & 0.0 & 70.8 & 0.0 & 40.4 & 0.2 & 24.0 & 0.2 \\
			&                          & p = 800 &8.2 & 0.1 & 21.9 & 0.1 & 13.3 & 0.1 & 70.8 & 0.0 & 70.8 & 0.0 & 42.2 & 0.3 & 26.2 & 0.2 \\
   &                          & p = 1200 &8.3 & 0.1 & 22.0 & 0.1 & 13.4 & 0.1 & 14.9 & 0.1 & 8.7 & 0.1 & 44.7 & 0.2 & 29.8 & 0.3 \\
			&                          & p = 2000 &8.3 & 0.1 & 22.2 & 0.1 & 13.5 & 0.1 & 14.8 & 0.1 & 8.6 & 0.1 & 47.9 & 0.3 & 36.4 & 0.4 \\
  \cline{2-17}
			\multirow{15}{*}{$\mathcal{M}_4$} & \multirow{5}{*}{n = 200} & p = 200 &23.1 & 0.2 & 46.2 & 0.3 & 36.3 & 0.3 & 72.0 & 0.0 & 71.6 & 0.0 & 78.1 & 0.2 & 78.2 & 0.3  \\
			&                          & p = 500 &22.8 & 0.2 & 45.8 & 0.3 & 35.9 & 0.3 & 72.1 & 0.0 & 71.6 & 0.0 & 83.0 & 0.2 & 87.8 & 0.2 \\
			&                          & p = 800 &23.0 & 0.2 & 45.8 & 0.3 & 36.4 & 0.3 & 71.9 & 0.0 & 71.6 & 0.0 & 85.2 & 0.2 & 91.5 & 0.2 \\
    & & p = 1200 &23.2 & 0.3 & 45.8 & 0.3 & 36.3 & 0.3 & 38.1 & 0.4 & 25.2 & 0.3 & 87.3 & 0.2 & 93.8 & 0.2\\ 
			
    &  & p = 2000 &23.1 & 0.3 & 45.7 & 0.3 & 36.2 & 0.3 & 54.0 & 0.5 & 34.1 & 0.5 & 89.2 & 0.2 & 95.9 & 0.1\\ \cline{4-17}
			& \multirow{5}{*}{n = 500} & p = 200 &13.4 & 0.1 & 31.0 & 0.2 & 21.7 & 0.1 & 71.2 & 0.0 & 71.0 & 0.0 & 55.3 & 0.3 & 43.2 & 0.3 \\
			&                          & p = 500 &13.5 & 0.1 & 31.0 & 0.2 & 21.6 & 0.1 & 71.2 & 0.0 & 71.0 & 0.0 & 61.3 & 0.3 & 55.3 & 0.4 \\
			&                          & p = 800 &13.4 & 0.1 & 31.0 & 0.2 & 21.8 & 0.1 & 71.2 & 0.0 & 71.0 & 0.0 & 64.7 & 0.2 & 62.8 & 0.4\\
   &                          & p = 1200 &13.4 & 0.1 & 30.8 & 0.2 & 21.6 & 0.1 & 21.5 & 0.2 & 14.3 & 0.1 & 67.2 & 0.2 & 67.8 & 0.3\\
			
			&                          & p = 2000 &13.5 & 0.1 & 31.0 & 0.2 & 21.8 & 0.1 & 22.0 & 0.2 & 14.4 & 0.1 & 69.7 & 0.2 & 73.3 & 0.3 \\
\cline{4-17}
			& \multirow{5}{*}{n =800}  & p = 200 &10.5 & 0.1 & 25.2 & 0.2 & 17.2 & 0.1 & 71.0 & 0.0 & 70.9 & 0.0 & 42.7 & 0.2 & 28.7 & 0.2 \\
			&                          & p = 500 &10.4 & 0.1 & 24.9 & 0.2 & 17.1 & 0.1 & 71.0 & 0.0 & 70.9 & 0.0 & 47.3 & 0.3 & 34.2 & 0.3\\
			&                          & p = 800 &10.5 & 0.1 & 25.1 & 0.2 & 17.1 & 0.1 & 71.0 & 0.0 & 70.9 & 0.0 & 50.2 & 0.3 & 39.9 & 0.4
			\\
   &                          & p = 1200 &10.3 & 0.1 & 24.9 & 0.2 & 17.0 & 0.1 & 16.9 & 0.2 & 11.0 & 0.1 & 52.4 & 0.3 & 45.8 & 0.4\\
			&                          & p = 2000 &10.6 & 0.1 & 25.2 & 0.2 & 17.2 & 0.1 & 17.3 & 0.2 & 11.3 & 0.1 & 56.5 & 0.3 & 53.8 & 0.4  \\\cline{2-17}
			\multirow{15}{*}{$\mathcal{M}_5$} & \multirow{5}{*}{n = 200} & p = 200 &30.6 & 0.6 & 29.1 & 0.1 & 22.0 & 0.1 & 71.6 & 0.0 & 71.2 & 0.0 & 58.8 & 0.3 & 56.6 & 0.3  \\
			&                          & p = 500 &30.4 & 0.6 & 28.9 & 0.2 & 22.2 & 0.1 & 71.5 & 0.0 & 71.2 & 0.0 & 67.4 & 0.3 & 73.5 & 0.4  \\
			&                          & p = 800 &30.8 & 0.6 & 28.8 & 0.2 & 22.1 & 0.1 & 71.6 & 0.0 & 71.2 & 0.0 & 71.2 & 0.3 & 81.3 & 0.3\\
   && p= 1200 &31.0 & 0.6 & 29.1 & 0.1 & 22.3 & 0.1 & 20.0 & 0.2 & 14.6 & 0.1 & 74.1 & 0.3 & 86.4 & 0.3\\ 

    & & p = 2000 &31.3 & 0.6 & 28.7 & 0.2 & 22.1 & 0.1 & 19.3 & 0.2 & 14.4 & 0.1 & 77.8 & 0.3 & 90.3 & 0.2\\ 
\cline{4-17}
			& \multirow{5}{*}{n = 500} & p = 200 &12.4 & 0.2 & 18.4 & 0.1 & 13.6 & 0.1 & 71.1 & 0.0 & 70.9 & 0.0 & 31.7 & 0.2 & 24.5 & 0.2\\
			&                          & p = 500 &11.8 & 0.2 & 18.4 & 0.1 & 13.6 & 0.1 & 71.0 & 0.0 & 70.9 & 0.0 & 35.9 & 0.2 & 29.2 & 0.2\\
			&                          & p = 800 &11.9 & 0.2 & 18.4 & 0.1 & 13.6 & 0.1 & 71.0 & 0.0 & 70.9 & 0.0 & 37.9 & 0.3 & 32.9 & 0.3 \\
   &                          & p = 1200 &11.6 & 0.2 & 18.4 & 0.1 & 13.7 & 0.1 & 12.1 & 0.1 & 8.6 & 0.1 & 39.7 & 0.3 & 37.3 & 0.3\\ 
			
			&                          & p = 2000 &12.0 & 0.2 & 18.5 & 0.1 & 13.6 & 0.1 & 12.1 & 0.1 & 8.6 & 0.1 & 42.1 & 0.3 & 46.3 & 0.4 \\
\cline{4-17}
			& \multirow{5}{*}{n = 800} & p = 200 &7.9  & 0.1 & 14.7 & 0.1 & 10.7 & 0.1 & 70.9 & 0.0 & 70.8 & 0.0 & 23.8 & 0.1 & 17.4 & 0.1 \\
			&                          & p = 500 &7.8  & 0.1 & 14.5 & 0.1 & 10.6 & 0.1 & 70.9 & 0.0 & 70.8 & 0.0 & 25.7 & 0.2 & 19.2 & 0.1\\
			&                          & p = 800 &7.8  & 0.1 & 14.6 & 0.1 & 10.7 & 0.1 & 70.9 & 0.0 & 70.8 & 0.0 & 27.1 & 0.2 & 20.7 & 0.2  \\
   &                          & p = 1200 &7.7 & 0.1 & 14.6 & 0.1 & 10.6 & 0.1 & 9.3 & 0.1 & 6.5 & 0.1 & 28.3 & 0.2 & 21.6 & 0.2 \\
			&                          & p = 2000  &7.9 & 0.1 & 14.4 & 0.1 & 10.7 & 0.1 & 9.4 & 0.1 & 6.6 & 0.1 & 29.9 & 0.2 & 25.0 & 0.2 \\\hline
		\end{tabular}
	}
	\caption{$d(V,\hat{V})$ and corresponding standard errors (in $10^{-2}$) for multiple index models (identity variance) \label{tab:multi1}}
\end{table}

\begin{table}[h!]
\centering
	\resizebox{0.85\textwidth}{!}{
		
		\begin{tabular}{ccccc|cc|cc|cc|cc|cc|cc}
			\hline
			\multicolumn{3}{c}{}&  \multicolumn{2}{c}{MDDM}                           & \multicolumn{2}{c}{Oracle-SIR(3)}                         & \multicolumn{2}{c}{Oracle-SIR(10)}                         & \multicolumn{2}{c}{Rifle-SIR(3)}                    & \multicolumn{2}{c}{Rifle-SIR(10)}                                       & \multicolumn{2}{c}{LassoSIR(3)}                   & \multicolumn{2}{c}{LassoSIR(10)}  \\
			
			\multicolumn{3}{c}{}    & \multicolumn{1}{c}{Error} & \multicolumn{1}{c}{SE} & \multicolumn{1}{c}{Error} & \multicolumn{1}{c}{SE} & \multicolumn{1}{c}{Error} & \multicolumn{1}{c}{SE} &  \multicolumn{1}{c}{Error} & \multicolumn{1}{c}{SE} & \multicolumn{1}{c}{Error} & \multicolumn{1}{c}{SE} & \multicolumn{1}{c}{Error} & \multicolumn{1}{c}{SE} & \multicolumn{1}{c}{Error} & \multicolumn{1}{c}{SE} \\\cline{2-17}
			\multirow{15}{*}{$\mathcal{M}_3$} & \multirow{5}{*}{n = 200} & p = 200 &41.4 & 0.4 & 58.8 & 0.2 & 50.2 & 0.2 & 72.6 & 0.0 & 72.2 & 0.0 & 67.0 & 0.2 & 62.8 & 0.2\\
			&                          & p = 500 & 42.1 & 0.4 & 59.0 & 0.2 & 50.4 & 0.2 & 72.9 & 0.1 & 72.4 & 0.0 & 70.3 & 0.2 & 71.1 & 0.2\\
			&                          & p = 800 &43.0 & 0.4 & 59.1 & 0.2 & 50.4 & 0.2 & 72.7 & 0.0 & 72.3 & 0.0 & 72.1 & 0.2 & 75.1 & 0.2\\
   && p = 1200 &42.7 & 0.4 & 59.4 & 0.2 & 50.8 & 0.2 & 53.0 & 0.4 & 39.9 & 0.4 & 73.0 & 0.2 & 78.6 & 0.2  \\ 

   & & p = 2000 &43.5 & 0.4 & 59.2 & 0.2 & 50.8 & 0.2 & 53.5 & 0.4 & 40.2 & 0.4 & 74.8 & 0.2 & 82.1 & 0.2\\ \cline{4-17}
			& \multirow{5}{*}{n = 500} & p = 200 &25.6 & 0.3 & 44.6 & 0.2 & 34.3 & 0.2 & 71.4 & 0.0 & 71.3 & 0.0 & 51.5 & 0.2 & 42.4 & 0.2\\
			&                          & p = 500 &25.8 & 0.3 & 44.4 & 0.2 & 34.5 & 0.2 & 71.5 & 0.0 & 71.3 & 0.0 & 53.8 & 0.2 & 45.7 & 0.2\\
			&                          & p = 800 &25.2 & 0.3 & 44.6 & 0.2 & 34.1 & 0.2 & 71.5 & 0.0 & 71.3 & 0.0 & 54.8 & 0.2 & 47.1 & 0.3\\
   &                          & p = 1200 &25.9 & 0.3 & 44.4 & 0.2 & 34.4 & 0.2 & 33.8 & 0.4 & 23.8 & 0.2 & 55.8 & 0.2 & 50.1 & 0.3  \\

			&                          & p = 2000 &25.7 & 0.3 & 44.3 & 0.2 & 34.4 & 0.2 & 35.0 & 0.4 & 23.8 & 0.2 & 56.9 & 0.2 & 54.0 & 0.3\\ \cline{4-17}
			& \multirow{5}{*}{n =800}  & p = 200 &20.1 & 0.2 & 37.0 & 0.2 & 27.8 & 0.2 & 71.2 & 0.0 & 71.0 & 0.0 & 43.8 & 0.2 & 34.9 & 0.2\\
			&                          & p = 500 &19.8 & 0.2 & 37.8 & 0.2 & 27.6 & 0.2 & 71.2 & 0.0 & 71.1 & 0.0 & 46.1 & 0.2 & 36.6 & 0.2 \\
			&                          & p = 800 &20.0 & 0.2 & 37.3 & 0.2 & 27.9 & 0.2 & 71.2 & 0.0 & 71.0 & 0.0 & 47.0 & 0.2 & 37.9 & 0.2\\
   &                          & p = 1200 &19.8 & 0.2 & 37.1 & 0.2 & 27.8 & 0.2 & 26.7 & 0.3 & 18.5 & 0.2 & 47.6 & 0.2 & 38.6 & 0.2 \\
			&                          & p = 2000 &19.9 & 0.2 & 37.4 & 0.2 & 27.7 & 0.2 & 27.5 & 0.3 & 18.7 & 0.2 & 48.8 & 0.2 & 40.7 & 0.2  \\\cline{2-17}
			\multirow{15}{*}{$\mathcal{M}_4$} & \multirow{5}{*}{n = 200} & p = 200 &58.1 & 0.4 & 75.9 & 0.2 & 70.1 & 0.3 & 80.4 & 0.2 & 78.2 & 0.2 & 85.7 & 0.2 & 84.6 & 0.2 \\
			&                          & p = 500 &59.3 & 0.5 & 75.8 & 0.2 & 70.2 & 0.3 & 95.2 & 0.1 & 94.2 & 0.1 & 88.8 & 0.2 & 90.2 & 0.2\\
			&                          & p = 800 &59.1 & 0.5 & 75.1 & 0.2 & 69.9 & 0.3 & 81.0 & 0.2 & 78.7 & 0.2 & 89.7 & 0.2 & 92.1 & 0.2\\
   & & p = 1200 &59.5 & 0.5 & 75.3 & 0.2 & 69.9 & 0.3 & 73.9 & 0.3 & 62.3 & 0.5 & 90.5 & 0.2 & 93.9 & 0.2\\ 

   && p = 2000 &60.4 & 0.5 & 75.4 & 0.2 & 69.9 & 0.3 & 74.1 & 0.4 & 63.0 & 0.5 & 91.9 & 0.2 & 95.2 & 0.1\\ 
\cline{4-17}
			& \multirow{5}{*}{n = 500} & p = 200 &38.6 & 0.4 & 61.4 & 0.2 & 50.9 & 0.3 & 74.5 & 0.1 & 73.5 & 0.1 & 70.5 & 0.2 & 63.0 & 0.3 \\
			&                          & p = 500 &38.7 & 0.4 & 61.6 & 0.2 & 51.1 & 0.3 & 74.7 & 0.1 & 73.4 & 0.1 & 74.0 & 0.2 & 69.0 & 0.3\\
			&                          & p = 800 &39.3 & 0.4 & 61.3 & 0.2 & 50.8 & 0.2 & 77.5 & 0.2 & 74.9 & 0.1 & 75.6 & 0.2 & 72.5 & 0.3\\
   &                          & p = 1200 &39.5 & 0.4 & 61.3 & 0.2 & 51.0 & 0.2 & 54.9 & 0.4 & 40.2 & 0.4 & 77.1 & 0.2 & 76.0 & 0.3\\

			&                          & p = 2000 &39.7 & 0.4 & 61.5 & 0.2 & 51.3 & 0.3 & 56.1 & 0.4 & 40.6 & 0.4 & 78.7 & 0.2 & 79.7 & 0.3\\ 
\cline{4-17}
			& \multirow{5}{*}{n =800}  & p = 200 &30.7 & 0.3 & 53.1 & 0.2 & 42.2 & 0.2 & 73.1 & 0.1 & 72.4 & 0.0 & 61.3 & 0.2 & 51.4 & 0.3\\
			&                          & p = 500 &30.5 & 0.3 & 52.9 & 0.2 & 42.0 & 0.2 & 73.1 & 0.1 & 72.4 & 0.0 & 64.1 & 0.2 & 54.4 & 0.3\\
			&                          & p = 800 &30.6 & 0.3 & 53.7 & 0.2 & 42.3 & 0.2 & 73.4 & 0.1 & 72.5 & 0.0 & 65.9 & 0.2 & 58.2 & 0.3
			\\
   	&                          & p = 1200 &30.8 & 0.3 & 53.5 & 0.2 & 42.2 & 0.2 & 45.3 & 0.4 & 31.5 & 0.3 & 67.2 & 0.2 & 60.8 & 0.3\\
			&                          & p = 2000 &30.7 & 0.3 & 53.6 & 0.2 & 42.2 & 0.2 & 45.2 & 0.4 & 31.1 & 0.3 & 68.6 & 0.2 & 64.6 & 0.3  \\\cline{2-17}
			\multirow{15}{*}{$\mathcal{M}_5$} & \multirow{5}{*}{n = 200} & p = 200 &45.5 & 0.6 & 46.5 & 0.2 & 35.7 & 0.2 & 73.9 & 0.1 & 72.4 & 0.0 & 62.6 & 0.2 & 51.8 & 0.2 \\
			&                          & p = 500 &46.9 & 0.6 & 46.9 & 0.2 & 35.8 & 0.2 & 73.9 & 0.1 & 72.4 & 0.0 & 65.6 & 0.2 & 57.4 & 0.3 \\
			&                          & p = 800 &46.2 & 0.6 & 46.4 & 0.2 & 35.5 & 0.2 & 73.8 & 0.1 & 72.4 & 0.0 & 66.5 & 0.2 & 61.4 & 0.3 \\
   && p = 1200 &46.9 & 0.6 & 46.3 & 0.2 & 35.8 & 0.2 & 33.0 & 0.3 & 23.9 & 0.2 & 67.2 & 0.3 & 65.7 & 0.3\\ 
   && p = 2000 &46.1 & 0.6 & 46.5 & 0.2 & 35.7 & 0.2 & 33.6 & 0.3 & 23.8 & 0.2 & 68.8 & 0.3 & 72.0 & 0.3\\ 
\cline{4-17}
			& \multirow{5}{*}{n = 500} & p = 200 &24.7 & 0.4 & 31.4 & 0.2 & 22.6 & 0.1 & 72.0 & 0.0 & 71.3 & 0.0 & 44.8 & 0.2 & 32.6 & 0.1 \\
			&                          & p = 500 &24.7 & 0.4 & 31.1 & 0.2 & 22.4 & 0.1 & 71.9 & 0.0 & 71.3 & 0.0 & 46.8 & 0.2 & 35.4 & 0.2  \\
			&                          & p = 800 &25.4 & 0.4 & 31.3 & 0.2 & 22.9 & 0.1 & 72.0 & 0.0 & 71.3 & 0.0 & 48.0 & 0.2 & 36.7 & 0.2 \\
   			&                          & p = 1200 &25.2 & 0.4 & 31.5 & 0.2 & 22.7 & 0.1 & 20.8 & 0.2 & 14.3 & 0.1 & 48.2 & 0.2 & 37.6 & 0.2\\ 

			&                          & p = 2000 &25.5 & 0.4 & 31.7 & 0.2 & 22.8 & 0.1 & 20.9 & 0.2 & 14.4 & 0.1 & 49.1 & 0.2 & 39.5 & 0.2\\ 
\cline{4-17}
			& \multirow{5}{*}{n =800}  & p = 200 &18.7 & 0.3 & 25.6 & 0.1 & 18.0 & 0.1 & 71.5 & 0.0 & 71.1 & 0.0 & 36.7 & 0.1 & 25.6 & 0.1\\
			&                          & p = 500 &18.1 & 0.3 & 25.3 & 0.1 & 17.7 & 0.1 & 71.5 & 0.0 & 71.1 & 0.0 & 38.4 & 0.2 & 27.1 & 0.1 \\
			&                          & p = 800 &18.6 & 0.3 & 25.4 & 0.1 & 18.0 & 0.1 & 71.5 & 0.0 & 71.1 & 0.0 & 39.3 & 0.2 & 28.4 & 0.1 \\
   &                          & p = 1200 &18.7 & 0.3 & 25.3 & 0.1 & 17.9 & 0.1 & 16.3 & 0.1 & 11.0 & 0.1 & 40.2 & 0.2 & 29.2 & 0.1  \\

			&                          & p = 2000 &19.1 & 0.3 & 25.3 & 0.1 & 18.0 & 0.1 & 16.5 & 0.1 & 11.2 & 0.1 & 41.0 & 0.2 & 30.3 & 0.1  \\\hline
		\end{tabular}
	}
	\caption{$d(V,\hat{V})$ and corresponding standard errors (in $10^{-2}$) for multiple index models (AR-type variance)\label{tab:multi2} }
\end{table}

\begin{table}[h!]
	\resizebox{\textwidth}{!}{
		
		\begin{tabular}{ccccc|cc|cc|cc|cc|cc|cc}
			\hline
			\multicolumn{3}{c}{}&  \multicolumn{2}{c}{MDDM}                           & \multicolumn{2}{c}{Oracle-SIR(3)}                         & \multicolumn{2}{c}{Oracle-SIR(10)}                         & \multicolumn{2}{c}{Rifle-SIR(3)}                    & \multicolumn{2}{c}{Rifle-SIR(10)}                                       & \multicolumn{2}{c}{LassoSIR(3)}                   & \multicolumn{2}{c}{LassoSIR(10)}                                      \\
			
			\multicolumn{3}{c}{}    & \multicolumn{1}{c}{Error} & \multicolumn{1}{c}{SE} & \multicolumn{1}{c}{Error} & \multicolumn{1}{c}{SE} & \multicolumn{1}{c}{Error} & \multicolumn{1}{c}{SE} &  \multicolumn{1}{c}{Error} & \multicolumn{1}{c}{SE} & \multicolumn{1}{c}{Error} & \multicolumn{1}{c}{SE} & \multicolumn{1}{c}{Error} & \multicolumn{1}{c}{SE} & \multicolumn{1}{c}{Error} & \multicolumn{1}{c}{SE}  \\\cline{2-17}
			\multirow{15}{*}{$\mathcal{M}_6$} & \multirow{5}{*}{n = 200} & p = 200 &34.3&0.5&49.1&0.5&33.5&0.4&50.0&0.7&30.6&0.5&70.7&0.0&70.7&0.0 \\ 
			&                          & p = 500 &34.2&0.5&48.7&0.5&32.8&0.4&49.5&0.7&30.1&0.5&70.7&0.0&70.7&0.0\\ 
			&                          & p = 800 &34.6&0.6&48.9&0.5&33.4&0.5&50.1&0.7&30.8&0.6&70.7&0.0&70.7&0.0 \\
   && p = 1200 &44.5 & 0.7 & 49.3 & 0.5 & 33.6 & 0.5 & 67.0 & 0.7 & 38.9 & 0.7 & 70.7 & 0.0 & 70.7 & 0.0 \\ 

    & & p = 2000 &44.8 & 0.8 & 48.1 & 0.5 & 33.2 & 0.5 & 66.7 & 0.8 & 39.4 & 0.7 & 70.7 & 0.0 & 70.7 & 0.0\\ 
\cline{4-17}
			& \multirow{5}{*}{n = 500} & p = 200 &22.0&0.4&35.5&0.4&22.6&0.3&33.0&0.5&19.0&0.3&70.7&0.0&70.7&0.0\\ 
			&                          & p = 500 &21.8&0.4&34.7&0.4&22.3&0.3&32.4&0.5&18.8&0.4&70.7&0.0&70.7&0.0\\
			&                          & p = 800 &21.7&0.3&34.6&0.4&22.5&0.3&32.3&0.5&18.9&0.3&70.7&0.0&70.7&0.0\\
   &                          & p = 1200 &27.1 & 0.5 & 35.5 & 0.4 & 22.5 & 0.3 & 43.4 & 0.7 & 23.4 & 0.4 & 70.7 & 0.0 & 70.7 & 0.0 \\

			&                          & p = 2000 &26.9 & 0.5 & 34.8 & 0.4 & 22.4 & 0.3 & 42.3 & 0.7 & 23.6 & 0.5 & 70.7 & 0.0 & 70.7 & 0.0 \\\cline{4-17}
			& \multirow{5}{*}{n =800}  & p = 200 &17.2&0.3&29.2&0.3&18.6&0.3&25.8&0.4&14.9&0.2&70.7&0.0&70.7&0.0\\ 
			&                          & p = 500 &16.7&0.3&28.5&0.3&18.1&0.2&25.3&0.4&14.3&0.3&70.7&0.0&70.7&0.0\\
			&                          & p = 800 &16.5&0.2&28.4&0.3&17.9&0.2&25.0&0.4&14.1&0.2&70.7&0.0&70.7&0.0\\
   &                          & p =1200 &20.8 & 0.4 & 29.0 & 0.4 & 18.2 & 0.3 & 31.7 & 0.5 & 18.5 & 0.4 & 70.7 & 0.0 & 70.7 & 0.0 \\
			&                          & p = 2000 &21.2 & 0.4 & 28.7 & 0.4 & 18.3 & 0.3 & 32.4 & 0.6 & 18.8 & 0.3 & 70.7 & 0.0 & 70.7 & 0.0 \\\hline
		\end{tabular}
	}
	\caption{$d(V,\hat{V})$ and corresponding standard errors (in $10^{-2}$) for isotropic PFC models ($\mathcal{M}_6$) \label{tab:pfc}}
\end{table}

\begin{table}[h!]
	\centering
	\resizebox{\textwidth}{!}{
		\begin{tabular}{cccccc|cccc|cccc}
			\hline
			&        & \multicolumn{4}{c}{$n = 100$}        & \multicolumn{4}{c}{$n = 200$}  & \multicolumn{4}{c}{$n = 400$} \\  \cline{3-14}
			&        &\multicolumn{2}{c}{$p = 800$}&\multicolumn{2}{c}{$p = 1200$}&\multicolumn{2}{c}{$p = 800$}&\multicolumn{2}{c}{$p = 1200$}&\multicolumn{2}{c}{$p = 800$}&\multicolumn{2}{c}{$p = 1200$}\\
			&&Error&SE&Error&SE&Error&SE&Error&SE&Error&SE&Error&SE\\\cline{2-14}
			\multirow{6}{*}{$\mathcal{M}_7$}& MDDM   & 45.0 & 0.5 & 45.9 & 0.5 & 27.5 & 0.3 & 28.7 & 0.4 & 18.8 & 0.3 & 19.6 & 0.3\\  
			& PR-Oracle-SIR(3) &26.3 & 0.2 & 26.5 & 0.2 & 18.3 & 0.1 & 18.4 & 0.2 & 12.6 & 0.1 & 12.6 & 0.1\\
			& PR-Oracle-SIR(10) & 33.0 & 0.3 & 33.0 & 0.3 & 20.1 & 0.2 & 20.2 & 0.2 & 13.0 & 0.1 & 13.1 & 0.1\\
			& PR-SIR(3) & 96.6 & 0.0 & 97.7 & 0.0 & 93.2 & 0.0 & 95.3 & 0.0 & 87.6 & 0.0 & 91.1 & 0.0  \\  
			& PR-SIR(10) & 97.4 & 0.0 & 98.2 & 0.0 & 95.0 & 0.0 & 96.6 & 0.0 & 90.0 & 0.1 & 93.7 & 0.0  \\  \cline{1-14} 		
   			\multirow{6}{*}{$\mathcal{M}_8$}& MDDM   & 93.5 & 0.6 & 94.9 & 0.5 & 73.6 & 1.1 & 77.1 & 1.1 & 36.7 & 1.1 & 37.8 & 1.2 \\  
			& PR-Oracle-SIR(3) &80.1 & 0.6 & 80.1 & 0.6 & 64.0 & 0.7 & 64.7 & 0.7 & 42.5 & 0.5 & 41.1 & 0.5\\
			& PR-Oracle-SIR(10) & 79.1 & 0.6 & 78.9 & 0.6 & 58.4 & 0.6 & 58.7 & 0.6 & 34.5 & 0.4 & 34.2 & 0.4\\
			& PR-SIR(3) & 99.9 & 0.0 & 100.0 & 0.0 & 99.9 & 0.0 & 100.0 & 0.0 & 99.9 & 0.0 & 99.9 & 0.0\\  
			& PR-SIR(10) & 99.9 & 0.0 & 100.0 & 0.0 & 99.9 & 0.0 & 100.0 & 0.0 & 99.9 & 0.0 & 100.0 & 0.0 \\  \cline{1-14}
      			\multirow{6}{*}{$\mathcal{M}_9$}& MDDM   & 17.3 & 0.4 & 17.5 & 0.4 & 10.0 & 0.1 & 10.1 & 0.1 & 7.1 & 0.1 & 7.1 & 0.1\\  
			& PR-Oracle-SIR(3) & 15.5 & 0.2 & 15.2 & 0.2 & 10.6 & 0.1 & 10.6 & 0.1 & 7.5 & 0.1 & 7.5 & 0.1 \\
			& PR-Oracle-SIR(10) & 14.2 & 0.2 & 13.9 & 0.2 & 9.6 & 0.1 & 9.6 & 0.1 & 6.8 & 0.1 & 6.8 & 0.1\\
			& PR-SIR(3) & 92.2 & 0.1 & 95.0 & 0.1 & 83.0 & 0.1 & 88.3 & 0.1 & 71.0 & 0.1 & 77.9 & 0.1\\  
			& PR-SIR(10) & 90.0 & 0.1 & 93.4 & 0.1 & 80.1 & 0.1 & 85.8 & 0.1 & 67.4 & 0.1 & 74.7 & 0.1\\  \cline{1-14} 
	\end{tabular}}
	\caption{Averaged subspace estimation errors and the corresponding standard errors (after multiplied by $100$) for multivariate response models.} \label{tab:MRM}
\end{table}

\clearpage
\newpage

\section{More on Real Data Analysis}
\label{sec:realdata}
\subsection{Choice of Tuning Parameter on Real Data}\label{sec:tuning}

To apply our proposal on real data, we need to determine the tuning parameter, $s$, i.e, the desired level of sparsity. In penalized problems such as sparse PCA and sparse SIR, tuning parameters are often chosen with cross-validation. We could also employ cross-validation to choose s. However, as with almost any procedure, cross-validation would considerably slow down the computation. Moreover, as we observe in our theoretical and simulation studies, our method is not very sensitive to $s$; the result is reasonably stable as long as $s$ is larger than d. Hence, we resort to a faster tuning method on our real data as follows. We start with a sequence of reasonable sparsity levels $\mathcal{S}$, which is set to be $\{1,\ldots, 45\}$. Then for each element in $\mathcal{S}$, we calculate $\hat{\boldsymbol{\beta}}$ and the sample distance covariance \citep{szekely2007measuring} between $\mathbf{Y}_i$ and $\hat{\boldsymbol{\beta}}^T\mathbf{X}_i$ for all $i = 1,2,...,n$. Here distance covariance is used as a model-free measure of the dependence between $\mathbf{Y}_i$ and $\hat{\boldsymbol{\beta}^T}\mathbf{X}_i$. Intuitively, the distance covariance increases as the pre-specified sparsity increases. Therefore, we plot the sample distance covariance against the sparsity levels in Figure~\ref{fig:sparsity}, and pick the sparsity corresponding to a large enough distance covariance, while any larger sparsity levels will not lead to a significantly larger distance covariance (i.e. the ``elbow method''). Based on Figure~\ref{fig:sparsity}, a sparsity level between 20 and 25 seems to be reasonable, and we pick $s = 25$ as the pre-specified sparsity.

\begin{figure}[h!]
    \centering
    \includegraphics[scale = 0.6]{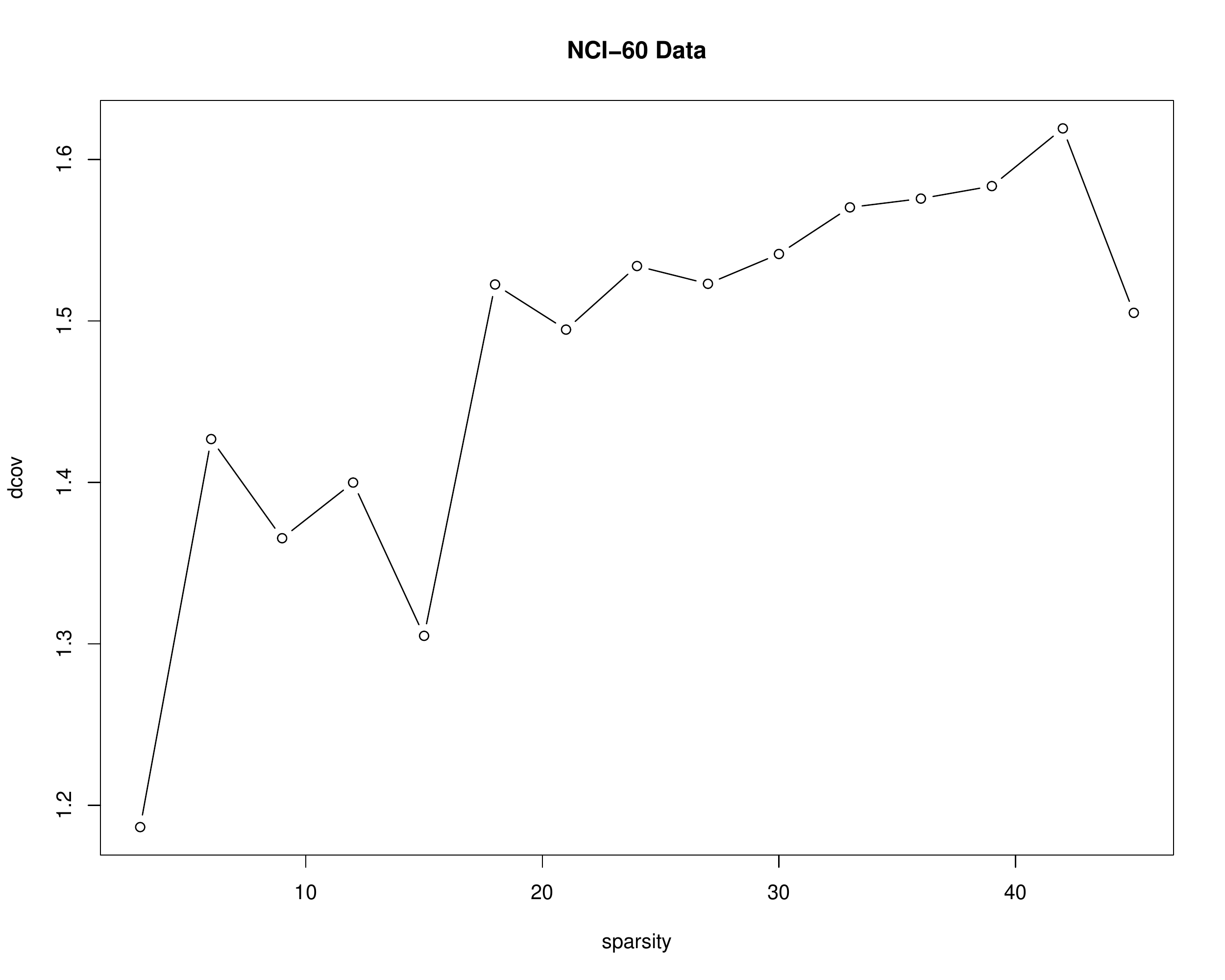}
    \caption{Distance covariance for different pre-specified sparsities.}
    \label{fig:sparsity}
\end{figure}

\subsection{Additional Real Data Analysis Results}

	To further demonstrate our methods on the real data, we also construct the scatterplot from the leading two directions $\bolbeta_1^{\T}\mbX$ and $\bolbeta_2^{\T}\mbX$ in Figure \ref{fig:b1b2}. For the first direction, the experimental units from leukemia cancer (LE) and colorectal cancer (CO) are on the different side of the plot. For the second direction, CO and LE have similar values whereas the units obtained from central nervous system cancer (CN), breast cancer (BR), lung cancer (LC) and ovarian cancer (OV) are on the opposite side of the figure. This pattern coincides with the first two canonical correlation analysis directions in a previous study of the same data set \citep[Figure 8,][]{cruz2014fast}. Such a finding is very encouraging as our slicing-free approach automatically detect the most significant associations between the response and the predictor, while directly applied to the multivariate response.

 	\begin{figure}[ht!]
		\centering
		\includegraphics[scale = 0.5]{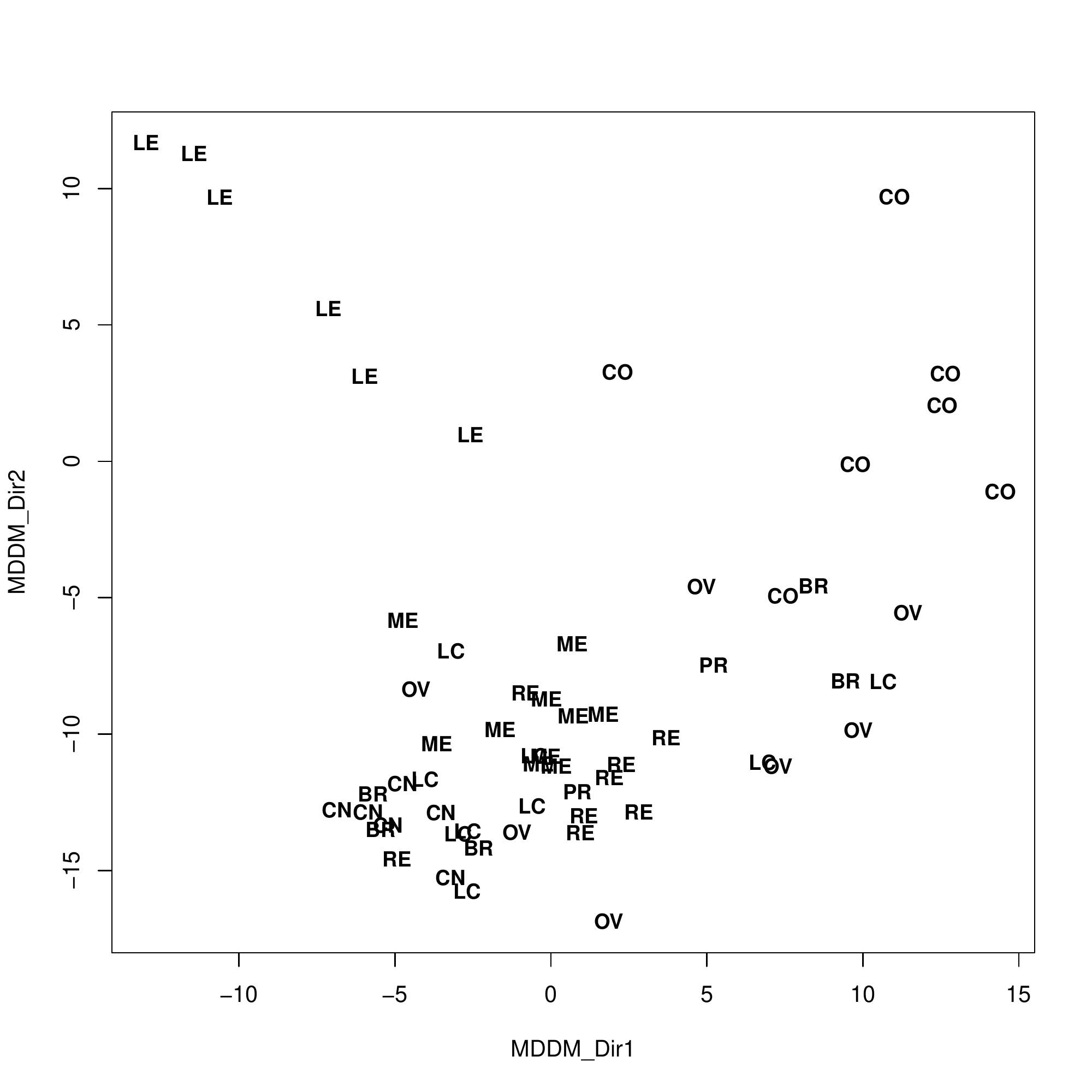}
		\caption{The scatter plot between the first leading directions.}\label{fig:b1b2}
	\end{figure}

\section{Computation complexity}
\label{sec:computation}

We briefly analyze the computational complexity of our proposed methods. For both algorithms, we need to compute the sample MDDM. The current computational complexity of sample MDDM  is $O(n^2p)$. If we adapt the fast computing algorithm of \citet{huo2016fast} developed for distance correlation to MDDM, we might be able to reduce the complexity to $O(pn\log n)$. In Algorithm~\ref{alg2}, we further need to compute the sample covariance at the complexity level of $O(np^2)$.

To apply the two algorithms, we assume that the maximum number of iterations is $T$ in finding the $K$ directions (Step 3(a) in both algorithms). After obtaining MDDM, Algorithm~\ref{alg1} has a computational complexity of $O(KT(ps+p)+(K-1)(s^2+ps))$, where $(ps+p)$ is the computation complexity of each iteration \citep{yuan2013truncated} and $(s^2+ps)$ is the computation complexity of deflating $\widehat{\mathbf{M}}_k$ in Step 3(b). Note that, in Step 3(b), we repeatedly exploit the sparsity of $\widehat\bolbeta_k$ to reduce the computation complexity. For example, to compute $\widehat\bolbeta_k\widehat\bolbeta_k^\T$, we only need to compute the $s^2$ nonzero elements, and the same applies to $\widehat\bolbeta_k^\T\widehat {\mathbf{M}}\widehat\bolbeta_k$, which has a computational complexity of $O(ps)$. For Algorithm~\ref{alg2}, the computational complexity is $O(KT(ps+p)+(K-1)(p^2+ps))$. Note that this computational complexity only differs from that of Algorithm~\ref{alg1} in the term $p^2$. This term is the cost to deflate $\widehat{\mathbf{M}}$ in Step 3(b). Since $\widehat\bSigma_{\bX}\widehat\bolbeta_k\widehat\bolbeta_k^\T\widehat\bSigma_{\bX}$ is not guaranteed to be sparse as $\widehat\bolbeta_k\widehat\bolbeta_k^\T$, the deflation is more computationally expensive. Otherwise, each iteration in Step 3(a) of Algorithm~\ref{alg2} has the same complexity as Step 3(a) of Algorithm~\ref{alg1} \citep{TWLZ18}.

\section{Proofs for Proposition~\ref{prop1} \& Lemma~\ref{sub}}
\label{sec:proplemma}

\begin{proof}[Proof of Proposition~\ref{prop1}] 

From the basic properties of MDDM (c.f. beginning of Section 1), $\E(\mbA^\T\mbX\mid Y)=\E(\mbA^\T\mbX)$
is equivalent to $\MDDM(\mbA^\T\mbX\mid Y)=0$. 

Suppose the rank of $\MDDM(\mbX\mid Y)$ is $d$, then there exists
an orthogonal basis matrix for $\mbbR^{p}$, $(\bolbeta,\bolbeta_{0})$
with $\bolbeta\in\mbbR^{p\times d}$ and $\bolbeta_{0}\in\mbbR^{p\times(p-d)}$,
such that $\spn(\bolbeta)=\spn\{\MDDM(\mbX\mid Y)\}$. This implies
$\bolbeta_{0}^\T\MDDM(\mbX\mid Y)=0$ and equivalently $\bolbeta_{0}^\T\{\E(\mbX\mid Y)-\E(\mbX)\}=0$.
Therefore, $\spn(\bolbeta_{0})\subseteq\calS_{\E(\mbX\mid Y)}^{\perp}$,
which leads to $\calS_{\E(\mbX\mid Y)}\equiv\spn\{\E(\mbX\mid Y)-\E(\mbX)\}\subseteq\spn(\bolbeta)$. 

Similarly, for any vector $\mbv\in\calS_{\E(\mbX\mid Y)}^{\perp}$
we have $\mbv^\T\{\E(\mbX\mid Y)-\E(\mbX)\}=0$ and hence $\mbv^\T\MDDM(\mbX\mid Y)\mbv=0$.
This implies that $\mbv\in\spn(\bolbeta_{0})$ and hence, $\calS_{\E(\mbX\mid Y)}^{\perp}\subseteq\spn(\bolbeta_{0})$
and $\spn(\bolbeta)\subseteq\calS_{\E(\mbX\mid Y)}$.

\end{proof}

For the proof of Lemma~\ref{sub}, we need the following elementary lemma. We include its proof for completeness.
\begin{lemma}\label{ed}
Let $\balpha_k$ be the normalized $k$th eigenvector of $\bSigma_{\mbX}^{-1/2}\bM\bSigma_{\mbX}^{-1/2}$. Then we must have $\bbeta_k=\bSigma_{\mbX}^{-1/2}\balpha_k$.
\end{lemma}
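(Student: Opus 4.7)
The plan is to show that the generalized eigenvalue problem in \eqref{GEV} for the pencil $(\bM,\bSigma_{\mbX})$ is equivalent, under the invertible change of variables $\bbeta=\bSigma_{\mbX}^{-1/2}\balpha$, to the ordinary symmetric eigenvalue problem for $\bSigma_{\mbX}^{-1/2}\bM\bSigma_{\mbX}^{-1/2}$. Since $\bSigma_{\mbX}$ is positive definite (used implicitly throughout), both $\bSigma_{\mbX}^{1/2}$ and its inverse are well-defined symmetric matrices, so this change of variables is a bijection on $\mathbb{R}^p$.

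First I would substitute $\bbeta_k=\bSigma_{\mbX}^{-1/2}\balpha_k$ into $\bM\bbeta_k=\varphi_k\bSigma_{\mbX}\bbeta_k$ and left-multiply by $\bSigma_{\mbX}^{-1/2}$, yielding
\[
\bigl(\bSigma_{\mbX}^{-1/2}\bM\bSigma_{\mbX}^{-1/2}\bigr)\balpha_k=\varphi_k\,\balpha_k.
\]
Thus every generalized eigenpair $(\varphi_k,\bbeta_k)$ of $(\bM,\bSigma_{\mbX})$ corresponds, via $\bbeta_k=\bSigma_{\mbX}^{-1/2}\balpha_k$, to an ordinary eigenpair $(\varphi_k,\balpha_k)$ of the symmetric matrix $\bSigma_{\mbX}^{-1/2}\bM\bSigma_{\mbX}^{-1/2}$, and conversely. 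Next I would check that the constraints match: $\bbeta_k^\T\bSigma_{\mbX}\bbeta_k=\balpha_k^\T\balpha_k$, so the normalization $\bbeta_k^\T\bSigma_{\mbX}\bbeta_k=1$ in \eqref{GEP1} corresponds to $\|\balpha_k\|_2=1$; similarly, the $\bSigma_{\mbX}$-orthogonality $\bbeta_l^\T\bSigma_{\mbX}\bbeta_k=0$ reduces to $\balpha_l^\T\balpha_k=0$, which is automatic among eigenvectors of a symmetric matrix corresponding to distinct eigenvalues.

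Finally I would confirm that the ordering of the $\bbeta_k$'s (as defined sequentially by \eqref{GEP1}) matches that of the $\balpha_k$'s (as the top eigenvectors of $\bSigma_{\mbX}^{-1/2}\bM\bSigma_{\mbX}^{-1/2}$). This is immediate from the Rayleigh-quotient identity $\bbeta^\T\bM\bbeta=\balpha^\T(\bSigma_{\mbX}^{-1/2}\bM\bSigma_{\mbX}^{-1/2})\balpha$ under the bijection, so the maximizers at step $k$ correspond exactly. No real obstacle arises; the only point requiring care is sign/uniqueness of the $k$th eigenvector, which is handled by the eigenvalue gap in Condition (C2) (up to a harmless sign convention that we can fix so that $\bbeta_k=\bSigma_{\mbX}^{-1/2}\balpha_k$ holds with equality rather than up to sign).
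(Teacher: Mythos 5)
Your proof is correct and follows essentially the same route as the paper: the change of variables $\balpha=\bSigma_{\mbX}^{1/2}\bbeta$ converts the constrained maximization \eqref{GEP1} (equivalently the pencil equation \eqref{GEV}) into the standard variational characterization of the eigenvectors of the symmetric matrix $\bSigma_{\mbX}^{-1/2}\bM\bSigma_{\mbX}^{-1/2}$, with the normalization and orthogonality constraints matching exactly. The paper's own proof is just a terser version of this same argument.
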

\begin{proof}[Proof of Lemma~\ref{ed}]
Let $\balpha=\bSigma_{\mbX}^{1/2}\bbeta$ in \eqref{GEP1}, we have that $\bbeta_k=\bSigma_{\mbX}^{-1/2}\balpha_k$, where
\beq
\balpha_k=\arg\max_{\balpha}\balpha^\T\bSigma_{\mbX}^{-1/2}\bM\bSigma_{\mbX}^{-1/2}\balpha \mbox{ s.t $\balpha^\T\balpha=1,\balpha^\T\balpha_l=0,l<k$}
\eeq

It is easy to see that $\balpha_k$ is the $k$th eigenvector of $\bSigma_{\mbX}^{-1/2}\bM\bSigma_{\mbX}^{-1/2}$ and the conclusion follows.
\end{proof}

\begin{proof}[Proof of Lemma~\ref{sub}]
By Lemma~\ref{ed}, we have $\bSigma_{\mbX}^{-1/2}\bM\bSigma_{\mbX}^{-1/2}=\sum_{j=1}^p\lambda_j\balpha_j\balpha_j^\T$. It follows that $\bM=\bSigma_{\mbX}\{\sum_{j=1}^p\lambda_j\bbeta_j\bbeta_j^\T\}\bSigma_{\mbX}$. Hence, $\mbM_k=\bSigma_{\mbX}\{\sum_{j=k}^p\lambda_j\bbeta_j\bbeta_j^\T\}\bSigma_{\mbX}$ and $\bolbeta_k$ is its leading generalized eigenvector subject to $\bolbeta^\T\bSigma_{\mbX}\bolbeta=1$.
\end{proof}

\section{Proof for Theorem~\ref{th:main}}
\label{sec:theorem1}

Let $(\mbV_{k\cdot},\mbU_{k\cdot})_{k=1}^{n}$ be iid sample from the joint
distribution of $(\mbV,\mbU)$, where $\mbV_{k\cdot}=(V_{k1},\cdots,V_{kp})^\T$ is the $k$th sample.
Write
$\MDDM(\mbV|\mbU)=-\E\{(\mbV-\E(\mbV))(\mbV'-\E(\mbV'))^\T|\mbU-\mbU'|_q\}=-\mbR-\mbS+2\mbT,$ where 
\begin{eqnarray*}
	\mbR&=&\E(\mbV\mbV'^\T|\mbU-\mbU'|_q)\\
	\mbS&=&\E(\mbV)\E(\mbV')^\T\E(|\mbU-\mbU'|_q)\\
	\mbT&=&\E[\mbV\mbV'^\T|\mbU'-\mbU^{''}|_q]=\E[\E(\mbV)\mbV'^\T|\mbU-\mbU'|_q]
\end{eqnarray*}
with $(\mbV',\mbU')$ and $(\mbV^{''},\mbU^{''})$ being iid copies of $(\mbV,\mbU)$. Note that $\mbV'=(V_1',\cdots,V_p')^T$.
At the sample level, we have
\[\MDDM_n(\mbV|\mbU)=-\frac{1}{n^2}\sum_{k,l=1}^{n}(\mbV_{k\cdot}-\bar{\mbV}_n)(\mbV_{l\cdot}-\bar{\mbV}_n)^T|\mbU_{k\cdot}-\mbU_{l\cdot}|_q=-\mbR_{n}-\mbS_{n}+2\mbT_{n},\]
where $\bar{\mbV}_n=n^{-1}\sum_{k=1}^{n}\mbV_{k\cdot}$, and 
\begin{eqnarray*}
	\mbR_{n}&=&n^{-2}\sum_{k,l=1}^{n}\mbV_{k\cdot}\mbV_{l\cdot}^T|\mbU_{k\cdot}-\mbU_{l\cdot}|_q\\
	\mbS_{n}&=&n^{-2}\sum_{k,l=1}^{n}\mbV_{k\cdot}\mbV_{l\cdot}^T n^{-2}\sum_{k,l=1}^{n}|\mbU_{k\cdot}-\mbU_{l\cdot}|_q\\
	\mbT_{n}&=&n^{-3}\sum_{k,l,h=1}^{n}\mbV_{k\cdot}\mbV_{h\cdot}^T|\mbU_{h\cdot}-\mbU_{l\cdot}|_q.
\end{eqnarray*}

\begin{proposition}
	\label{prop:R}
	Suppose Condition (C1) holds. There exists a positive integer
	$n_0=n_0(\sigma_0,C_0,q)<\infty$, $\gamma=\gamma(\sigma_0,C_0,q)\in (0,1/2)$ and a finite positive constant $D_0=D_0(\sigma_0,C_0,q)<\infty$ such that when $n\ge n_0$ and
	$16>\epsilon>D_0 n^{-\gamma}$, we have
	\[P(\|\mbR_n-\mbR\|_{max}>4\epsilon)\le 10 p^2\exp\left(-  \frac{\epsilon^2 n}{4\log^3{n}}\right).\]
	
\end{proposition}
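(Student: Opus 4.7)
The proof proceeds by showing entrywise concentration of a U-statistic together with a union bound over the $p^2$ entries of the matrix. First observe that because $|\mbU_k - \mbU_l|_q$ vanishes whenever $k=l$, the double sum defining $\mbR_n$ is exactly $\tfrac{n-1}{n}$ times a second-order U-statistic:
$$(\mbR_n)_{ij} \,=\, \tfrac{n-1}{n}\,U_n^{(ij)}, \qquad U_n^{(ij)} \,=\, \binom{n}{2}^{-1}\sum_{1\le k<l\le n} h_{ij}(Z_k,Z_l),$$
where $Z_k=(\mbV_{k\cdot},\mbU_{k\cdot})$ and the symmetrized kernel $h_{ij}(Z_k,Z_l) = \tfrac12\bigl(V_{ki}V_{lj}+V_{li}V_{kj}\bigr)|\mbU_k-\mbU_l|_q$ satisfies $\E h_{ij} = \mbR_{ij}$. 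Thus it suffices to control $|U_n^{(ij)} - \E U_n^{(ij)}|$ together with the $O(1/n)$ normalization remainder $\tfrac1n|U_n^{(ij)}|$.

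The kernel $h_{ij}$ is unbounded under (C1), so a truncation is required before a Hoeffding-type inequality for U-statistics can be applied. I would set the truncation level at $\tau_n = c\sqrt{\log n}$ for a sufficiently large constant $c=c(\sigma_0)$ to be fixed, define $\widetilde V_{ki}=V_{ki}\mathbf{1}\{|V_{ki}|\le\tau_n\}$ and $\widetilde{\mbU}_k=\mbU_k\mathbf{1}\{\|\mbU_k\|_q\le\tau_n\}$, form the truncated kernel $\widetilde h_{ij}$ from these, and let $\widetilde U_n^{(ij)}$ denote the corresponding U-statistic. On the good event $\mathcal{A}_n=\{\max_{k,i}|V_{ki}|\le\tau_n,\ \max_k\|\mbU_k\|_q\le\tau_n\}$ the truncated and original kernels coincide, and unconditionally $|\widetilde h_{ij}|\le 2\tau_n^3 = 2c^3\log^{3/2}n$. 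Condition (C1) together with Markov's inequality gives $P(|V_{ki}|>\tau_n)\le C_0 n^{-2\sigma_0 c^2}$ (and the analogous bound for $\|\mbU_k\|_q$), so a union bound over $n(p+1)$ events yields $P(\mathcal{A}_n^c)\le 2C_0 np\cdot n^{-2\sigma_0 c^2}$, which is negligible relative to the target bound once $c$ is chosen large enough. The truncation bias is handled by Cauchy--Schwarz: $|\E h_{ij}-\E\widetilde h_{ij}|\le \E^{1/2}(h_{ij}^2)\cdot P(B_{ij})^{1/2} = O(n^{-\sigma_0 c^2})$, which is smaller than $\epsilon$ as soon as $\epsilon > D_0 n^{-\gamma}$ for $\gamma = \sigma_0 c^2$.

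For the leading fluctuation I would invoke Hoeffding's inequality for U-statistics with bounded symmetric kernel $|\widetilde h_{ij}|\le M := 2c^3 \log^{3/2}n$, which gives
$$P\!\Bigl(\bigl|\widetilde U_n^{(ij)}-\E\widetilde U_n^{(ij)}\bigr|>t\Bigr) \,\le\, 2\exp\!\left(-\frac{\lfloor n/2\rfloor\,t^2}{2M^2}\right) \,\le\, 2\exp\!\left(-\frac{n t^2}{16\,c^6\log^3 n}\right).$$
With $t$ of order $\epsilon$ (after allocating an $\epsilon$ budget among truncation failure, bias, Hoeffding fluctuation, and the $1/n$ correction) this yields the stated form $\exp(-\epsilon^2 n/(4\log^3 n))$ once constants are absorbed into $c$. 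Combining these pieces via the triangle inequality, covering both signs of deviation, and taking a union bound over the $p^2$ entries produces the $10\,p^2$ prefactor.

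The main obstacle is a delicate balancing of the truncation constant $c$: the kernel bound $2c^3\log^{3/2}n$ entering Hoeffding calls for $c$ small to tighten the exponent, whereas both the failure probability $np\cdot n^{-2\sigma_0 c^2}$ and the bias $n^{-\sigma_0 c^2}$ demand $c$ large. Reconciling these competing requirements forces the $\log^3 n$ factor in the exponent (inherent to truncating sub-Gaussian tails at the level $\sqrt{\log n}$) and dictates the admissible regime $D_0 n^{-\gamma} < \epsilon < 16$ in which the statement holds.
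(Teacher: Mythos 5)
Your overall architecture---symmetrize into a second-order U-statistic with kernel $h_{ij}$, truncate, apply a Hoeffding-type inequality for bounded-kernel U-statistics, and union-bound over the $p^2$ entries---is the same as the paper's. The gap is in how you dispose of the truncation remainder. You restrict to the good event $\mathcal{A}_n$ on which no $|V_{ki}|$ or $\|\mbU_k\|_q$ exceeds $\tau_n=c\sqrt{\log n}$ and assert that $P(\mathcal{A}_n^c)\le 2C_0np\cdot n^{-2\sigma_0c^2}$ is ``negligible relative to the target bound once $c$ is chosen large enough.'' It is not: for any fixed $\gamma\in(0,1/2)$, when $\epsilon$ is near the lower end $D_0n^{-\gamma}$ of the admissible range the target $p^2\exp\{-\epsilon^2n/(4\log^3 n)\}=p^2\exp\{-D_0^2n^{1-2\gamma}/(4\log^3 n)\}$ is stretched-exponentially small in $n$, whereas $np\cdot n^{-2\sigma_0c^2}$ decays only polynomially for every fixed $c$. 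Hence $P(\mathcal{A}_n^c)$ dominates the claimed bound and cannot be absorbed into it. Letting $c=c_n\to\infty$ fast enough to cure this forces $\tau_n$ to grow polynomially in $n$, which inflates the kernel bound $2\tau_n^3$ entering Hoeffding and destroys the $\log^3 n$ exponent entirely. So the tension you flag in your closing paragraph is not a matter of tuning constants: the good-event route cannot deliver the stated rate.

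The paper's proof avoids this by truncating the \emph{kernel value} $h_1$ at $M=\log^{3/2} n$ and treating the remainder $\widetilde R_{n,12,2}=\{n(n-1)\}^{-1}\sum_{k\neq l}h_1\mathbf{1}(|h_1|>M)$ as a nonnegative random variable rather than as an exceptional event. Its mean is only polynomially small (Cauchy--Schwarz together with the sub-Gaussian tail of $h_1$), which is exactly why the hypothesis $\epsilon>D_0n^{-\gamma}$ is needed to make it at most $\epsilon/2$; the deviation of $\widetilde R_{n,12,2}$ above $\epsilon/2$ is then controlled by dominating it by products of averages such as $n^{-1}\sum_k|V_{k1}|\,\|\mbU_k\|F_k$ (with $F_k$ an indicator that $\mbW_k$ is large) and applying Bernstein's inequality to these sub-exponential summands, yielding $\exp(-Cn\epsilon^2)$, which for $n\ge n_0$ is dominated by $\exp\{-\epsilon^2 n/(4\log^3 n)\}$. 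You would need to replace your good-event step with an argument of this kind to close the proof.
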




\noindent Proof of Proposition~\ref{prop:R}:
Throughout  the 
proof, $C$ is a generic positive constant that vary from line to line. 
We shall find a bound for $P(\|\mbR_{n}-\mbR\|_{max}>4\epsilon)$ first.
For $i,j=1,\cdots,p$, let $R_{ij}=\E[V_iV_j'|\mbU-\mbU'|_q]$
and $R_{n,ij}=n^{-2}\sum_{k,l=1}^{n}V_{ki}V_{lj}|\mbU_{k\cdot}-\mbU_{l\cdot}|_q$. Note that
\[P(\|\mbR_{n}-\mbR\|_{max}>4\epsilon)\le p^2\max_{i,j=1,\cdots,p} P(|R_{n,ij}-R_{ij}|>4\epsilon).\]
We shall focus on the case I, $(i,j)=(1,2)$, since other cases can be treated in the
same fashion and the bound is uniformly over
all pair of $(i,j)$s.

Case I, $(i,j)=(1,2)$, write $\widetilde{R}_{n,12}=\{n(n-1)\}^{-1}\sum_{k\not=l}^{n}V_{k1}V_{l2}|\mbU_{k\cdot}-\mbU_{l\cdot}|_q$.
Let $\mbW=(\mbU^\T,V_1,V_2)^\T$ and $\mbW'=((\mbU')^\T,V_1',V_2')^\T$ and $\mbW_k=(\mbU_{k\cdot}^\T,V_{k1},V_{k2})^\T$. Define the kernel $h_1$ as
$$h_1(\mbW;\mbW')=\frac{V_1V_2'|\mbU-\mbU'|_q+V_1'V_2|\mbU-\mbU'|_q}{2}$$
Then $h_1$ is symmetric, $\widetilde{R}_{n,12}=\{n(n-1)\}^{-1}\sum_{k\not=l}^{n}h_1(\mbW_k;\mbW_l)$
is a U-statistic of order two and $R_{n,12}=\frac{n-1}{n}\widetilde{R}_{n,12}$.

Under Condition (C1), there exists a positive constant $C_1=C_1(\sigma_0,C_0)<\infty$ such that
$|R_{12}|=|\E(V_1V_2'|\mbU-\mbU'|_q)|\le \E^{1/2}(V_1^2)\E^{1/2}(V_2^{'2})\E^{1/2}(|\mbU-\mbU'|_q^2)<C_1$.
When $\epsilon$ satisfies  $\epsilon\ge C_1/(2n)$, then $|R_{12}|/n\le 2\epsilon$ and
\begin{eqnarray*}
	P(|R_{n,12}-R_{12}|\ge 4\epsilon)&&=P\left(\left|\frac{n-1}{n}(\widetilde{R}_{n,12}-R_{12})-\frac{1}{n}R_{12}\right|\ge 4\epsilon\right)\\
	&&\le P(|\widetilde{R}_{n,12}-R_{12}|+|R_{12}/n|\ge 4\epsilon) \le P(|\widetilde{R}_{n,12}-R_{12}|\ge 2\epsilon).
\end{eqnarray*}
Next we decompose
\begin{eqnarray*}
	\widetilde{R}_{n,12}&=&\{n(n-1)\}^{-1}\sum_{k\not=l}^{n}h_1(\mbW_k,\mbW_l){\bf 1}(|h_1(\mbW_k,\mbW_l)|\le M)\\
	&&+\{n(n-1)\}^{-1}\sum_{k\not=l}^{n}h_1(\mbW_k,\mbW_l){\bf 1}(|h_1(\mbW_k,\mbW_l)| > M)\\
	&=&\widetilde{R}_{n,12,1}+\widetilde{R}_{n,12,2},
\end{eqnarray*}
where the choice of $M$ will be addressed at the end of proof.
We also decompose its population counterpart $R_{12}=\E[h_1{\bf 1}(|h_1|\le M)]+\E[h_1{\bf 1}(|h_1| > M)]=R_{12,1}+R_{12,2}$.

By Lemma C on page 200 of \cite{S80}, we derive that for  $m=\lfloor n/2\rfloor$, and $t>0$,
\[\E[\exp(t\widetilde{R}_{n,12,1})]\le \E^m[\exp(t h_1{\bf 1}(|h_1|\le M)/m)]\]
which entails that
\begin{eqnarray*}
	P(\widetilde{R}_{n,12,1}-R_{12,1}\ge \epsilon)&\le& \exp(-t(\epsilon+R_{12,1}))\E[\exp(t\widetilde{R}_{n,12,1})]\\
	&\le&\exp(-t\epsilon)\E^m\{\exp(t(h_1{\bf 1}(|h_1|\le M)-R_{12,1})/m)\}\\
	&\le&\exp(-t\epsilon)\exp(t^2M^2/(2m)),
\end{eqnarray*}
where we have applied Markov's inequality and Lemma A(ii) [cf. Page 200 of \cite{S80}] in the first and third inequality above, respectively.
Applying the same argument with $h_1{\bf 1}(|h_1|\le M)$ replaced by $-h_1{\bf 1}(|-h_1|\le M)$, we can obtain
\[P(\widetilde{R}_{n,12,1}-R_{12,1}\le -\epsilon)\le \exp(-t\epsilon)\exp(t^2M^2/(2m)).\]
Choosing $t=\epsilon m/M^2$, we obtain that
\begin{eqnarray}
	\label{eq:rn121}
	P(|\widetilde{R}_{n,12,1}-R_{12,1}|\ge \epsilon)\le 2\exp(-\epsilon^2m/(2M^2))
\end{eqnarray}

Next we turn to $\widetilde{R}_{n,12,2}$. First of all, by Cauchy-Schwartz inequality,
$|R_{12,2}|\le \E^{1/2}(h_1^2)P^{1/2}(|h_1|>M)$.
Applying the inequality $|ab|\le (a^2+b^2)/2$ and $(a+b)^2\le 2(a^2+b^2)$ for any $a,b\in R$,
we derive
\begin{eqnarray*}
	h_1(\mbW_k,\mbW_l)&\le& \frac{V_{k1}V_{l2}|\mbU_{k\cdot}-\mbU_{l\cdot}|_q+V_{k2}V_{l1}|\mbU_{k\cdot}-\mbU_{l\cdot}|_q}{2}\\
	&\le&\frac{1}{4}\{(V_{k1}V_{l2}+V_{k2}V_{l1})^2+|\mbU_{k\cdot}-\mbU_{l\cdot}|_q^2\}\\
	&\le&\frac{1}{2}\{V_{k1}^2V_{l2}^2+V_{k2}^2V_{l1}^2+|\mbU_{k\cdot}|_q^2+|\mbU_{l\cdot}|_q^2\}\\
	&\le&\frac{1}{2}\{V_{k1}^4/2+V_{l2}^4/2+V_{k2}^4/2+V_{l1}^4/2+|\mbU_{k\cdot}|_q^2+|\mbU_{l\cdot}|_q^2\}
\end{eqnarray*}
Then it is easy to show that under the uniform sub-Gaussian moment
assumption in Condition (C1) and the upper bound on $h_1(\mbW_k,\mbW_l)$ above, we have that  $\E^{1/2}(h_1^2)\le C_2$ for some
$C_2=C_2(\sigma_0,C_0)<\infty$. Moreover, since $q^{1/2}\|\mbU\|_{\max}\ge \|\mbU\|_q$, we can derive that 
\begin{eqnarray}
	\label{eq:tailprobh1}
&&P(|h_1|>M)\nonumber\\
&\le& P[\max\{|V_1|,|V_1'|,|V_2|,|V_2'|,2 q^{1/2}\| \mbU\|_{\max}, 2 q^{1/2}\| \mbU'\|_{\max}\}\ge (\frac{M}{2})^{1/3}]\nonumber\\
&\le& 2 P\{|V_1|\ge (\frac{M}{2})^{1/3}\}+ 2 P\{|V_2|\ge (\frac{M}{2})^{1/3}\}+ 2 P\{ 2 q^{1/2} \| \mbU\|_{\max} \ge (\frac{M}{2})^{1/3}\}\nonumber\\
&\le&2P\{|V_1|\ge (\frac{M}{2})^{1/3}\}+ 2 P\{|V_2|\ge (\frac{M}{2})^{1/3}\} +  2\sum_{j=1}^q P\{ 2 q^{1/2}|U_j|\ge (\frac{M}{2})^{1/3}\}\label{eq.tail.bound}
\end{eqnarray}

Because $V_1$ is sub-Gaussian as assumed in Condition (C1), by Proposition~2.5.2 in \citet{vershynin2018high}, we have $P\{|V_1|\ge (\frac{M}{2})^{1/3}\}\le 2\exp\{-C(\frac{M}{2})^{2/3}\}$ for some positive constant $C$. We apply similar arguments to all the remaining terms in \eqref{eq.tail.bound} and derive that 
$$
P(|h_1|>M)\le (8+4q)\exp\{-2Cq^{-1}(\frac{M}{2})^{2/3}\}.
$$

Thus $|R_{12,2}|\le (8+4q)^{1/2}C_2 \exp\{-Cq^{-1}(\frac{M}{2})^{2/3}\}$.
If we choose $\epsilon>0$ such that $(8+4q)^{1/2}C_2 \exp\{-Cq^{-1}(\frac{M}{2})^{2/3}\}\le\epsilon/2$, then $|R_{12,2}|\le \epsilon/2$, which leads to
$P(|\widetilde{R}_{n,12,2}-R_{12,2}|\ge \epsilon)\le P(|\widetilde{R}_{n,12,2}|\ge\epsilon/2)$. To bound  $P(|\widetilde{R}_{n,12,2}|\ge\epsilon/2)$, we write 
\begin{eqnarray*}
|\widetilde{R}_{n,12,2}|&=&|\{n(n-1)\}^{-1}\sum_{k\not=l}^{n}h_1(\mbW_k,\mbW_l){\bf 1}(|h_1(\mbW_k,\mbW_l)| > M)|\\
&\le&\{n(n-1)\}^{-1}\sum_{k\not=l}^{n}|h_1(\mbW_k,\mbW_l)|{\bf 1}(\|\mbW_k\|_{\max} > q^{-1/6}(\frac{M}{2})^{1/3})\\
&&+\{n(n-1)\}^{-1}\sum_{k\not=l}^{n}|h_1(\mbW_k,\mbW_l)|{\bf 1}(\|\mbW_l\|_{\max} > q^{-1/6}(\frac{M}{2})^{1/3})\\
&\equiv& L_1+L_2.
\end{eqnarray*}
Without loss of generality, we only consider $L_1$. Define $F_k={\bf 1}(\|\mbW_k\|_{\max} > q^{-1/6}(\frac{M}{2})^{1/3})$. Note that
\begin{eqnarray*}
L_1&=&\{n(n-1)\}^{-1}\sum_{k\ne l}|V_{k1}V_{l2}|\|\mbU_k-\mbU_l\| F_k\\
&\le&\{n(n-1)\}^{-1}\sum_{k\ne l}|V_{k1}V_{l2}|\|\mbU_k\|F_k+\{n(n-1)\}^{-1}\sum_{k\ne l}|V_{k1}V_{l2}|\|\mbU_l\|F_k\\
&\le&\{n(n-1)\}^{-1} (\sum_{k=1}^n |V_{k1}|\|\mbU_k\|F_k)\cdot(\sum_{l=1}^n|V_{l2}|)+\{n(n-1)\}^{-1} (\sum_{k=1}^n|V_{k1}|F_k)\cdot \sum_{l=1}^n|V_{l2}|\|\mbU_l\|\\
&\equiv&L_{11}+L_{12}
\end{eqnarray*}

For $L_{11}$, note that, for any $\lambda>0$, $\E\exp\{\lambda|V_{l2}|^2\}=\E\exp\{\lambda V_{l2}^2\}$. Since $V_{l2}$ is sub-Gaussian by Condition (C1), we have that $|V_{l2}|$ is also sub-Gaussian by Proposition~2.5.2 in \citet{vershynin2018high}. Hence, it follows from Bernstein's inequality [Theorem 2.8.1 in \citet{vershynin2018high}] that for $\epsilon\in (0,1)$, $n\ge 2$, 
\begin{eqnarray}
	\label{eq:conc0}
	P(|\frac{1}{n-1}\sum_{l=1}^n\{|V_{l2}|-\E|V_{l2}|\}|\ge \epsilon)\le 2\exp(-Cn\epsilon^2)
	\end{eqnarray}
		Regarding $\frac{1}{n}\sum_{k=1}^n |V_{k1}|\|\mbU_k\|F_k$, we note that $|V_{k1}|\|\mbU_k\|F_k\le |V_{k1}|\|\mbU_k\|\cdot F_k\le |V_{k1}|\cdot \sum_{j=1}^q |U_{kj}|\cdot F_k$. Since $|V_{k1}|$ and $|U_{kj}|$ are sub-Gaussian, we have that $|V_{k1}|\|\mbU_k\|F_k$ is sub-exponential (Lemma 2.7.7 in \citet{vershynin2018high}). Again by Bernstein's inequality, we have that for $\epsilon\in (0,1)$, 
\begin{eqnarray}
	\label{eq:conc1}
P(|\frac{1}{n}\sum_{k=1}^n \{|V_{k1}|\|\mbU_k\|F_k-\E(|V_{k1}|\|\mbU_k\|F_k)\}|\ge\epsilon)\le 2\exp(-Cn\epsilon^2).
\end{eqnarray}

Moreover, it is easy to see that there exists a $C_3=C_3(\sigma_0,C_0)$, such that $\E^{1/2} (|V_{k1}|\|\mbU_k\|)^2\le C_3$, so $\E(|V_{k1}|\|\mbU_k\|F_k)\le \E^{1/2} (|V_{k1}|\|\mbU_k\|)^2 \E^{1/2}F_k\le C_3 [2(2+q)]^{1/2}\exp\{-Cq^{-1/3}(\frac{M}{2})^{2/3}\}$
where  we have used the fact that 
$\E(F_k)=P(\|\mbW_k\|_{\max} > q^{-1/6}(\frac{M}{2})^{1/3})\le 2(2+q)\exp\{-2Cq^{-1/3}(\frac{M}{2})^{2/3}\}$ by a union bound argument and uniform Sub-Gaussinity assumption in Condition (C1). Hence, $P(|L_{11}|\ge \epsilon/8
)\le \xi_{1}(\epsilon)+\xi_2(\epsilon)$, where $\xi_{1}(\epsilon) = P(\frac{1}{n}\sum_{k=1}^n |V_{k1}|\|\mbU_k\|F_k > (8 E|V_2|+8)^{-1}\epsilon)$ and $\xi_2(\epsilon) = P((n-1)^{-1}\sum_{l=1}^{n}|V_{l2}| >E|V_2|+1)$. Choosing $\epsilon$ such that $\epsilon>(E|V_2|+1) C_3 16 [2(2+q)]^{1/2}\exp\{-Cq^{-1/3}(\frac{M}{2})^{2/3}\}$ and $\epsilon< 16 \E|V_2|+16$, then it follows from (\ref{eq:conc1}) that 
\[\xi_1(\epsilon)\le 
P(|\frac{1}{n}\sum_{k=1}^n \{|V_{k1}|\|\mbU_k\|F_k-\E(|V_{k1}|\|\mbU_k\|F_k)\}|\ge \epsilon(16 E|V_2| + 16 )^{-1})\le 2 \exp(-C n\epsilon^2).\]
In addition, we can use (\ref{eq:conc0}) to derive that $\xi_2(\epsilon)\le 2 \exp(-Cn)$. Combining these results, we have $P(|L_{11}|\ge \epsilon/8 )\le 2\exp(-C n\epsilon^2)$, when $\epsilon\in ( (E|V_2|+1) C_3 16  [2(2+q)]^{1/2}\exp\{-Cq^{-1/3}(\frac{M}{2})^{2/3}\},
16 E|V_2|+16)$. Thus if we choose $M=(\log{n})^{3/2}$, we can find a $n_0$, $\gamma\in (0,1/2)$, and $D_0$ such that when $n\ge n_0$ and $16>\epsilon>D_0n^{-\gamma}$, we have 
$P(|L_{11}|>\epsilon/8)\le  2 \exp(-C n\epsilon^2)$. Similar arguments lead to $P(|L_{12}|>\epsilon/8)\le 2 \exp(-C n\epsilon^2)$. This implies that $P(|L_1|>\epsilon/4)\le 4 \exp(-C n\epsilon^2)$ and similarly $P(|L_2|>\epsilon/4)\le 4 \exp(-C n\epsilon^2)$. Therefore $P(|\widetilde{R}_{n,12,2}|\ge \epsilon/2)\le 8\exp(-C n\epsilon^2)$. 
In view of (\ref{eq:rn121}), the desired statement follows by choosing large enough $n_0$ such that $4\log^3(n_0)>C$.




\noindent Proof of Theorem~\ref{th:main}: Notice that for any $\epsilon>0$,
\begin{eqnarray*}
	&&P(\|\MDDM_n(\mbV|\mbU)-\MDDM(\mbV|\mbU)\|_{max}>12\epsilon)\\
	&&\le P(\|\mbR_n-\mbR\|_{max}>4\epsilon)+P(\|\mbS_n-\mbS\|_{max}>4\epsilon)+P(\|\mbT_n-\mbT\|_{max}>4\epsilon)
\end{eqnarray*}
The concentration bound for $\|\mbR_n-\mbR\|_{max}$ has been obtained in Proposition~\ref{prop:R}, and
we shall address the concentration of $\|\mbT_n-\mbT\|_{max}$ in the proof below. The proof for the concentration of $\|\mbS_n-\mbS\|_{max}$ is similar and simpler so is omitted. 
Note that
$$P(\|\mbT_{n}-\mbT\|_{max}>4\epsilon)\le p^2\max_{i,j=1,\cdots,p}P(|T_{n,ij}-T_{ij}|>4\epsilon).$$
Following the same argument as
used in the beginning of proof of Proposition~\ref{prop:R}, we shall only focus on the case $(i,j)=(1,2)$ as other cases can be treated
in exactly the same manner.

Let $T_{12}=\E[V_1V_2'|\mbU'-\mbU^{''}|_q]=\E[\E(V_1)V_2^{'}|\mbU-\mbU'|_q]$ and
$T_{n,12}=n^{-3}\sum_{k,l,h=1}^{n}V_{k1}V_{h2}|\mbU_{h\cdot}-\mbU_{l\cdot}|_q$. Let
\begin{eqnarray*}
	\widetilde{T}_{n,12}&=&\frac{1}{n(n-1)(n-2)}\sum_{k<l<h}[V_{k1}V_{h2}|\mbU_{h\cdot}-\mbU_{l\cdot}|_q+V_{k1}V_{l2}|\mbU_{l\cdot}-\mbU_{h\cdot}|\\
	&&+V_{l1}V_{k2}|\mbU_{k\cdot}-\mbU_{h\cdot}|_q+V_{l1}V_{h2}|\mbU_{h\cdot}-\mbU_{k\cdot}|_q+V_{h1}V_{l2}|\mbU_{l\cdot}-\mbU_{k\cdot}|_q\\
	&&+V_{h1}V_{k2}|\mbU_{k\cdot}-\mbU_{l\cdot}|_q]=6\{n(n-1)(n-2)\}^{-1}\sum_{k<l<h}h_3(\mbW_{k},\mbW_l,\mbW_h),
\end{eqnarray*}
where $h_3$ is a kernel function for U-statistic of order three. Following the same argument to deal with $\widetilde{R}_{n,12}$ in the proof of Proposition~\ref{prop:R},
we write $\widetilde{T}_{n,12}=\widetilde{T}_{n,12,1}+\widetilde{T}_{n,12,2}$, where
\begin{eqnarray*}
	\widetilde{T}_{n,12,1}&=&6\{n(n-1)(n-2)\}^{-1}\sum_{k<l<h}h_3{\bf 1}(|h_3|\le M),\\
	\widetilde{T}_{n,12,2}&=&6\{n(n-1)(n-2)\}^{-1}\sum_{k<l<h}h_3{\bf 1}(|h_3| > M).
\end{eqnarray*}
Correspondingly, we define $T_{12}=T_{12,1}+T_{12,2}$, where
$T_{12,1}=\E[h_3{\bf 1}(|h_3|\le M)]$ and $T_{12,2}=\E[h_3{\bf 1}(|h_3|>M)]$. By using the same argument for $\widetilde{R}_{n,12,1}$, we can show that
\[P(|\widetilde{T}_{n,12,1}-T_{12,1}|\ge \epsilon)\le 2\exp(-\epsilon^2 \lfloor n/3\rfloor/(2M^2))\]
since $\widetilde{T}_{n,12,1}$ is a third order $U$-statistic. Also we note that by the same argument used in bounding $h_1$, we can get
\[|h_3|\le \frac{1}{12}\{V_{k1}^4+V_{l1}^4+V_{h1}^4+V_{k2}^4+V_{l2}^4+V_{h2}^4+8|\mbU_{h\cdot}|_q^2+8|\mbU_{k\cdot}|_q^2+8|\mbU_{l\cdot}|_q^2\}.\]
It follows from Cauchy-Schwartz inequality that
$|T_{12,2}|\le E^{1/2}(h_3^2)P^{1/2}(|h_3|>M)$. By using exactly the same argument used for (\ref{eq:tailprobh1}), we can show that $P(|h_3|>M)\le C \exp(-C_3 M^{2/3})$ for some $C_3=C_3(\sigma_0,C_0,q)>0$. Hence $|T_{12,2}|\le C \exp(-C_3 M^{2/3})$.  
We can choose $\epsilon>0$ such that $C \exp(-C_3 M^{2/3})\le \epsilon/2$. 
Therefore, for $\epsilon\ge 2C \exp(-C_3 M^{2/3}) $, we have $
P(|\widetilde{T}_{n,12,2}-T_{12,2}|>\epsilon)\le P(|\widetilde{T}_{n,12,2}|\ge \epsilon/2)$. By setting  $M=\log^{3/2}(n)$ and adopting the same argument as used in bounding $P(|\widetilde{R}_{n,12,2}|\ge \epsilon/2)$, we can derive that $P(|\widetilde{T}_{n,12,2}|\ge \epsilon/2)\le 12\exp(-C_4n\epsilon^2)$ when $n\ge n_1$ and  $\epsilon\in (D_1 n^{-\gamma_1}, 16)$ for some $C_4=C_4(\sigma_0,C_0,q)>0$, $D_1=D_1(\sigma_0,C_0,q)$, $n_1=n_1(\sigma_0,C_0,q)$ and $\gamma_1=\gamma_1(\sigma_0,C_0,q)$.

Combining the above results, we obtain that 
for $16>\epsilon> D_1n^{-\gamma_1}$ and $n\ge n_1$, we have
\begin{eqnarray*}
	P(|\widetilde{T}_{n,12}-T_{12}|\ge 2\epsilon)&\le& 12\exp(-C_4n\epsilon^2) +2\exp(-\epsilon^2 \lfloor n/3\rfloor/(2M^2))\\
	&\le& 14 \exp(-\epsilon^2n/(6\log^3(n))),
\end{eqnarray*}
where we choose $n_1$ such that $6\log^3(n_1)>C_4^{-1}$. 

Further we note that
\[T_{n,12}-T_{12}=\frac{(n-1)(n-2)}{n^2}(\widetilde{T}_{n,12}-T_{12})-\frac{3n-2}{n^2}T_{12}+\frac{n-1}{n^2}(\widetilde{R}_{n,12}-R_{12})+\frac{n-1}{n^2}R_{12}.\]
There exists a finite positive constant $C_6=C_6(\sigma_0,C_0,q)$ such that $|R_{12}|\le C_6$ and $|T_{12}|\le C_6$ so if we choose $\epsilon\ge 3C_6/n$, then
$|\frac{3n-2}{n^2}T_{12}|\le \epsilon$ and $|\frac{n-1}{n^2}R_{12}|\le \epsilon/3$. Then for $n\ge n_1$ and $\epsilon>D_1n^{-\gamma_1}$,
\begin{eqnarray*}
	P(|T_{n,12}-T_{12}|>4\epsilon)&\le&P(|\widetilde{T}_{n,12}-T_{12}|>2\epsilon)+P(|\widetilde{R}_{n,12}-R_{12}|>2\epsilon/3)\\
	&\le&14 \exp(-\epsilon^2n/(6\log^3(n)))+ 10\exp\left\{-\frac{\epsilon^2n}{36\log^{3}(n)}\right\}\\
	&\le&24 \exp\left\{-\frac{\epsilon^2n}{36\log^{3}(n)}\right\}
\end{eqnarray*}
Thus the conclusion follows from the above inequality and Proposition~\ref{prop:R}.

\qed

\section{Proofs for Theorems~\ref{consistency} \& \ref{consistency.GEP}}\label{sec:consistency.proof}
\subsection{Two Generic Algorithms and Their Properties}
We first describe two generic algorithms and their properties that will help our proof for Theorems~\ref{consistency} \& \ref{consistency.GEP}. Consider two matrices $\bA,\bB\in\mathbb{R}^{p\times p}$, their estimates $\widehat\bA,\widehat\bB\in\mathbb{R}^{p\times p}$ and vectors $\bv,\bv_0,\bv_t\in\mathbb{R}^p$. We have Algorithm~\ref{alg3} for the penalized eigen-decomposition for $\bA$, and Algorithm~\ref{alg4} for the penalized generalized eigen-decomposition for $(\bA,\bB)$. Algorithm~\ref{alg3} is originally proposed in \citet{yuan2013truncated}, and in our Algorithm~1 we use it repeatedly for $K$ times to perform penalized eigen-decomposition for MDDM. Algorithm~\ref{alg4} is originally proposed as the RIFLE Algorithm in \citet{TWLZ18}, and we use it for $K$ times to perform penalized generalized eigen-decomposition for MDDM.

\setcounter{algorithm}{2}

\begin{algorithm}[t!]
\begin{enumerate}
\item Input: $s,\widehat\bA$. 

\item Initialize $\bv_0$. 

\begin{enumerate}
\item Iterate over $t$ until convergence:
         \item Set $\bv_t=\widehat\bA\bv_{t-1}$. 
         \item If $\Vert\bv_{t}\Vert_0\le s$, set 
         $$\bv_{t}=\dfrac{\bv_{t}}{\Vert\bv_t\Vert_2};$$ 
         else 
         $$\bv_{t}=\dfrac{\HT(\bv_{t},s)}{\Vert \HT(\bv_{t},s)\Vert_2}$$
\end{enumerate}
\item Output $\bv_{\infty}$ at convergence.
\end{enumerate}
\caption{A generic penalized eigen-decomposition algorithm.}\label{alg3}
\end{algorithm}

\citet{yuan2013truncated} proved a property for Algorithm~\ref{alg3} that is important for our proof. Assume that $\bA$ has a unique leading eigenvector $\bv^*_{\bI}$ with $\Vert\bv^*_{\bI}\Vert_0\le d$. Denote $\lambda_1^{\bI},\ldots,\lambda^{\bI}_p$ as the eigenvalues. We assume that there exists a constant $\Delta_{\bI}=\lambda_1^{\bI}-\max_{j>1}\lambda_j^{\bI}$. Also for any positive integer $k'$, define 
$$
\rho(\bE_{\bA},k')=\sup_{\Vert\bu\Vert_2=1,\Vert\bu\Vert_0\le k'} |\bu^\T\bE_{\bA}\bu|,
$$
where $\bE_{\bA}=\widehat\bA-\bA$, with $\widehat\bA$ being an estimate of $\bA$. We have the following proposition.

\begin{proposition}[\citep{yuan2013truncated} c.f~Theorem 4]\label{prop.ped}
In Algorithm~\ref{alg3}, let $s=d+2s'$ with $s'\ge d$. Assume that $\rho(\bE_{\bA},s)\le \Delta_{\bI}$. Define
$$
\gamma(s)=\dfrac{\lambda_1^{\bI}-\Delta_{\bI}+\rho(\bE_{\bA},s)}{\lambda_1^{\bI}-\rho(\bE_{\bA},s)}<1, \quad \delta_{\bI}(s)=\dfrac{\sqrt{2}\rho(\bE_{\bA},s)}{\sqrt{\rho(\bE_{\bA},s)^2+(\Delta_{\bI}-2\rho(\bE_{\bA},s))^2}}.
$$
If $|\bv_0^\T\bv^*_{\bI}|\ge \theta+\delta_{\mbI}(s)$ for some $\Vert\bv_0\Vert_0\le s',\Vert\bv_0\Vert=1,$ and $\theta\in (0,1)$ such that
$$
\mu=\sqrt{[1+2\{(\frac{d}{s'})^{1/2}+\frac{d}{s'}\}]\{1-0.5\theta(1+\theta)(1-\gamma(s)^2)\}}<1,
$$
then we either have
$$
\sqrt{1-|\bv_0^\T\bv^*_{\bI}|}<\sqrt{10}\delta_{\bI}(s)/(1-\mu),
$$
or for all $t\ge 0$,
$$
\sqrt{1-|\bv_t^\T\bv^*_{\bI}|}\le \mu^t\sqrt{1-|\bv_0^\T\bv^*_{\bI}|}+\sqrt{10}\delta_{\bI}(s)/(1-\mu).
$$
\end{proposition}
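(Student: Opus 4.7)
The plan is to track how the quantity $\cos\theta_t := |\bv_t^\T \bv^*_{\bI}|$ evolves under the truncated power iteration, and to show it satisfies a one-step contraction recursion with multiplicative rate $\mu$ plus an additive noise term proportional to $\delta_{\bI}(s)$; iterating this recursion gives the stated bound, and the dichotomy arises because once the error falls below the additive floor $\sqrt{10}\delta_{\bI}(s)/(1-\mu)$ the geometric-decay form of the bound trivializes while the absolute bound persists.

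First, I would isolate the two sources of error in one iteration $\bv_{t-1} \mapsto \bv_t$: the power-multiplication step $\bu_t = \widehat{\bA}\bv_{t-1}$ and the hard-thresholding step $\bv_t = \HT(\bu_t, s)/\|\HT(\bu_t, s)\|_2$. Writing $\widehat{\bA} = \bA + \bE_{\bA}$ and decomposing $\bv_{t-1}$ into its component along $\bv^*_{\bI}$ and its orthogonal complement in the eigenbasis of $\bA$, I would use the eigengap $\Delta_{\bI}$ together with the restricted perturbation bound $\rho(\bE_{\bA}, s)$ — noting $\bv_{t-1}$ has support of size at most $s$ so quadratic forms of $\bE_{\bA}$ with $\bv_{t-1}$ are controlled — to derive a contraction of the cosine under $\bu_t/\|\bu_t\|_2$ with rate essentially $\gamma(s) = (\lambda_1^{\bI} - \Delta_{\bI} + \rho)/(\lambda_1^{\bI} - \rho)$. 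The definition of $\delta_{\bI}(s)$ as $\sqrt{2}\rho/\sqrt{\rho^2 + (\Delta_{\bI} - 2\rho)^2}$ precisely captures the irreducible error floor from $\bE_{\bA}$: it is the sine of the worst-case perturbation angle between the leading eigenvectors of $\bA$ and $\widehat{\bA}$ on any $s$-sparse-supported slice.

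Next I would analyze the truncation step, which is the delicate part. Because $\bv^*_{\bI}$ has at most $d$ nonzero entries and $\HT(\cdot,s)$ retains the $s = d + 2s'$ largest coordinates, the support of $\bv_t$ together with $\supp(\bv^*_{\bI})$ has cardinality at most $s + d$. The key inequality is that for any unit vector $\bu$, truncating to its $s$ largest entries can decrease $|\langle \bu, \bv^*_{\bI}\rangle|$ by at most a factor governed by the discarded mass on $\supp(\bv^*_{\bI})$; a pigeonhole/rearrangement argument shows that if many of the $d$ coordinates of $\bv^*_{\bI}$ were among the discarded entries, then the $2s'$ retained entries outside $\supp(\bv^*_{\bI})$ would have to dominate them in magnitude, forcing $\bu$ itself to have mass $\gtrsim (s'/d)^{1/2}$ away from $\supp(\bv^*_{\bI})$. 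Making this precise yields the factor $1 + 2\{(d/s')^{1/2} + d/s'\}$ that appears inside the square root defining $\mu$. Combining with the multiplication step gives a one-step bound of the form $\sqrt{1 - \cos^2\theta_t} \le \mu\,\sqrt{1-\cos^2\theta_{t-1}} + C\,\delta_{\bI}(s)$ with $C$ traceable to $\sqrt{10}(1-\mu)^{-1}$ after summing the geometric series in the iteration.

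The role of the hypothesis $|\bv_0^\T \bv^*_{\bI}| \ge \theta + \delta_{\bI}(s)$ is to guarantee the iterate stays inside the basin of attraction so that the contraction analysis — which implicitly divides by the cosine when passing from $\bu_t$ to the normalized $\bv_t$ and from $\cos^2$ bounds to $\cos$ bounds — remains valid. One verifies by induction that if the starting cosine exceeds this threshold, then the recursion keeps it above $\theta$, which in turn is what makes the specific form of $\mu$ (with the factor $\theta(1+\theta)(1-\gamma(s)^2)/2$ inside) tight.

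The main obstacle will be the truncation analysis: controlling the worst-case loss in $\langle \bv_t, \bv^*_{\bI}\rangle$ after hard-thresholding, quantified in the $\sqrt{d/s'}$ scaling. The ratio $d/s'$ arises from counting how many coordinates of $\bv^*_{\bI}$ can be displaced from the top-$s$ set by competing coordinates outside $\supp(\bv^*_{\bI})$, and the clean separation between the multiplicative rate $\gamma(s)$ and the additive floor $\delta_{\bI}(s)$ requires choosing the decomposition of $\bu_t$ in the correct basis — eigenbasis of $\bA$ for the multiplication analysis but coordinate basis restricted to $\supp(\bv^*_{\bI})^c$ for the thresholding analysis — and then reconciling the two with the Cauchy–Schwarz-type inequality $|\sin\angle(\bv_t, \bv^*_{\bI})| \le |\sin\angle(\bv_t, \bu_t)| + |\sin\angle(\bu_t, \bv^*_{\bI})|$.
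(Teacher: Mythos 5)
You should first note that the paper itself contains no proof of this proposition: the bracketed attribution ``[\citet{yuan2013truncated} c.f.~Theorem 4]'' means the result is imported verbatim from Yuan and Zhang's analysis of the truncated power method, and the supplement simply restates it as a tool for proving Theorems 2 and 3. So there is no in-paper argument to compare against; your sketch has to be judged against the cited proof.

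Measured that way, your outline identifies the right architecture --- it is essentially the Yuan--Zhang argument: split each iteration into a perturbed power step and a hard-thresholding step; control the power step via the eigengap $\Delta_{\bI}$ and the restricted perturbation $\rho(\bE_{\bA},s)$, yielding the contraction ratio $\gamma(s)$; interpret $\delta_{\bI}(s)$ as a Davis--Kahan-type $\sin\Theta$ bound between the leading eigenvectors of $\bA$ and $\widehat{\bA}$ on $s$-sparse supports; bound the truncation loss by the factor $1+2\{(d/s')^{1/2}+d/s'\}$ via a rearrangement argument on the at most $d$ displaced coordinates of $\bv^*_{\bI}$; and close with an induction keeping $|\bv_t^\T\bv^*_{\bI}|\ge\theta$. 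That said, what you have is a plan rather than a proof, and the two steps you correctly flag as delicate are exactly the ones left unestablished: (i) the truncation lemma with the precise constant $1+2\{(d/s')^{1/2}+d/s'\}$, and (ii) the one-step recursion with the exact per-step additive constant that telescopes to $\sqrt{10}\,\delta_{\bI}(s)/(1-\mu)$. Two smaller inaccuracies: the statement's recursion is in $\sqrt{1-|\bv_t^\T\bv^*_{\bI}|}$, not in $|\sin\Theta|=\sqrt{1-(\bv_t^\T\bv^*_{\bI})^2}$ as you write it --- these differ by a bounded factor, but the bounded factor is not free if you want the stated constants verbatim; and your explanation of the dichotomy is slightly off --- the first alternative is not about the error ``falling below the floor'' at some later time (in which case the stated bound would hold trivially anyway) but about the initial iterate already being so close to $\bv^*_{\bI}$ that the induction hypothesis $|\bv_t^\T\bv^*_{\bI}|\ge\theta$ cannot be propagated in the usual way and the absolute bound must be invoked at $t=0$. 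None of these is a fatal flaw in the approach, but filling them in is where all the work of the cited theorem lives.
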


\begin{algorithm}[t!]
\begin{enumerate}
\item Input: $s, \widehat\bA,\widehat\bB$ and step size $\eta>0$. 
\item Initialize $\mbv_0$. 

\item Iterate over $t$ until convergence:
\begin{enumerate}
         \item Set $\rho^{(t-1)}=\dfrac{\mbv_t^\T\widehat\mbA\mbv_t}{\mbv_t^\T\widehat\mbB\mbv_t}$.
         \item $\mbC=\mbI+(\eta/\rho^{(t-1)})\cdot(\widehat\mbA-\rho^{(t-1)}\widehat\mbB)$.
         \item $\mbv_t=\mbC\mbv_{t-1}/\Vert\mbC\mbv_{t-1}\Vert_2$.
         \item $\mbv_t=\dfrac{\HT(\mbv_t, s)}{\Vert\HT(\mbv_t, s)\Vert_2}$.
\end{enumerate}

\item Output $\bv_{\infty}$ at convergence.
\end{enumerate}
\caption{A generic  penalized generalized eigen-decomposition algorithm.}\label{alg4}
\end{algorithm}

Based on the results in \citet{TWLZ18}, we can also derive the following useful results for Algorithm~\ref{alg4}. We assume that the matrix pair $(\bA,\bB)$ has the leading generalized eigenvector $\bv^*$ such that $\Vert\bv^*\Vert_0\le d$. The generalized eigenvalues of $(\bA,\bB)$ are referred to as $\lambda_j,j=1,\ldots,p$ and their estimates are $\widehat{\lambda}_j,j=1,\ldots,p$. We introduce the following notation:
\bea
\text{cr}(\bA,\bB)&=&\min_{\bv:\Vert\bv\Vert_2=1} \{(\bv^\T\bA\bv)^2+(\bv^\T\bB\bv)^2\}^{1/2}>0\\
\text{cr}(k')&=&\inf_{F: |F|\le k'}\text{cr}(\bA_F,\bB_F),\\
\delta(k')&=&\sqrt{\rho(\bE_{\bA},k')^2+\rho(\bE_{\bB},k')^2},
\eea
where $\bE_{\bB}=\bB-\widehat\bB$, with $\widehat\bB$ being an estimate of $\bB$.
Also denote $\kappa(\bB)$ as the condition number of $\bB$ and $\omega_1(F)=\sup_{\Vert\bu\Vert_0\subset F, \Vert\bu\Vert_2=1}\frac{\bu^\T\bA\bu}{\bu^\T\bB\bu}$ for any index set $F$. We consider the following assumption:
\begin{assumption}\label{assumption1}

For sufficiently large $n$, there are constants $b,c>0$, such that $\dfrac{\delta(s)}{cr(s)}\le b$ and $\rho(\bE_{\bB},s)\le c\lambda_{\min}(\bB)$ for any $s=o(n)$.
\end{assumption}

We also denote $c_{\text{upper}}=\dfrac{1+c}{1-c}$ for $c$ defined in Assumption~\ref{assumption1}. We estimate $\bv^*$ with the RIFLE algorithm with the step size $\eta$. We choose $\eta$ such that $\eta\lambda_{\max}(\bB)<1/(1+c)$. Further, in the RIFLE algorithm, let $s=2s'+d$ and choose $s'=Cd$ for sufficiently large $C$. The initial value $\bv_0$ satisfies that $\Vert\bv_0\Vert_2=1$.

\begin{proposition}[Based on Theorem~1 and Corollary~1 in \citet{TWLZ18}]\label{prop.gep1}  Under Assumption~\ref{assumption1}, we have the following conclusions:
\begin{enumerate}
\item For any $F$ such that $\supp(\bv_0)\subset F$, there exists a constant $a$ such that
\beq
(1-a)\omega_1(F)\le\hat \omega_1(F)\le (1+a)\omega_1(F).
\eeq

\item Choose $\eta$ such that 
\beq\label{prop.gep1.gamma}
\nu=\sqrt{1+2\{(d/s')^{1/2}+d/s'\}}\sqrt{1-\frac{1+c}{8}\eta\lambda_{\min}(\bB)\frac{1-\alpha}{c_{\text{upper}}\kappa(\bB)+\alpha}}<1,
\eeq
where $\alpha=\frac{(1+a)\lambda_2}{(1-a)\lambda_1}$. Input an initial vector $\bv_0$ such that $\dfrac{|(\bv^*)^\T\bv_0|}{\Vert\bv^*\Vert_2}\ge 1-\theta(\bA,\bB)$, where 
\beq
\theta(\bA,\bB)=\min\left[\dfrac{1}{8c_{\text{upper}}\kappa(\bB)},\dfrac{1/\alpha-1}{3c_{\text{upper}}\kappa(\bB)},\dfrac{1-\alpha}{30(1+c)c_{\text{upper}}^2\eta\lambda_{\max}(\bB)\kappa^2(\bB)\{c_{\text{upper}}\kappa(\bB)+\alpha\}}\right].
\eeq
Further denote 
\beq\label{prop.gep1.xi} 
\xi=\min_{j>1}\dfrac{\lambda_1-(1+a)\lambda_j}{\sqrt{1+\lambda_1^2}\sqrt{1+(1-a)^2\lambda_j^2}}.
\eeq
Assume that $\xi>\delta(s)/cr(s)$ and we have
\beq
\sqrt{1-\dfrac{|(\bv^*)^\T\bv_t|}{\Vert\bv^*\Vert_2}}\le \nu^t\sqrt{\theta(\bA,\bB)}+\dfrac{\sqrt{10}}{1-\nu}\dfrac{2}{\xi\{cr(s)-\delta(s)\}}\delta(s).
\eeq
\end{enumerate}

\end{proposition}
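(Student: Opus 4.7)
The plan is to combine Theorem~\ref{th:main} with the single-direction RIFLE guarantee of Proposition~\ref{prop.gep1} and run an induction over $k = 1,\dots,K$ through the deflation sequence defining $\widehat{\bM}_k$. First I would upgrade the max-norm concentration of Theorem~\ref{th:main} into sparse operator norm control: for any unit vector $\bu$ with $\|\bu\|_0 \le s$, we have $|\bu^{\T}\mathbf{E}\bu| \le \|\bu\|_1^2 \|\mathbf{E}\|_{\max} \le s\|\mathbf{E}\|_{\max}$. Applying this to $\mathbf{E} = \widehat{\bM}-\bM$ gives $\rho(\widehat{\bM}-\bM,s)\le 12 s\epsilon$ with probability at least $1 - 54 p^2 \exp\{-\epsilon^2 n/(36\log^3 n)\}$, and a parallel Bernstein-type inequality for the sample covariance of sub-Gaussian predictors under Condition (C1') yields $\rho(\widehat{\bSigma}_{\bX}-\bSigma_{\bX},s)\le Cs\epsilon$ on an analogous event. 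On the intersection of these events, $\delta(s)\le Cs\epsilon$.

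For $k=1$, I apply Proposition~\ref{prop.gep1} directly to $(\bM,\bSigma_{\bX})$ with estimates $(\widehat{\bM},\widehat{\bSigma}_{\bX})$. The sparsity regime $s=d+2s'$ with $s'=Cd$, the step-size condition $\eta\lambda_{\max}(\bSigma_{\bX})<1/(1+c)$, and the warm-start $(\widehat{\bbeta}_1^0)^{\T}\bbeta_1/\|\widehat{\bbeta}_1^0\|_2 \ge 1-\theta$ all match the hypotheses of Proposition~\ref{prop.gep1}. The separation requirement $\xi > \delta(s)/\mathrm{cr}(s)$ from (\ref{prop.gep1.xi}) is what forces $s\epsilon<\epsilon_0$, where $\epsilon_0$ depends on $\lambda_1,\lambda_2,\lambda_{\min}(\bSigma_{\bX}),\Delta$, and the contraction coefficient $\nu<1$ from (\ref{prop.gep1.gamma}) then delivers $|\sin\Theta(\widehat{\bbeta}_1,\bbeta_1)| \le C\,\delta(s)/\mathrm{cr}(s) \le Cs\epsilon$ upon iterating to convergence.

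The inductive step for $k\ge 2$ is the main obstacle. Assuming $|\sin\Theta(\widehat{\bbeta}_l,\bbeta_l)|\le Cs\epsilon$ for $l<k$, I must show $\rho(\widehat{\bM}_k - \bM_k, s)\le Cs\epsilon$ with an absolute constant independent of $k$ before a fresh application of Proposition~\ref{prop.gep1} is justified. Telescoping the identity
\[
\widehat{\bM}_k - \bM_k = (\widehat{\bM}-\bM) - \sum_{l<k}\left[\widehat{\bSigma}_{\bX}\delta_l\widehat{\bbeta}_l\widehat{\bbeta}_l^{\T}\widehat{\bSigma}_{\bX} - \bSigma_{\bX}\lambda_l\bbeta_l\bbeta_l^{\T}\bSigma_{\bX}\right]
\]
by replacing one factor at a time yields atomic terms each of sparse-norm order $s\epsilon$: the $\widehat{\bSigma}_{\bX}-\bSigma_{\bX}$ contributions from Step~1; the $\widehat{\bbeta}_l-\bbeta_l$ contributions from the induction, exploiting that $\widehat{\bbeta}_l$ is $s$-sparse and unit-normalized; and $|\delta_l - \lambda_l| = |\widehat{\bbeta}_l^{\T}\widehat{\bM}\widehat{\bbeta}_l - \bbeta_l^{\T}\bM\bbeta_l| \le Cs\epsilon$ via a Weyl-type expansion. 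Since $K$ is fixed, only boundedly many such corrections accumulate.

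The hardest part is controlling the cross terms generated by the fact that the $\bSigma_{\bX}$-orthogonality $\bbeta_l^{\T}\bSigma_{\bX}\bbeta_j=0$ holds only in the population; empirically $\widehat{\bbeta}_l^{\T}\widehat{\bSigma}_{\bX}\widehat{\bbeta}_j$ deviates from zero by $O(s\epsilon)$, and these perturbations could in principle amplify across deflations. However, because Proposition~\ref{prop.gep1}'s constant $\omega_1(F)$ is stable under $O(s\epsilon)$ perturbations of the pencil whenever $s\epsilon<\epsilon_0$, the inductive bound closes with a constant uniform in $k$. The probability bound $1-D_1 p^2 n\exp\{-D_2\epsilon^2 n/\log^3 n\}$ in the conclusion then reflects a union bound over the concentration events for $\widehat{\bM}$ and $\widehat{\bSigma}_{\bX}$, the $K$ deflation stages, and the polynomial-in-$n$ iteration count of the RIFLE updates.
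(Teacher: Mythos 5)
There is a genuine gap: you have not proved the statement in question but have instead sketched the proof of a different, downstream result. Proposition~\ref{prop.gep1} is a deterministic statement about a \emph{single} run of the generic Algorithm~\ref{alg4} on an arbitrary matrix pair $(\bA,\bB)$ with estimates $(\widehat\bA,\widehat\bB)$: part~1 is a relative perturbation bound for the sparse Rayleigh quotient $\hat\omega_1(F)$, and part~2 is the linear-convergence-plus-noise bound for the iterates $\bv_t$ expressed through $\nu$, $\xi$, $\delta(s)$ and $cr(s)$. Nothing in it involves MDDM, $\bSigma_{\bX}$, deflation over $k=1,\dots,K$, union bounds, or any probabilistic event; all of that belongs to Theorem~\ref{consistency.GEP}, which is what your outline actually addresses. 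Worse, your argument is circular relative to the stated target: at ``I apply Proposition~\ref{prop.gep1} directly to $(\bM,\bSigma_{\bX})$'' and again at ``a fresh application of Proposition~\ref{prop.gep1}'', you invoke the very proposition you are supposed to be establishing.

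In the paper this proposition carries no proof at all; it is imported from Theorem~1 and Corollary~1 of \citet{TWLZ18}, as the bracketed attribution indicates. A genuine proof would have to analyze the update $\mbC=\mbI+(\eta/\rho^{(t-1)})(\widehat\bA-\rho^{(t-1)}\widehat\bB)$ followed by hard thresholding and renormalization, showing that the Rayleigh-quotient gradient step contracts the angle to $\bv^*$ at a rate governed by the eigengap and $\kappa(\bB)$, that truncation to $s=d+2s'$ coordinates inflates the error by at most the factor $\sqrt{1+2\{(d/s')^{1/2}+d/s'\}}$, and that the statistical perturbation enters through $\delta(s)/[\xi\{cr(s)-\delta(s)\}]$ --- or else explicitly defer to the cited results. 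For part~1 the relevant mechanism is the elementary comparison of $\sup_{\supp(\bu)\subset F}(\bu^\T\bA\bu\pm\delta)/(\bu^\T\bB\bu\mp\delta)$ with $\omega_1(F)$, which is essentially Lemma~\ref{lem: a} of the supplement and which your outline never touches. The concentration-plus-deflation machinery you describe is the right idea for Theorem~\ref{consistency.GEP}, but it cannot substitute for a proof of Proposition~\ref{prop.gep1} itself.
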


We rewrite Proposition~\ref{prop.gep1} in the following more user-friendly form. We denote
\bea
\phi&=&\lambda_1-\lambda_2,\quad a^*=\min\{1/2,\frac{\Delta}{\lambda_1+\lambda_2},\frac{\lambda_{\min}(\bB)}{2}\}\\
\xi^*&=&\dfrac{\lambda_1-\lambda_2}{2(1+\lambda_1^2)},\quad\alpha^*=\frac{(1+a^*)\lambda_2}{(1-a^*)\lambda_1}.
\eea

We have the following lemma.
\begin{lemma}\label{lem.GEP1}
Assume that Assumption~\ref{assumption1} holds. Choose $\eta$ such that 
\beqn
\nu^*=\sqrt{1+2\{(d/s')^{1/2}+d/s'\}}\sqrt{1-\frac{1+c}{8}\eta\lambda_{\min}(\bB)\frac{1-\frac{\lambda_2}{\lambda_1}}{c_{\text{upper}}\kappa(\bB)+3\frac{\lambda_2}{\lambda_1}}}<1.
\eeqn
Also assume that $\delta(s)<\min\{1/2,\lambda_{\min}(\bB)/2\}$, $\dfrac{2\delta(s)}{\lambda_{\min}(\bB)}+\dfrac{2\delta(s)}{\lambda_{\min}(\bB)\lambda_1(F)}<a^*$ and $\xi^*>2\delta(s)/cr(s)$. Input an initial vector $\bv_0$ such that $\dfrac{|(\bv^*)^\T\bv_0|}{\Vert\bv^*\Vert_2}\ge 1-\theta^*(\bA,\bB)$, where $0<\theta^*(\bA,\bB)<1$, 
\beq
\theta^*(\bA,\bB)=\min\left[\dfrac{1}{8c_{\text{upper}}\kappa(\bB)},\dfrac{1/\alpha^*-1}{3c_{\text{upper}}\kappa(\bB)},\dfrac{1-\alpha^*}{30(1+c)c_{\text{upper}}^2\eta\lambda_{\max}(\bB)\kappa^2(\bB)\{c_{\text{upper}}\kappa(\bB)+\alpha^*\}}\right].
\eeq

We have
\beq
|\sin\Theta(\bv^*,\bv_{\infty})|\le \dfrac{\sqrt{20}}{1-\nu^*}\dfrac{2}{\xi^*\{cr(s)-\delta(s)\}}\delta(s).
\eeq 
\end{lemma}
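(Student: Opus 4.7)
The strategy is to reduce Lemma~S1 to Proposition~S3 by showing that every quantity appearing in Proposition~S3 can be suitably bounded by the simpler ``starred'' quantities in Lemma~S1. The plan is to instantiate Proposition~S3 with the explicit constants $a^*, \alpha^*, \xi^*, \nu^*, \theta^*$ and to verify that all of its hypotheses are in force, then pass to the limit $t\to\infty$ and convert the resulting bound into a bound on $|\sin\Theta|$.

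First I would pin down the constant $a$ from part (i) of Proposition~S3. Under Assumption~S1 and the hypothesis $2\delta(s)/\lambda_{\min}(\bB)+2\delta(s)/[\lambda_{\min}(\bB)\lambda_1(F)]<a^*$, a direct perturbation argument for generalized Rayleigh quotients (the same one used in the proof of Theorem~1 of Tan--Wang--Liu--Zhang, 2018) yields $a\le a^*$. Since $a^*\le 1/2$, this immediately gives $(1+a)/(1-a)\le(1+a^*)/(1-a^*)$, so the ratio $\alpha=(1+a)\lambda_2/[(1-a)\lambda_1]$ is dominated by $\alpha^*$; together with $a^*\le \Delta/(\lambda_1+\lambda_2)$ this also forces $\alpha^*<1$, so the hypothesis of Proposition~S3 on the convergence factor makes sense.

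Next I would verify the three comparisons $\alpha\le\alpha^*$, $\xi\ge\xi^*$, $\nu\le\nu^*$, and $\theta(\bA,\bB)\ge\theta^*(\bA,\bB)$. For $\xi$, taking $j=2$ in \eqref{prop.gep1.xi} and using $a\le a^*\le\Delta/(\lambda_1+\lambda_2)$ gives
\[
\lambda_1-(1+a)\lambda_2\ \ge\ \lambda_1\Delta/(\lambda_1+\lambda_2)\ \ge\ \Delta/2,
\]
while the denominator is bounded above by $\sqrt{1+\lambda_1^2}\sqrt{1+\lambda_1^2}=1+\lambda_1^2$, so $\xi\ge \Delta/[2(1+\lambda_1^2)]=\xi^*$. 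The hypothesis $\xi^*>2\delta(s)/\mathrm{cr}(s)$ in Lemma~S1 then implies $\xi>\delta(s)/\mathrm{cr}(s)$, which is the separation condition required by Proposition~S3. For $\nu$, I would use that the map $\alpha\mapsto(1-\alpha)/[c_{\rm upper}\kappa(\bB)+\alpha]$ is decreasing in $\alpha$ and that $\alpha\le 3\lambda_2/\lambda_1$ when $a\le 1/2$ to compare the two spectral factors; this gives $\nu\le\nu^*<1$. The monotonicity of $\theta(\bA,\bB)$ as a function of $\alpha$ (it is decreasing) yields $\theta(\bA,\bB)\ge\theta^*(\bA,\bB)$, so the initialization condition $|(\bv^*)^\T\bv_0|/\|\bv^*\|_2\ge 1-\theta^*(\bA,\bB)$ from Lemma~S1 is stronger than what Proposition~S3 requires.

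With all hypotheses of Proposition~S3 part (ii) verified, letting $t\to\infty$ in its conclusion makes the $\nu^t$ term vanish, leaving
\[
\sqrt{1-|(\bv^*)^\T\bv_\infty|/\|\bv^*\|_2}\ \le\ \frac{\sqrt{10}}{1-\nu^*}\cdot\frac{2}{\xi^*\{\mathrm{cr}(s)-\delta(s)\}}\,\delta(s).
\]
To translate this to $|\sin\Theta|$, write $c=|(\bv^*)^\T\bv_\infty|/(\|\bv^*\|_2\|\bv_\infty\|_2)=|\cos\Theta|\in[0,1]$ and use the identity
\[
\sin^2\Theta\ =\ 1-c^2\ =\ (1-c)(1+c)\ \le\ 2(1-c),
\]
which gives $|\sin\Theta|\le\sqrt{2}\sqrt{1-c}$. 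Squaring the factor $\sqrt{2}$ into the existing $\sqrt{10}$ yields exactly $\sqrt{20}$, reproducing the constant in the statement.

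The main obstacle is the bookkeeping of constants in the second step: each of $a$, $\alpha$, $\xi$, $\nu$, $\theta$ is defined in terms of the previous ones and in terms of spectral quantities of $(\bA,\bB)$ and $\bB$ alone, so one must be careful that the monotonicity arguments run in the right direction (so that the starred surrogates indeed dominate or are dominated correctly), and that the three terms defining $\theta^*$ interact properly with the choice of $\eta$ that makes $\nu^*<1$. Everything else is routine: Assumption~S1 supplies the needed bounds on $\rho(\bE_\bA,s)$ and $\rho(\bE_\bB,s)$ in terms of $\delta(s)$, and the hypotheses $\delta(s)<\min\{1/2,\lambda_{\min}(\bB)/2\}$ keep all denominators bounded away from zero.
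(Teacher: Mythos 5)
Your proposal follows essentially the same route as the paper's own proof: reduce to Proposition~\ref{prop.gep1} by checking that the starred surrogates dominate the corresponding quantities ($a\le a^*$ via the Rayleigh-quotient perturbation argument that the paper isolates as Lemma~\ref{lem: a}, then $\lambda_2/\lambda_1\le\alpha\le 3\lambda_2/\lambda_1$, $\xi\ge\xi^*$, $\theta(\bA,\bB)\ge\theta^*(\bA,\bB)$), let $t\to\infty$, and convert $1-|\cos\Theta|$ to $|\sin\Theta|$ via $\sin^2\Theta=(1-|\cos\Theta|)(1+|\cos\Theta|)\le 2(1-|\cos\Theta|)$, which is exactly how the $\sqrt{20}$ arises. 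One caveat you share with the paper: your justification of $\nu\le\nu^*$ (monotonicity of $\alpha\mapsto(1-\alpha)/\{c_{\text{upper}}\kappa(\bB)+\alpha\}$ together with $\alpha\le 3\lambda_2/\lambda_1$) only yields the lower bound $(1-3\lambda_2/\lambda_1)/\{c_{\text{upper}}\kappa(\bB)+3\lambda_2/\lambda_1\}$ on that factor, not the $(1-\lambda_2/\lambda_1)/\{c_{\text{upper}}\kappa(\bB)+3\lambda_2/\lambda_1\}$ appearing in the stated $\nu^*$; the paper asserts the same inequality with the same one-line justification, so you have faithfully reproduced its argument, but that particular comparison is not airtight as written in either place.
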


\begin{proof}[Proof of Lemma~\ref{lem.GEP1}]
We prove Lemma~\ref{lem.GEP1} by showing that all the conditions in Proposition~\ref{prop.gep1} are met. Note that
\beq
\hat\omega_1(F)=\sup_{\Vert\bu\Vert_2=1,\supp(\bu)\subset F}\dfrac{\bu^\T\widehat{\bA}\bu}{\bu^\T\widehat{\bB}\bu}= \sup_{\Vert\bu\Vert_2=1,\supp(\bu)\subset F}\dfrac{\bu^\T\bA\bu+\bu^\T(\widehat{\bA}-\bA)\bu}{\bu^\T\bB\bu+\bu^\T(\widehat{\bB}-\bB)\bu}.
\eeq

It is obvious that
\beq
\sup_{\Vert\bu\Vert_2=1,\supp(\bu)\subset F}\dfrac{\bu^\T\bA\bu-\delta(s)}{\bu^\T\bB\bu+\delta(s)}\le \hat\omega_1(F)\le \sup_{\Vert\bu\Vert_2=1,\supp(\bu)\subset F}\dfrac{\bu^\T\bA\bu+\delta(s)}{\bu^\T\bB\bu-\delta(s)}
\eeq 
Because $a^*>\dfrac{2\delta(s)}{\lambda_{\min}(\bB)}+\dfrac{2\delta(s)}{\lambda_{\min}(\bB)\lambda_1(F)}$, by Lemma~\ref{lem: a}, we have that Assumption~\ref{assumption1} implies
\beq
(1-a^*)\omega_1(F)\le\hat{\omega}_1(F)\le (1+a^*)\omega_1(F).
\eeq

Also, by our definition, $a^*\le 1/2$. It follows that $\frac{\lambda_2}{\lambda_1}\le \frac{(1+a)\lambda_2}{(1-a)\lambda_1}=\alpha\le\frac{3\lambda_2}{\lambda_1}$. Hence, $\nu\le\nu^*<1$, where $\nu$ is defined in \eqref{prop.gep1.gamma}. In addition, because $a^*\le\frac{\phi}{\lambda_1+\lambda_2}\le \frac{\phi}{2\lambda_2}$, we have $\xi\ge\xi^*$, where $\xi$ is defined in \eqref{prop.gep1.xi}. Finally, because $\dfrac{1}{8c_{\text{upper}}\kappa(\bB)}<1$, we have $\theta^*(\bA,\bB)<1$. Because $a^*\le \dfrac{\phi}{\lambda_1+\lambda_2}$, we have $1-\gamma^*>0$ and thus $\theta^*(\bA,\bB)>0$.

By Proposition~\ref{prop.gep1}, we have
\bea
\sqrt{1-\dfrac{|(\bv^*)^\T\bv_t|}{\Vert\bv^*\Vert_2}}&\le& \nu^t\sqrt{\theta(\bA,\bB)}+\dfrac{\sqrt{10}}{1-\nu}\dfrac{2}{\xi\{cr(s)-\delta(s)\}}\delta(s)\\
&\le& (\nu^*)^t\sqrt{\theta^*(\bA,\bB)}+\dfrac{\sqrt{10}}{1-\nu^*}\dfrac{2}{\xi^*\{cr(s)-\delta(s)\}}\delta(s).
\eea

Let $t\rightarrow \infty$ and we have that $\sqrt{1-\dfrac{|(\bv^*)^\T\bv_{\infty}|}{\Vert\bv^*\Vert_2}}\le \dfrac{\sqrt{10}}{1-\nu^*}\dfrac{2}{\xi^*\{cr(s)-\delta(s)\}}\delta(s).$ Finally, we note that $1-\dfrac{|(\bv^*)^\T\bv_{\infty}|}{\Vert\bv^*\Vert_2}=1-|\cos\Theta(\bv^*,\bv_{\infty})|$. Since $\sin^2\Theta(\bv^*,\bv_{\infty})=(1+|\cos\Theta(\bv^*,\bv_{\infty})|)(1-|\cos\Theta(\bv^*,\bv_{\infty})|)\le 2(1-|\cos\Theta(\bv^*,\bv_{\infty})|)$, we have the desired conclusion.
\end{proof}

\begin{lemma}\label{lem: a}
Consider two symmetric matrices $\bA,\bB$, where $\lambda_{\min}(\bB)>0$. For any $F\subset \{1,\ldots,p\}$, denote 
\beq
\lambda_1(F)=\sup_{\Vert\bu\Vert_2=1,\supp(\bu)\subset F}\dfrac{\bu^\T\bA\bu}{\bu^\T\bB\bu}.
\eeq

 For any $0<\epsilon<\min\{\frac{1}{2},\dfrac{\lambda_{\min}(\bB)}{2}\}$, we have that
\bea
\sup_{\Vert\bu\Vert_2=1,\supp(\bu)\subset F}\dfrac{\bu^\T\bA\bu+\epsilon}{\bu^\T\bB\bu-\epsilon}&\le& (1+a)\lambda_1(F),\label{lem: a.eq1}\\
\sup_{\Vert\bu\Vert_2=1,\supp(\bu)\subset F}\dfrac{\bu^\T\bA\bu-\epsilon}{\bu^\T\bB\bu+\epsilon}&\ge& (1-a)\lambda_1(F),\label{lem: a.eq2}
\eea
where $a=\dfrac{2\epsilon}{\lambda_{\min}(\bB)}+\dfrac{2\epsilon}{\lambda_{\min}(\bB)\omega_1(F)}$.
\end{lemma}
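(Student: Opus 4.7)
\textbf{Proof plan for Lemma~\ref{lem: a}.} Note that $\lambda_1(F)$ and $\omega_1(F)$ coincide by definition, so the quantity $a$ is simply $a=\dfrac{2\epsilon}{\lambda_{\min}(\bB)}+\dfrac{2\epsilon}{\lambda_{\min}(\bB)\lambda_1(F)}$. The plan is to establish both inequalities by elementary pointwise manipulation of the ratio, using only that $\bu^\T\bB\bu\ge\lambda_{\min}(\bB)$ and that $\epsilon<\lambda_{\min}(\bB)/2$.

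For the upper bound \eqref{lem: a.eq1}, I would fix $\bu$ with $\|\bu\|_2=1$, $\supp(\bu)\subset F$, and write $r=\bu^\T\bA\bu/\bu^\T\bB\bu$ and $b=\bu^\T\bB\bu\ge\lambda_{\min}(\bB)$. Since $b-\epsilon\ge\lambda_{\min}(\bB)/2>0$, the algebraic identity
\begin{equation*}
\frac{\bu^\T\bA\bu+\epsilon}{\bu^\T\bB\bu-\epsilon}=\frac{rb+\epsilon}{b-\epsilon}=r+\frac{\epsilon(r+1)}{b-\epsilon}
\end{equation*}
together with $r\le\lambda_1(F)$ and $b-\epsilon\ge\lambda_{\min}(\bB)/2$ would give
\begin{equation*}
\frac{\bu^\T\bA\bu+\epsilon}{\bu^\T\bB\bu-\epsilon}\le\lambda_1(F)+\frac{2\epsilon(\lambda_1(F)+1)}{\lambda_{\min}(\bB)}=\lambda_1(F)(1+a).
\end{equation*}
Taking the supremum over admissible $\bu$ yields \eqref{lem: a.eq1}.

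For the lower bound \eqref{lem: a.eq2}, I would specialize to the maximizer $\bu^*$ of the Rayleigh quotient defining $\lambda_1(F)$, so $r^*=\lambda_1(F)$ and $b^*=(\bu^*)^\T\bB\bu^*\ge\lambda_{\min}(\bB)$. The identity
\begin{equation*}
\frac{(\bu^*)^\T\bA\bu^*-\epsilon}{(\bu^*)^\T\bB\bu^*+\epsilon}=r^*-\frac{\epsilon(r^*+1)}{b^*+\epsilon}\ge\lambda_1(F)-\frac{\epsilon(\lambda_1(F)+1)}{\lambda_{\min}(\bB)}
\end{equation*}
bounds the subtracted term from above by using $b^*+\epsilon\ge\lambda_{\min}(\bB)$. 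The right-hand side equals $\lambda_1(F)\bigl[1-\tfrac{\epsilon}{\lambda_{\min}(\bB)}-\tfrac{\epsilon}{\lambda_{\min}(\bB)\lambda_1(F)}\bigr]$, which is at least $(1-a)\lambda_1(F)$ since $a$ has twice those coefficients. Hence the supremum in \eqref{lem: a.eq2} is at least $(1-a)\lambda_1(F)$.

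There is no real obstacle here beyond bookkeeping: the proof is a one-line decomposition plus the elementary bounds $b\ge\lambda_{\min}(\bB)$ and $\epsilon\le\lambda_{\min}(\bB)/2$. The only subtle point to check is positivity of all denominators, which is guaranteed by the assumption $\epsilon<\lambda_{\min}(\bB)/2$, and the slight slack of a factor of two between the bound attained and the constant $a$ stated in the lemma, which is what makes the final symmetric form $(1\pm a)\lambda_1(F)$ hold cleanly.
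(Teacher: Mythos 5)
Your proof is correct and rests on exactly the same two ingredients as the paper's own argument --- the pointwise bound $\bu^\T\bB\bu\ge\lambda_{\min}(\bB)$ and the factor-of-two slack coming from $\epsilon<\lambda_{\min}(\bB)/2$ --- differing only in bookkeeping: you use the exact additive identity $\frac{rb\pm\epsilon}{b\mp\epsilon}=r\pm\frac{\epsilon(r+1)}{b\mp\epsilon}$ and, for the lower bound, evaluate at the maximizer (which exists by compactness of the restricted unit sphere), whereas the paper bounds the denominator multiplicatively via $\bu^\T\bB\bu-\epsilon\ge\{1-\epsilon/\lambda_{\min}(\bB)\}\bu^\T\bB\bu$ and invokes $1/(1-x)<1+2x$. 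Both arguments carry the same implicit assumption $\lambda_1(F)>0$, which is harmless since the constant $a$ already presupposes it and $\bA$ is positive semi-definite in the application.
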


\begin{proof}[Proof of Lemma~\ref{lem: a}]
For \eqref{lem: a.eq1}, note that, for any $\bu$, we have
$\bu^\T\bB\bu-\epsilon\ge \{1-\frac{\epsilon}{\lambda_{\min}(\bB)}\}\bu^\T\bB\bu.$
So 
\bea
&&\sup_{\Vert\bu\Vert_2=1,\supp(\bu)\subset F}\dfrac{\bu^\T\bA\bu+\epsilon}{\bu^\T\bB\bu-\epsilon}\\
&\le& \dfrac{1}{1-\frac{\epsilon}{\lambda_{\min}(\bB)}}\cdot\sup_{\Vert\bu\Vert_2=1,\supp(\bu)\subset F}\dfrac{\bu^\T\bA\bu}{\bu^\T\bB\bu}+\dfrac{1}{1-\frac{\epsilon}{\lambda_{\min}(\bB)}}\cdot\sup_{\Vert\bu\Vert_2=1,\supp(\bu)\subset F}\dfrac{1}{\bu^\T\bB\bu}\\
&\le&\dfrac{1}{1-\frac{\epsilon}{\lambda_{\min}(\bB)}}\cdot\omega_1(F)+\dfrac{2\epsilon}{\lambda_{\min}(\bB)\omega_1(F)}\cdot \omega_1(F),
\eea
where in the last inequality we use the fact that $\epsilon<\dfrac{\lambda_{\min}(\bB)}{2}$. Also note that, for any $0<x<1/2$, we have $\dfrac{1}{1-x}<1+2x$. It follows that
$
\dfrac{1}{1-\frac{\epsilon}{\lambda_{\min}(\bB)}}\le 1+\dfrac{2\epsilon}{\lambda_{\min}(\bB)}.$ Hence,
\beq
\sup_{\Vert\bu\Vert_2=1,\supp(\bu)\subset F}\dfrac{\bu^\T\bA\bu+\epsilon}{\bu^\T\bB\bu-\epsilon}\le\left\{1+\dfrac{2\epsilon}{\lambda_{\min}(\bB)}+\dfrac{2\epsilon}{\lambda_{\min}(\bB)\omega_1(F)}\right\}\omega_1(F)=(1+a)\omega_1(F).
\eeq

Similarly, for \eqref{lem: a.eq2}, we note that, for any $\bu$,
$\bu^\T\bB\bu+\epsilon\le\{1+\frac{\epsilon}{\lambda_{\min}(\bB)}\}\bu^\T\bB\bu.$ So
\bea
\sup_{\Vert\bu\Vert_2=1,\supp(\bu)\subset F}\dfrac{\bu^\T\bA\bu-\epsilon}{\bu^\T\bB\bu+\epsilon}&\ge&\dfrac{1}{1+\frac{\epsilon}{\lambda_{\min}(\bB)}}\cdot \omega_1(F)-\dfrac{1}{1+\frac{\epsilon}{\lambda_{\min}(\bB)}}\cdot \dfrac{\epsilon}{\lambda_{\max}(\bB)},\\
&\ge&\dfrac{1}{1+\frac{\epsilon}{\lambda_{\min}(\bB)}}\cdot \omega_1(F)-\dfrac{\epsilon}{\lambda_{\max}(\bB)\omega_1(F)}\cdot \omega_1(F).
\eea
Because $\frac{1}{1+x}>1-x$ for any $x>0$, we have $\dfrac{1}{1+\frac{\epsilon}{\lambda_{\min}(\bB)}}>1-\dfrac{\epsilon}{\lambda_{\min}(\bB)}$. Hence,
\beq
\sup_{\Vert\bu\Vert_2=1,\supp(\bu)\subset F}\dfrac{\bu^\T\bA\bu-\epsilon}{\bu^\T\bB\bu+\epsilon}\ge(1-\dfrac{\epsilon}{\lambda_{\min}(\bB)}-\dfrac{\epsilon}{\lambda_{\max}(\bB)\omega_1(F)})\omega_1(F)\ge (1-a)\omega_1(F).
\eeq
The conclusion follows.
\end{proof}

\subsection{Additional technical lemmas}

We first derive several lemmas concerning a parameter $\bbeta_k$ (either in the penalized eigen-decomposition or the penalized generalized eigen-decomposition) and its estimate $\widehat{\bbeta}_k$. We denote $\eta_k=|\sin\Theta(\bbeta_k,\hat{\bbeta}_k)|$, and $\hat\lambda_k=\widehat{\bbeta}_k^\T\widehat{\bM}\widehat{\bbeta}_k$.

\begin{lemma}\label{lem1}
If $\Vert\bbeta_k\Vert_0\le s, k=1,\ldots,K$, we have that
\bea
\Vert\vecc(\bbeta_k\bbeta_k^\T-\hat\bbeta_k\hat\bbeta_k^\T)\Vert_1&\le& 2s\eta_k\label{lem1.l1norm}
\eea
\end{lemma}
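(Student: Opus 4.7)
The key observation is that $\eta_k = |\sin\Theta(\bbeta_k, \hat\bbeta_k)|$ is invariant under sign flips of either vector, so without loss of generality we may choose signs so that $c := \bbeta_k^\T \hat\bbeta_k \ge 0$. Under this convention, the identity $\|\bbeta_k - \hat\bbeta_k\|_2^2 = 2(1 - c)$, together with the inequality $2(1-c) \le 2(1 - c^2) = 2\sin^2\Theta$ valid for $c\in[0,1]$, yields the basic $\ell_2$--$\sin\Theta$ conversion $\|\bbeta_k - \hat\bbeta_k\|_2 \le \sqrt{2}\,\eta_k$.

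The main step is the algebraic decomposition
\beq
\bbeta_k\bbeta_k^\T - \hat\bbeta_k\hat\bbeta_k^\T = \bbeta_k(\bbeta_k - \hat\bbeta_k)^\T + (\bbeta_k - \hat\bbeta_k)\hat\bbeta_k^\T,
\eeq
which, combined with the identity $\|\vecc(\mathbf{u}\mathbf{v}^\T)\|_1 = \|\mathbf{u}\|_1 \|\mathbf{v}\|_1$ and the triangle inequality, gives
\beq
\|\vecc(\bbeta_k\bbeta_k^\T - \hat\bbeta_k\hat\bbeta_k^\T)\|_1 \le \big(\|\bbeta_k\|_1 + \|\hat\bbeta_k\|_1\big)\,\|\bbeta_k - \hat\bbeta_k\|_1.
\eeq

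Sparsity then controls each factor. Since $\bbeta_k$ and $\hat\bbeta_k$ are unit vectors with $\|\bbeta_k\|_0, \|\hat\bbeta_k\|_0 \le s$ (the latter being enforced by the hard-thresholding step of Algorithm~\ref{alg1}), Cauchy--Schwarz gives $\|\bbeta_k\|_1, \|\hat\bbeta_k\|_1 \le \sqrt{s}$. Moreover, the difference $\bbeta_k - \hat\bbeta_k$ is supported on $\supp(\bbeta_k) \cup \supp(\hat\bbeta_k)$, a set of cardinality at most $2s$, so another application of Cauchy--Schwarz gives $\|\bbeta_k - \hat\bbeta_k\|_1 \le \sqrt{2s}\,\|\bbeta_k - \hat\bbeta_k\|_2 \le 2\sqrt{s}\,\eta_k$. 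Substituting yields a bound of order $s\eta_k$, matching the advertised inequality up to an absolute constant.

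There is no serious obstacle: the argument is essentially a bookkeeping of sparsity sizes versus $\ell_2$--$\sin\Theta$ relations, and the sign-choice subtlety is handled at the outset. The payoff of this lemma in the later arguments is that it enables one to propagate the estimation error through the deflation step $\hat\mbM_{k+1} = \hat\mbM_k - \hat\lambda_k \hat\bbeta_k \hat\bbeta_k^\T$ when evaluating sparse quadratic forms $\mathbf{u}^\T (\hat\mbM_k - \mbM_k)\mathbf{u}$ with $\|\mathbf{u}\|_0 \le s$, which is the crucial step needed to feed into Proposition~\ref{prop.ped} and Proposition~\ref{prop.gep1} in the proofs of Theorems~\ref{consistency} and \ref{consistency.GEP}.
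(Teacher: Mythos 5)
Your argument is correct as far as it goes, but it takes a genuinely different route from the paper and, as you yourself note, it does not deliver the stated constant. The paper's proof is a two-line computation: it evaluates the Frobenius norm \emph{exactly}, $\Vert\bbeta_k\bbeta_k^\T-\hat\bbeta_k\hat\bbeta_k^\T\Vert_F^2=\Tr\bigl((\bbeta_k\bbeta_k^\T-\hat\bbeta_k\hat\bbeta_k^\T)^2\bigr)=2\bigl(1-(\bbeta_k^\T\hat\bbeta_k)^2\bigr)=2\eta_k^2$ using only $\Vert\bbeta_k\Vert_2=\Vert\hat\bbeta_k\Vert_2=1$, then observes that $\vecc(\bbeta_k\bbeta_k^\T-\hat\bbeta_k\hat\bbeta_k^\T)$ is supported on at most $2s^2$ coordinates and applies Cauchy--Schwarz once: $\Vert\bzeta_k\Vert_1\le\sqrt{2s^2}\cdot\sqrt{2}\,\eta_k=2s\eta_k$. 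Note that this route needs no sign convention at all, since only $(\bbeta_k^\T\hat\bbeta_k)^2$ appears. Your route instead passes through the rank-one decomposition $\bbeta_k(\bbeta_k-\hat\bbeta_k)^\T+(\bbeta_k-\hat\bbeta_k)\hat\bbeta_k^\T$ and the factorization $\Vert\vecc(\bu\bv^\T)\Vert_1=\Vert\bu\Vert_1\Vert\bv\Vert_1$; each step is valid (including the sign normalization and the inequality $2(1-c)\le 2(1-c^2)$ for $c\in[0,1]$), but the chain $(\Vert\bbeta_k\Vert_1+\Vert\hat\bbeta_k\Vert_1)\Vert\bbeta_k-\hat\bbeta_k\Vert_1\le 2\sqrt{s}\cdot 2\sqrt{s}\,\eta_k$ yields $4s\eta_k$, a factor of $2$ worse than the claim. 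This is harmless downstream, since Lemma~\ref{lem2.2} and everything after it absorb such factors into a generic constant $C$, but strictly speaking you have proved a weakened version of the displayed inequality. The lesson worth taking away is that treating the rank-two difference of projectors as a single object and exploiting the exact identity for its Frobenius norm is both shorter and sharper than splitting it into rank-one pieces and bounding each factor in $\ell_1$. Both proofs, yours and the paper's, tacitly use $\Vert\hat\bbeta_k\Vert_0\le s$ in addition to the stated hypothesis $\Vert\bbeta_k\Vert_0\le s$; you are right that this is guaranteed by the hard-thresholding step of the algorithm, and it is good that you made that explicit.
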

\begin{proof}[Proof of Lemma~\ref{lem1}]
For \eqref{lem1.l1norm}, set $\bzeta_k=\vecc(\bbeta_k\bbeta_k^\T-\hat\bbeta_k\hat\bbeta_k^\T)$. We have that
\bea
\Vert\bzeta_k\Vert_2^2&=&\Vert\bbeta_k\bbeta_k^\T-\hat\bbeta_k\hat \bbeta_k^\T\Vert_F^2=\Tr((\bbeta_k\bbeta_k^\T-\hat\bbeta_k\hat\bbeta_k^\T)(\bbeta_k\bbeta_k^\T-\hat\bbeta_k\hat\bbeta_k^\T))\\
&=&2(1-(\bbeta_k^\T\hat\bbeta_k)^2)=2\eta^2_k
\eea
Hence, by the Cauchy-Schwarz inequality,
\bea
\Vert\bzeta_k\Vert_1\le\sqrt{\Vert\bzeta_k\Vert_0}\Vert\bzeta_k\Vert_2\le \sqrt{2s^2}\cdot\sqrt{2\eta_k^2}=2s\eta_k\label{lem1.eq3}
\eea

\end{proof}

\begin{lemma}\label{lem3}
If $\Vert\bbeta_k\Vert_0\le s$, we have that
\beq
\Vert\vecc(\bbeta_k\bbeta_k^\T)\Vert_1\le s.
\eeq
\end{lemma}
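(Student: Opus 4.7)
The plan is to reduce the claim to a standard inequality relating the $\ell_1$ and $\ell_2$ norms of a sparse vector. First, I would observe that since $\bbeta_k$ is a (generalized) eigenvector normalized by $\bbeta_k^\T\bbeta_k = 1$ (or $\bbeta_k^\T\bSigma_{\bX}\bbeta_k=1$ in the generalized case, but in the setting of Lemma~\ref{lem3} we are effectively in the eigen-decomposition context where unit Euclidean norm holds), we have $\|\bbeta_k\|_2 = 1$.

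Next, I would compute the $\ell_1$ norm of $\vecc(\bbeta_k\bbeta_k^\T)$ explicitly as a double sum that factors:
\begin{equation*}
\|\vecc(\bbeta_k\bbeta_k^\T)\|_1 \;=\; \sum_{i=1}^p\sum_{j=1}^p |\beta_{ki}\beta_{kj}| \;=\; \Bigl(\sum_{i=1}^p |\beta_{ki}|\Bigr)^{\!2} \;=\; \|\bbeta_k\|_1^2.
\end{equation*}

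Finally, by Cauchy--Schwarz applied to the support of $\bbeta_k$,
\begin{equation*}
\|\bbeta_k\|_1 \;\le\; \sqrt{\|\bbeta_k\|_0}\,\|\bbeta_k\|_2 \;\le\; \sqrt{s},
\end{equation*}
which yields $\|\vecc(\bbeta_k\bbeta_k^\T)\|_1 \le s$, as desired. There is no substantive obstacle here; the only care needed is verifying that the normalization convention in force for $\bbeta_k$ indeed gives $\|\bbeta_k\|_2 = 1$, which is the case throughout Sections~\ref{subsec:PMDDM} and the eigen-decomposition-style statements of Section~\ref{sec:theory}.
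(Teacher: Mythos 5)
Your proof is correct and follows essentially the same route as the paper: both arguments combine the sparsity bound with the normalization $\|\bbeta_k\|_2=1$ via Cauchy--Schwarz, the only (cosmetic) difference being that you first factor $\|\vecc(\bbeta_k\bbeta_k^\T)\|_1=\|\bbeta_k\|_1^2$ and apply Cauchy--Schwarz to $\bbeta_k$ itself, whereas the paper applies it directly to the $p^2$-dimensional vector using $\|\vecc(\bbeta_k\bbeta_k^\T)\|_0\le s^2$ and $\|\bbeta_k\bbeta_k^\T\|_F=1$. Your remark about checking the normalization convention is well placed, since the paper's own proof likewise uses $\bbeta_k^\T\bbeta_k=1$.
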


\begin{proof}[Proof of Lemma~\ref{lem3}]
By the Cauchy-Schwarz inequality, we have that
\beq
\Vert\vecc(\bbeta_k\bbeta_k^\T)\Vert_1\le \sqrt{\Vert \vecc(\bbeta_k\bbeta_k^\T)\Vert_0}\Vert\vecc(\bbeta_k\bbeta_k^\T)\Vert_2
\eeq

Note that $\Vert \vecc(\bbeta_k\bbeta_k^\T)\Vert_0\le s^2$ and
\beq
\Vert\vecc(\bbeta_k\bbeta_k^\T)\Vert_2^2=\Vert\bbeta_k\bbeta_k^\T\Vert_F^2=\Tr(\bbeta_k\bbeta_k^\T\bbeta_k\bbeta_k^\T)=1
\eeq
where we use the fact that $\bbeta_k^\T\bbeta_k=1$. And we have the desired conclusion.
\end{proof}

Throughout the rest of this section, we also repeatedly use the fact that, for a vector $\bu$, if $\Vert\bu\Vert_2=1$, $\Vert\bu\Vert_0\le s$, we must have that $\Vert\bu\Vert_1\le\sqrt{s}$.

\subsection{Proof for Theorem~\ref{consistency}}

In this subsection, we assume that $\bolSigma_{\bX}=\bI$, $\widehat{\bolbeta}_k$ are solutions produced by Algorithm~\ref{alg1} for the penalized eigen-decomposition problem, and $\widehat{\lambda}_k=(\widehat{\bolbeta}_k)^\T\widehat\mbM\widehat{\bolbeta}_k$. We assume all the conditions in Theorem~\ref{consistency}. We have the following result.

\begin{lemma}\label{lem2.2}
If $\Vert\widehat{\bolbeta}_k\Vert_0\le s, \Vert\bolbeta_k\Vert_0\le s$, we have that
\beq
|\hat\lambda_k-\lambda_k|\le s(2\eta_k+1)\epsilon+(\lambda_1+\lambda_{k})\eta_k^2.
\eeq
\end{lemma}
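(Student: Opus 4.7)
The plan is to decompose the error into a ``plug-in'' part that compares $\widehat{\mbM}$ with $\mbM$ on a fixed bilinear form, plus a ``direction'' part that compares $\widehat{\bolbeta}_k$ with $\bolbeta_k$ under a fixed matrix, and then to bound each piece using the concentration control $\|\widehat{\mbM}-\mbM\|_{\max}\le\epsilon$ together with Lemmas~\ref{lem1} and \ref{lem3}.

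Specifically, I would write
\begin{equation*}
\hat\lambda_k-\lambda_k \;=\; \bigl[\widehat{\bolbeta}_k^\T\widehat{\mbM}\widehat{\bolbeta}_k-\bolbeta_k^\T\widehat{\mbM}\bolbeta_k\bigr] \;+\; \bolbeta_k^\T(\widehat{\mbM}-\mbM)\bolbeta_k .
\end{equation*}
The second bracket equals $\tr\!\bigl((\widehat{\mbM}-\mbM)\bolbeta_k\bolbeta_k^\T\bigr)$, whose absolute value is at most $\|\widehat{\mbM}-\mbM\|_{\max}\cdot\|\vecc(\bolbeta_k\bolbeta_k^\T)\|_1\le s\epsilon$ by Lemma~\ref{lem3} and the assumed sparsity of $\bolbeta_k$. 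For the first bracket, I would write it as $\tr\!\bigl(\widehat{\mbM}(\widehat{\bolbeta}_k\widehat{\bolbeta}_k^\T-\bolbeta_k\bolbeta_k^\T)\bigr)$ and split it once more by adding and subtracting $\mbM$:
\begin{equation*}
\tr\!\bigl((\widehat{\mbM}-\mbM)(\widehat{\bolbeta}_k\widehat{\bolbeta}_k^\T-\bolbeta_k\bolbeta_k^\T)\bigr) \;+\; \tr\!\bigl(\mbM(\widehat{\bolbeta}_k\widehat{\bolbeta}_k^\T-\bolbeta_k\bolbeta_k^\T)\bigr).
\end{equation*}
The first of these is bounded by $\|\widehat{\mbM}-\mbM\|_{\max}\cdot\|\vecc(\widehat{\bolbeta}_k\widehat{\bolbeta}_k^\T-\bolbeta_k\bolbeta_k^\T)\|_1\le 2s\eta_k\epsilon$ using Lemma~\ref{lem1}. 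Combining these three pieces gives the $s(1+2\eta_k)\epsilon$ term in the claimed bound.

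The remaining task is to control $\tr\!\bigl(\mbM(\widehat{\bolbeta}_k\widehat{\bolbeta}_k^\T-\bolbeta_k\bolbeta_k^\T)\bigr)$, and this is the only step requiring spectral information rather than max-norm control. Because $\bolSigma_{\bX}=\bI$, we have $\mbM\bolbeta_k=\lambda_k\bolbeta_k$ and $\mbM=\sum_{j=1}^{K}\lambda_j\bolbeta_j\bolbeta_j^\T$ in the relevant eigenbasis. Expanding $\widehat{\bolbeta}_k$ in this basis and using Parseval gives
\begin{equation*}
\widehat{\bolbeta}_k^\T\mbM\widehat{\bolbeta}_k-\lambda_k \;=\; \sum_{j\ne k}(\lambda_j-\lambda_k)(\bolbeta_j^\T\widehat{\bolbeta}_k)^2,\qquad \sum_{j\ne k}(\bolbeta_j^\T\widehat{\bolbeta}_k)^2=\eta_k^2,
\end{equation*}
so $|\widehat{\bolbeta}_k^\T\mbM\widehat{\bolbeta}_k-\bolbeta_k^\T\mbM\bolbeta_k|\le \max_{j\ne k}|\lambda_j-\lambda_k|\cdot\eta_k^2\le(\lambda_1+\lambda_k)\eta_k^2$, using PSD-ness of $\mbM$ and the ordering $\lambda_1\ge\lambda_k\ge 0$. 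Assembling all three bounds yields the stated inequality.

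I expect the only subtle point to be the spectral step: one needs the identity $\bolbeta_k^\T\mbM\bolbeta_k=\lambda_k$ (true because $\bolSigma_{\bX}=\bI$, so eigenvectors of $\mbM$ coincide with the population targets in Theorem~\ref{consistency}), together with the clean identity $\sum_{j\ne k}(\bolbeta_j^\T\widehat{\bolbeta}_k)^2=\eta_k^2$. The rest is a straightforward trace-duality bound. The constant $(\lambda_1+\lambda_k)$ is intentionally loose: using $\max(\lambda_k,\lambda_1-\lambda_k)$ would give a slightly sharper but less symmetric statement.
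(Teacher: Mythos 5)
Your proposal is correct and follows essentially the same route as the paper: the three pieces you isolate are exactly the paper's $L_1=\langle\widehat{\mbM}-\mbM,\widehat{\bolbeta}_k\widehat{\bolbeta}_k^\T-\bolbeta_k\bolbeta_k^\T\rangle$, $L_2=\langle\widehat{\mbM}-\mbM,\bolbeta_k\bolbeta_k^\T\rangle$, and $L_3=\langle\mbM,\widehat{\bolbeta}_k\widehat{\bolbeta}_k^\T-\bolbeta_k\bolbeta_k^\T\rangle$, bounded by Lemmas~\ref{lem1} and~\ref{lem3} and the eigenbasis expansion exactly as in the paper. The only cosmetic difference is that your eigen-expansion sums $(\lambda_j-\lambda_k)(\bolbeta_j^\T\widehat{\bolbeta}_k)^2$ over $j\ne k$ while the paper bounds the two contributions separately, and your sum should formally run over a full orthonormal eigenbasis of $\mbbR^p$ (padding with zero eigenvalues beyond $K$) so that $\sum_{j\ne k}(\bolbeta_j^\T\widehat{\bolbeta}_k)^2=\eta_k^2$ holds.
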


\begin{proof}[Proof of Lemma~\ref{lem2.2}]

Note that 
\bea
&&|\hat\lambda_k-\lambda_k|=|\langle\widehat{\bM},\hat\bbeta_k\hat\bbeta_k^\T\rangle-\langle\bM,\bbeta_k\bbeta_k^\T\rangle|\\
&\le& |\langle\widehat{\bM}-\bM,\hat\bbeta_k\hat\bbeta_k^\T-\bbeta_k\bbeta_k^\T\rangle|+|\langle\widehat{\bM}-\bM,\bbeta_k\bbeta_k^\T\rangle|+|\langle\bM,\hat\bbeta_k\hat\bbeta_k^\T-\bbeta_k\bbeta_k^\T\rangle|\\
&\equiv& L_1+L_2+L_3
\eea

By Lemma~\ref{lem1},
\bea
L_1\le \Vert\widehat{\bM}-\bM\Vert_{\max}\Vert\vecc(\hat\bbeta_k\hat\bbeta_k^\T-\bbeta_k\bbeta_k^\T)\Vert_1\le 2s \eta_k\Vert\widehat{\bM}-\bM\Vert_{\max}\le 2s \eta_k\epsilon.
\eea

By Lemma~\ref{lem3},
\bea
L_2\le \Vert\widehat{\bM}-\bM\Vert_{\max}\Vert\vecc(\bbeta_k\bbeta_k^\T)\Vert_1\le s\Vert\widehat{\bM}-\bM\Vert_{\max}\le s\epsilon.
\eea

For $L_3$, note that $\bM=\sum_{j=1}^p\lambda_j\bbeta_j\bbeta_j^\T$, which implies that
$$
\langle\bM,\widehat{\bbeta}_k\widehat{\bbeta}_k^\T\rangle=\sum_{j=1}^p\lambda_j(\widehat{\bbeta}_k^\T\bbeta_j)^2=\sum_{j=1}^p\lambda_j\cos^2\bTheta(\hat\bbeta_k,\bbeta_j).
$$
Also note that
$\sum_{j=1}^p\cos^2\bTheta(\hat\bbeta_k,\bbeta_j)=1$. Hence,
\bea
L_3&=&|\sum_{j=1}^p\lambda_j\cos^2\bTheta(\bbeta_j,\hat\bbeta_k)-\lambda_k| \\
&\le&\sum_{j\ne k}\lambda_j\cos^2\bTheta(\widehat{\bbeta}_k,\bbeta_j)+\lambda_k|\cos^2\bTheta(\widehat{\bbeta}_k,\bbeta_k)-1|\\
&\le&\lambda_1\sum_{j\ne k}\cos^2\bTheta(\widehat{\bbeta}_k,\bbeta_j)+\lambda_k(1-\cos^2\bTheta(\widehat{\bbeta}_k,\bbeta_k))\\
&\le& (\lambda_1+\lambda_k)(1-\cos^2\bTheta(\bbeta_k,\hat\bbeta_k))\\
&=&(\lambda_1+\lambda_{k})\eta_k^2
\eea

\end{proof}

In Lemmas~\ref{lem.beta1.I}--\ref{lem.sin.k.I}, we assume that the event $\Vert\widehat{\mbM}-\mbM\Vert_{\max}\le\epsilon$ has happened.

\begin{lemma}\label{lem.beta1.I}
For the first direction $\widehat{\bolbeta}_1$, we have that $|\sin\Theta(\widehat{\bolbeta}_1,\bolbeta_1)|\le Cs\epsilon$ and $|\hat{\lambda}_1-\lambda_1|\le Cs\epsilon$. 
\end{lemma}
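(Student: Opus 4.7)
The plan is to invoke Proposition~\ref{prop.ped}, the sparse truncated power method guarantee of \citet{yuan2013truncated}, applied to $\widehat{\mbM}_1$ with the roles $\bA = \mbM$, $\widehat{\bA} = \widehat{\mbM}$, $\bv^*_{\bI} = \bolbeta_1$, and spectral gap $\Delta_{\bI} = \lambda_1 - \lambda_2 \ge \Delta$. Since the $k=1$ round of Algorithm~\ref{alg1} is literally Algorithm~\ref{alg3} applied to $\widehat{\mbM}_1$ with the same initialization, the output $\widehat{\bolbeta}_1$ is the converged iterate covered by that proposition.

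The first step is to translate the entrywise bound $\|\widehat{\mbM} - \mbM\|_{\max} \le \epsilon$ into the sparse operator bound required by the proposition, namely
\begin{equation*}
\rho(\bE_{\mbM}, s) = \sup_{\|\bu\|_2 = 1,\,\|\bu\|_0 \le s} |\bu^\T(\widehat{\mbM} - \mbM)\bu| \le s\epsilon.
\end{equation*}
This follows from the elementary inequalities $|\bu^\T \bE_{\mbM} \bu| \le \|\bE_{\mbM}\|_{\max}\|\bu\|_1^2$ and $\|\bu\|_1 \le \sqrt{s}\|\bu\|_2$. This is the bridge that connects the concentration inequality of Theorem~\ref{th:main} to the operator-level hypotheses of Proposition~\ref{prop.ped}.

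The second step is to verify the remaining hypotheses. Condition $\rho(\bE_{\mbM},s) \le \Delta_{\bI}$ holds because $s\epsilon \le \Delta/4 \le \Delta_{\bI}$. The sparse contraction rate $\gamma(s) = (\lambda_1 - \Delta_{\bI} + \rho)/(\lambda_1 - \rho)$ is controlled by substituting $\rho \le s\epsilon \le \Delta/4$ and $\lambda_1 - \Delta_{\bI} \le \lambda_2$, and matched against the $\gamma^*$ of Theorem~\ref{consistency} so that the proposition's $\mu$ is at most the theorem's $\mu$, hence strictly less than $1$. The perturbation term satisfies $\delta_{\bI}(s) \le \sqrt{2}\,s\epsilon/(\Delta - 2 s\epsilon) = O(s\epsilon/\Delta)$, which can be made smaller than $\theta$ by choosing $D_0$ large enough. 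Combined with the initialization hypothesis $(\widehat{\bolbeta}_1^{(0)})^\T\bolbeta_1 \ge 2\theta$, this yields $|(\widehat{\bolbeta}_1^{(0)})^\T\bolbeta_1| \ge \theta + \delta_{\bI}(s)$. Applying Proposition~\ref{prop.ped} and sending $t \to \infty$ in
\begin{equation*}
\sqrt{1 - |\widehat{\bolbeta}_1^\T \bolbeta_1|} \le \mu^t\sqrt{1 - |(\widehat{\bolbeta}_1^{(0)})^\T\bolbeta_1|} + \frac{\sqrt{10}\,\delta_{\bI}(s)}{1 - \mu},
\end{equation*}
then gives $1 - |\cos\Theta(\widehat{\bolbeta}_1, \bolbeta_1)| \le C s^2\epsilon^2$, and the identity $\sin^2\Theta = (1+|\cos\Theta|)(1-|\cos\Theta|) \le 2(1-|\cos\Theta|)$ yields $|\sin\Theta(\widehat{\bolbeta}_1,\bolbeta_1)| \le Cs\epsilon$. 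Feeding $\eta_1 \le Cs\epsilon$ into Lemma~\ref{lem2.2}, together with $s\epsilon \le 1$ and Condition (C3), gives $|\widehat{\lambda}_1 - \lambda_1| \le Cs\epsilon$.

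The main obstacle is not conceptual but a matter of careful bookkeeping: reconciling the theorem's quantities ($\gamma^*$, $\mu$, $\theta$) with the notation of Proposition~\ref{prop.ped} ($\gamma(s)$, $\mu$, $\delta_{\bI}(s)$), and ensuring that the admissible range $D_0 n^{-\gamma} < \epsilon < \min\{\Delta/(4s), \theta\}$ simultaneously enforces $\rho(\bE_{\mbM},s) \le \Delta/4$, forces $\delta_{\bI}(s) \le \theta$ so that the initialization gap $2\theta$ absorbs it, and keeps $\mu$ bounded away from $1$. Once these algebraic checks are in place, the eigenvector bound is an immediate corollary of Proposition~\ref{prop.ped} and the eigenvalue bound follows from Lemma~\ref{lem2.2}.
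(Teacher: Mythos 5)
Your proposal is correct and follows essentially the same route as the paper: bound $\rho(\widehat{\mbM}-\mbM,s)\le s\epsilon$ via $|\bu^\T\bE_{\mbM}\bu|\le\|\bu\|_1^2\|\bE_{\mbM}\|_{\max}$ and $\|\bu\|_1\le\sqrt{s}$, invoke Proposition~\ref{prop.ped} at convergence, and then apply Lemma~\ref{lem2.2} for the eigenvalue bound. The only difference is that you spell out the verification of the proposition's hypotheses ($\gamma(s)$, $\delta_{\bI}(s)$, the initialization gap), which the paper compresses into ``under our assumptions about $\epsilon$.''
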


\begin{proof}[Proof of Lemma~\ref{lem.beta1.I}]
It is easy to see that 
\bea
\rho(\widehat{\bM}-\bM,s)=\sup_{\Vert\bu\Vert_2=1,\Vert\bu\Vert_0\le s} s|\bu^\T(\widehat{\bM}-\bM)\bu|\le \sup_{\Vert\bu\Vert_2=1,\Vert\bu\Vert_0\le s}\Vert\bu\Vert_1^2\cdot \Vert\widehat{\bM}-\bM\Vert_{\max}\le s\epsilon.
\eea

Under our assumptions about $\epsilon$, by Proposition~\ref{prop.ped}, we have $\lvert\sin\Theta(\widehat{\bolbeta}_1,\bolbeta_1)\rvert\le Cs\epsilon$ at convergence. Lemma~\ref{lem2.2} further implies that $|\hat{\lambda}_1-\lambda_1|\le Cs\epsilon$.
\end{proof}

\begin{lemma}\label{lem.M.k+1.I}
If $|\sin\Theta(\widehat{\bolbeta}_j,\bolbeta_j)|\le Cs\epsilon$ for sufficiently small $\epsilon$, all $j\le k$, then $\rho(\widehat{\mbM}_{k+1}-\mbM_{k+1},s)\le Cs\epsilon$.
\end{lemma}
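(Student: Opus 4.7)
The plan is to decompose the deflation residual as
\[
\widehat{\mbM}_{k+1}-\mbM_{k+1} = (\widehat{\mbM}_1 - \mbM) \;-\; \sum_{l=1}^{k}\Bigl[(\hat\lambda_l - \lambda_l)\widehat{\bolbeta}_l\widehat{\bolbeta}_l^\T + \lambda_l\bigl(\widehat{\bolbeta}_l\widehat{\bolbeta}_l^\T-\bolbeta_l\bolbeta_l^\T\bigr)\Bigr],
\]
and then to bound each of the three types of terms uniformly over unit vectors $\bu$ with $\|\bu\|_0 \le s$.

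For the first term, the argument already used in Lemma~\ref{lem.beta1.I} applies: $|\bu^\T(\widehat{\mbM}_1-\mbM)\bu| \le \|\bu\|_1^2\|\widehat{\mbM}_1-\mbM\|_{\max} \le s\epsilon$. For the rank-one pieces $\widehat{\bolbeta}_l\widehat{\bolbeta}_l^\T$ I use $|\bu^\T \widehat{\bolbeta}_l|^2 \le 1$, so it remains to control $|\hat\lambda_l - \lambda_l|$. Here I invoke Lemma~\ref{lem2.2}, which gives $|\hat\lambda_l - \lambda_l| \le s(2\eta_l+1)\epsilon + (\lambda_1+\lambda_l)\eta_l^2$; combined with the hypothesis $\eta_l \le Cs\epsilon$, Condition (C3) on the eigenvalues, and the assumption $s\epsilon \le 1$, this collapses to $|\hat\lambda_l - \lambda_l| \le C's\epsilon$. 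For the perturbation pieces, the key observation is that for unit vectors $\widehat{\bolbeta}_l$ and $\bolbeta_l$ the operator norm of $\widehat{\bolbeta}_l\widehat{\bolbeta}_l^\T - \bolbeta_l\bolbeta_l^\T$ equals $|\sin\Theta(\widehat{\bolbeta}_l,\bolbeta_l)| = \eta_l$, so for any unit $\bu$,
\[
\bigl|\bu^\T(\widehat{\bolbeta}_l\widehat{\bolbeta}_l^\T-\bolbeta_l\bolbeta_l^\T)\bu\bigr| \le \eta_l \le Cs\epsilon,
\]
and multiplying by $\lambda_l \le U$ preserves the order.

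Assembling these pieces and using that $K$ is fixed (so the sum over $l\le k$ contributes only a bounded multiplicative constant), I obtain
\[
\rho(\widehat{\mbM}_{k+1}-\mbM_{k+1},s) \le s\epsilon + \sum_{l=1}^{k}\bigl[C's\epsilon + U\cdot Cs\epsilon\bigr] \le Cs\epsilon,
\]
as desired. The calculation is essentially a bookkeeping exercise; the only subtlety is making sure the $\hat\lambda_l - \lambda_l$ estimate from Lemma~\ref{lem2.2} is still applicable given that it was derived under the sparsity hypothesis $\|\widehat{\bolbeta}_l\|_0 \le s$, which holds by construction of Algorithm~\ref{alg1}. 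I anticipate the main (minor) obstacle will be carefully propagating the quadratic term $(\lambda_1+\lambda_l)\eta_l^2$ in Lemma~\ref{lem2.2} using the smallness assumption $s\epsilon \le 1$, and ensuring that the constant $C$ can be chosen independently of $k \le K$ since the recursion does not propagate errors across iterations here — each $\widehat{\bolbeta}_l$ sin-angle bound is assumed in the hypothesis rather than accumulated within this lemma.
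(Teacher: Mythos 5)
Your proposal is correct and follows essentially the same route as the paper's proof: the identical decomposition $\widehat{\mbM}_{k+1}-\mbM_{k+1}=(\widehat{\mbM}-\mbM)-\sum_{l\le k}\bigl[(\hat\lambda_l-\lambda_l)\widehat{\bolbeta}_l\widehat{\bolbeta}_l^\T+\lambda_l(\widehat{\bolbeta}_l\widehat{\bolbeta}_l^\T-\bolbeta_l\bolbeta_l^\T)\bigr]$, the same invocation of Lemma~\ref{lem2.2} for $|\hat\lambda_l-\lambda_l|$, and a fixed $K$ to absorb the sum. The only (harmless) variation is that you bound the projector difference via the exact identity $\|\widehat{\bolbeta}_l\widehat{\bolbeta}_l^\T-\bolbeta_l\bolbeta_l^\T\|_{\mathrm{op}}=|\sin\Theta(\widehat{\bolbeta}_l,\bolbeta_l)|$, which is sign-invariant and slightly cleaner than the paper's Cauchy--Schwarz bound $\lambda_l|\bu^\T(\widehat{\bolbeta}_l-\bolbeta_l)|\cdot|\bu^\T(\widehat{\bolbeta}_l+\bolbeta_l)|\le C\|\widehat{\bolbeta}_l-\bolbeta_l\|_2\le C|\sin\Theta(\widehat{\bolbeta}_l,\bolbeta_l)|$, which requires the without-loss-of-generality sign normalization $\cos\Theta(\widehat{\bolbeta}_l,\bolbeta_l)>0$.
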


\begin{proof}[Proof of Lemma~\ref{lem.M.k+1.I}]
Define $\bN_{l}=\hat{\lambda}_l\hat{\bbeta}_l\hat{\bbeta}_l^\T-\lambda_l\bbeta_l\bbeta_l^\T$. Then
\beq
\widehat{\bM}_{k+1}-\bM_{k+1}=(\widehat{\bM}-\bM)-\sum_{l\le k} \bN_l.
\eeq
It follows that 
\beq\label{lem.M.k+1.eq1}
\rho(\widehat{\bM}_{k+1}-\bM_{k+1},s)\le \rho(\widehat{\bM}-\bM,s)+\sum_{l\le k} \rho(\bN_l,s).
\eeq
According to the proof in Lemma~\ref{lem.beta1.I}, $\rho(\widehat{\bM}-\bM,s)\le s\epsilon$. 

For any vector $\bu$, we have
\bea
\bu^\T\bN_l\bu&=&\widehat{\lambda}_l(\bu^\T\widehat{\bbeta}_l)^2-{\lambda}_l(\bu^\T{\bbeta}_l)^2\\
&=&(\widehat{\lambda}_l-\lambda_l)(\bu^\T\widehat{\bbeta}_l)^2+\lambda_l\{(\bu^\T\bbeta_l)^2-(\bu^\T\widehat\bbeta_l)^2\}\equiv L_1+L_2
\eea
By Lemma~\ref{lem2.2}, we have that $|\hat\lambda_l-\lambda_l|\le Cs\epsilon$ when $|\sin\Theta(\widehat{\bolbeta}_j,\bolbeta_j)|\le Cs\epsilon$. Also, $|\bu^\T\bbeta_l|\le \Vert\bu\Vert_2\cdot \Vert\bbeta_l\Vert_2=1$. It follows that $L_1\le Cs\epsilon$. For $L_2$, we assume that $\cos\bTheta(\widehat\bbeta_l,\bbeta_l)>0$ without loss of generality, because otherwise we can consider the proof for $-\widehat\bbeta_l$. Note that 
\bean
L_2&\le&\lambda_l|\mbu^\T(\widehat{\bolbeta}_l-\bolbeta_l)|\cdot |\bu^\T(\widehat{\bolbeta}_l+\bolbeta_l)|\\
&\le&\lambda_l\Vert\mbu\Vert_2^2\cdot \Vert\widehat{\bolbeta}_l-\bolbeta_l\Vert_2(\Vert\widehat{\bolbeta}_l\Vert_2+\Vert\bolbeta_l\Vert_2)\\
&=&C\Vert\widehat{\bolbeta}_l-\bolbeta_l\Vert_2\\
&=&C\sqrt{2-2\widehat{\bolbeta}_l^\T\bolbeta_l}=\sqrt{2(1-\cos\Theta(\widehat{\bolbeta}_l,\bolbeta_l))}\\
&=&C\sqrt{1-\cos^2\Theta(\widehat{\bolbeta}_l,\bolbeta_l)}/\sqrt{1+\cos\Theta(\widehat{\bolbeta}_l,\bolbeta_l)}\\
&\le&C|\sin\Theta(\widehat{\bolbeta}_l,\bolbeta_l)|\\
&\le& Cs\epsilon.
\eean
Hence, $\rho(\bN_l,s)\le Cs\epsilon$ and the conclusion follows.
\end{proof}

\begin{lemma}\label{lem.sin.k.I}
Assume that $|\sin\Theta(\hat{\bbeta}_l,\bbeta_l)|<Cs\epsilon$ for any $l<k$, where $0<\epsilon<\min\{\dfrac{\Delta}{4s},\theta\}$ for $\theta$ defined in Theorem~\ref{consistency}. We have that $|\sin\Theta(\hat\bbeta_k,\bbeta_k)|\le Cs\epsilon$.
\end{lemma}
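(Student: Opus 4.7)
The plan is to treat Lemma~\ref{lem.sin.k.I} as the inductive step that, when combined with the base case in Lemma~\ref{lem.beta1.I}, yields the conclusion of Theorem~\ref{consistency} for all $k=1,\dots,K$. Since $\bSigma_{\bX}=\bI$ under the hypotheses of Theorem~\ref{consistency}, the generalized eigenproblem collapses to an ordinary one and the deflation scheme in Algorithm~\ref{alg1} matches exactly the setting in which Proposition~\ref{prop.ped} is valid. The overall strategy is therefore: (a) bound the perturbation of the deflated population matrix $\mbM_k$ by its sample analogue $\widehat{\mbM}_k$; (b) apply Proposition~\ref{prop.ped} to the pair $(\mbM_k, \widehat{\mbM}_k)$ treating $\bbeta_k$ as the leading eigenvector of $\mbM_k$.

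For step (a), the inductive hypothesis $|\sin\Theta(\widehat{\bbeta}_l,\bbeta_l)|\le Cs\epsilon$ for all $l<k$ together with Lemma~\ref{lem.M.k+1.I} immediately delivers
\begin{equation*}
\rho(\widehat{\mbM}_k-\mbM_k,s)\le Cs\epsilon.
\end{equation*}
For step (b), by construction $\mbM_k=\mbM-\sum_{l<k}\lambda_l\bbeta_l\bbeta_l^\T$ has $\bbeta_k$ as its leading unit eigenvector with eigenvalue $\lambda_k$, and by Condition~(C2) the gap to the next eigenvalue is $\lambda_k-\lambda_{k+1}\ge\Delta$. I would then verify the hypotheses of Proposition~\ref{prop.ped} one by one: (i) $\rho(\widehat{\mbM}_k-\mbM_k,s)\le Cs\epsilon<\Delta$ since $\epsilon<\Delta/(4s)$; (ii) plugging $\rho(\widehat{\mbM}_k-\mbM_k,s)\le\Delta/4$ into the definition of $\gamma(s)$ in Proposition~\ref{prop.ped} yields $\gamma(s)\le\gamma^*=(\lambda_K-\tfrac{3}{4}\Delta)/(\lambda_K-\tfrac{1}{4}\Delta)$, so the constant $\mu$ produced by Proposition~\ref{prop.ped} is bounded above by the $\mu$ in Theorem~\ref{consistency}, which is assumed strictly less than $1$; (iii) the initial value condition $|\bv_0^\T\bbeta_k|\ge\theta+\delta_{\bI}(s)$ follows from the assumed $(\widehat{\bolbeta}_k^{(0)})^\T\bbeta_k\ge 2\theta$ together with $\delta_{\bI}(s)\le\theta$, which in turn uses $\epsilon<\theta$ and the explicit formula for $\delta_{\bI}(s)$.

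Once Proposition~\ref{prop.ped} applies, it yields at convergence
\begin{equation*}
\sqrt{1-|\widehat{\bbeta}_k^\T\bbeta_k|}\;\le\;\frac{\sqrt{10}\,\delta_{\bI}(s)}{1-\mu},
\end{equation*}
and the explicit form of $\delta_{\bI}(s)$ combined with $\rho(\widehat{\mbM}_k-\mbM_k,s)\le Cs\epsilon$ gives $\delta_{\bI}(s)\le Cs\epsilon$. Using the elementary identity $\sin^2\Theta=(1-|\cos\Theta|)(1+|\cos\Theta|)\le 2(1-|\cos\Theta|)$ converts this into $|\sin\Theta(\widehat{\bbeta}_k,\bbeta_k)|\le Cs\epsilon$, closing the induction.

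The main obstacle I foresee is bookkeeping in step (b): one has to keep track of how deflation by the \emph{estimated} $(\widehat{\bbeta}_l,\widehat{\lambda}_l)$ rather than the true $(\bbeta_l,\lambda_l)$ affects the spectrum of $\widehat{\mbM}_k$, and argue carefully that the leading eigen-structure of $\mbM_k$ (and in particular the gap $\Delta$) is not eroded through the accumulated perturbation from the previous $k-1$ steps. Lemma~\ref{lem.M.k+1.I} encapsulates the key inequality needed, but the constants absorbed into $C$ across the successive applications must be tracked so that $Cs\epsilon$ remains smaller than $\Delta/4$ uniformly in $k\le K$; since $K$ is fixed, this is harmless, and the tail probability $54p^2\exp\{-\epsilon^2 n/(36\log^3 n)\}$ carries over from Theorem~\ref{th:main} applied to the event $\|\widehat{\mbM}-\mbM\|_{\max}\le\epsilon$.
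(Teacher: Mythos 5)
Your proposal is correct and takes essentially the same route as the paper: the paper's own proof of this lemma is a one-line invocation of Lemma~\ref{lem.M.k+1.I} (to get $\rho(\widehat{\mbM}_k-\mbM_k,s)\le Cs\epsilon$ from the inductive hypothesis) followed by Proposition~\ref{prop.ped} applied to the deflated pair $(\mbM_k,\widehat{\mbM}_k)$. Your write-up simply fills in the hypothesis-checking and the $\sin$--$\cos$ conversion that the paper leaves implicit.
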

\begin{proof}[Proof of Lemma~\ref{lem.sin.k.I}]
The conclusion follows from Lemma~\ref{lem.M.k+1.I} and Proposition~\ref{prop.ped}.
\end{proof}

\begin{proof}[Proof of Theorem~\ref{consistency}]
Under the event $\Vert\widehat{\mbM}-\mbM\Vert_{\max}\le\epsilon$, we have that $|\sin\Theta(\widehat{\bolbeta}_k,\bolbeta_k)|\le Cs\epsilon$ by Lemmas~\ref{lem.beta1.I}--\ref{lem.sin.k.I}. Then by Theorem~\ref{th:main} we have the desired conclusion.
\end{proof}

\subsection{Proof for Theorem~\ref{consistency.GEP}}

In this subsection we prove Theorem~\ref{consistency.GEP}, where $\bSigma_{\bX}$ could be different from the identity matrix. We first present a simple lemma, which is a modified version of Lemma~6 in \citet{mai2019iterative}.

\begin{lemma}\label{angle}
For two vectors $\bu,\bv$ and a positive definite matrix $\bSigma$, define $\xi_{\bI}=1-\cos\bTheta(\bv,\bu), \xi_{\bSigma}=1-\cos\bTheta(\bSigma^{1/2}\bv,\bSigma^{1/2}\bu)$. We have that 
$$
\dfrac{\lambda_{\min}(\bSigma)}{\lambda_{\max}(\bSigma)}\xi_{\bSigma}\le \xi_{\bI}\le \dfrac{\lambda_{\max}(\bSigma)}{\lambda_{\min}(\bSigma)}\xi_{\bSigma}
$$
\end{lemma}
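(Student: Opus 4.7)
The plan is to reduce both quantities to squared Euclidean norms via the standard identity $1-\cos\Theta(\bx,\by)=\tfrac{1}{2}\|\bx/\|\bx\|-\by/\|\by\|\|^2$, then compare the two norms by sandwiching a quadratic form between $\lambda_{\min}(\bSigma)$ and $\lambda_{\max}(\bSigma)$. First, since $\cos\Theta$ is invariant under positive scaling of either argument, I may assume without loss of generality that $\|\bu\|=\|\bv\|=1$, so $\xi_{\bI}=1-\bu^\T\bv=\tfrac{1}{2}\|\bu-\bv\|^2$. Set $a=\|\bSigma^{1/2}\bu\|$ and $b=\|\bSigma^{1/2}\bv\|$; both lie in $[\sqrt{\lambda_{\min}(\bSigma)},\sqrt{\lambda_{\max}(\bSigma)}]$.

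Next I would derive the key identity
\[
2\xi_{\bSigma}
=\Big\|\tfrac{\bSigma^{1/2}\bu}{a}-\tfrac{\bSigma^{1/2}\bv}{b}\Big\|^2
=\Big\|\bSigma^{1/2}\!\big(\tfrac{\bu}{a}-\tfrac{\bv}{b}\big)\Big\|^2.
\]
The Rayleigh bounds then give $\lambda_{\min}(\bSigma)\,\|\bu/a-\bv/b\|^2 \le 2\xi_{\bSigma}\le \lambda_{\max}(\bSigma)\,\|\bu/a-\bv/b\|^2$. Expanding using $\|\bu\|=\|\bv\|=1$ and $\bu^\T\bv=1-\xi_{\bI}$ yields the algebraic identity
\[
\|\bu/a-\bv/b\|^2=(1/a-1/b)^2+2\xi_{\bI}/(ab),
\]
so in particular $\|\bu/a-\bv/b\|^2\ge 2\xi_{\bI}/(ab)$. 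Combining with the lower Rayleigh bound and $ab\le\lambda_{\max}(\bSigma)$ gives
\[
2\xi_{\bSigma}\ge \frac{2\lambda_{\min}(\bSigma)}{ab}\,\xi_{\bI}\ \Longrightarrow\ \xi_{\bI}\le \frac{ab}{\lambda_{\min}(\bSigma)}\xi_{\bSigma}\le \frac{\lambda_{\max}(\bSigma)}{\lambda_{\min}(\bSigma)}\xi_{\bSigma},
\]
which is the right-hand inequality.

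For the left-hand inequality I would exploit the symmetry of the setup: applying the just-proved inequality to the vectors $\bu':=\bSigma^{1/2}\bu$, $\bv':=\bSigma^{1/2}\bv$ with the matrix $\bSigma':=\bSigma^{-1}$ turns $\xi_{\bI}(\bu',\bv')$ into $\xi_{\bSigma}$ (of the original problem) and $\xi_{\bSigma'}((\bSigma')^{1/2}\bu',(\bSigma')^{1/2}\bv')=\xi_{\bI}$ (of the original problem), while $\lambda_{\max}(\bSigma')/\lambda_{\min}(\bSigma')=\lambda_{\max}(\bSigma)/\lambda_{\min}(\bSigma)$. The first inequality therefore reads $\xi_{\bSigma}\le \frac{\lambda_{\max}(\bSigma)}{\lambda_{\min}(\bSigma)}\xi_{\bI}$, i.e.\ $\frac{\lambda_{\min}(\bSigma)}{\lambda_{\max}(\bSigma)}\xi_{\bSigma}\le\xi_{\bI}$.

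There is no real obstacle here; the only subtle point is bookkeeping the normalization constants $a,b$ and recognizing that the cross term $(1/a-1/b)^2$ is harmless because it appears with a favorable sign for the bound I actually need (it is on the larger side of $2\xi_{\bI}/(ab)$, so dropping it only loses constants, not the direction of the inequality). The symmetry trick avoids having to bound $(1/a-1/b)^2$ from above altogether, which would otherwise require more careful control.
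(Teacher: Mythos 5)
Your proof is correct and follows essentially the same route as the paper's: both prove the right-hand inequality by sandwiching a quadratic form between $\lambda_{\min}(\bSigma)$ and $\lambda_{\max}(\bSigma)$ and then obtain the left-hand one from the $\bSigma\leftrightarrow\bSigma^{-1}$ duality. The only cosmetic difference is that the paper normalizes in the $\bSigma$-norm ($\bu^\T\bSigma\bu=\bv^\T\bSigma\bv=1$) and handles the mismatched Euclidean lengths via $x+1/x\ge 2$, whereas you normalize in the Euclidean norm and discard the nonnegative term $(1/a-1/b)^2$ --- the same inequality in disguise.
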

\begin{proof}[Proof of Lemma~\ref{angle}]
We first show the latter half of the desired inequality. Without loss of generality, we assume that $\bu^\T\bSigma\bu=1,\bv^\T\bSigma\bv=1$, because we can always normalize $\bu,\bv$ to satisfy these conditions.

Note that
\begin{eqnarray*}
\lambda_{\min}(\bSigma)(\bu-\bv)^\T(\bu-\bv)&\le&(\bu-\bv)^\T\bSigma(\bu-\bv)\\
&=&\bu^\T\bSigma\bu-2\bu^\T\bSigma\bv+\bv^\T\bSigma\bv=2\xi_{\bSigma}\\
\lambda_{\max}(\bSigma)\bu^\T\bu&\ge&\bu^\T\bSigma\bu=1\\
\lambda_{\max}(\bSigma)\bv^\T\bv&\ge&\bv^\T\bSigma\bv=1
\end{eqnarray*}
Consequently,
\begin{eqnarray*}
&&(\bu-\bv)^\T(\bu-\bv)\le\frac{2\xi_{\bSigma}}{\lambda_{\min}(\bSigma)}\\
&&\bu^\T\bu\ge 1/\lambda_{\max}(\bSigma), \bv^\T\bv\ge 1/\lambda_{\max}(\bSigma)
\end{eqnarray*}

Now
\begin{eqnarray*}
\dfrac{2\lambda_{\max}(\bSigma)\xi_{\bSigma}}{\lambda_{\min}(\bSigma)}&\ge&\dfrac{2\xi_{\bSigma}}{\lambda_{\min}(\bSigma)\sqrt{\bu^\T\bu}\sqrt{\bv^\T\bv}}\\
&\ge&\dfrac{(\bu-\bv)^\T(\bu-\bv)}{\sqrt{\bu^\T\bu}\sqrt{\bv^\T\bv}}=\dfrac{\sqrt{\bu^\T\bu}}{\sqrt{\bv^\T\bv}}+\dfrac{\sqrt{\bv^\T\bv}}{\sqrt{\bu^\T\bu}}-2\dfrac{\bu^\T\bv}{\sqrt{\bu^\T\bu}\sqrt{\bv^\T\bv}}\\
&\ge&2-2\dfrac{\bu^\T\bv}{\sqrt{\bu^\T\bu}\sqrt{\bv^\T\bv}}=2\xi_{\bI},
\end{eqnarray*}
and we have the second half of the inequality. For the first half of the inequality, define $\bu^*=\bSigma^{1/2}\bu,\bv^*=\bSigma^{1/2}$. Apply the second half of the inequality to vectors $\bu^*,\bv^*$ and matrix $\bSigma^{-1}$, and we have the desired conclusion.
\end{proof}

We now show the following lemma parallel to Lemma~\ref{lem2.2}. Recall that $\eta_k=|\sin\bTheta(\widehat{\bbeta}_k,\bbeta_k)|$.
\begin{lemma}\label{lem2.2.sigma}
If $\Vert\widehat{\bolbeta}_k\Vert_0\le s, \Vert\bolbeta_k\Vert_0\le s$, we have that
\beq
|\hat\lambda_k-\lambda_k|\le s(2\eta_k+1)\epsilon+2\dfrac{\lambda_{\max}(\bSigma_{\bX})}{\lambda_{\min}(\bSigma_{\bX})}(\lambda_1+\lambda_k)\eta_k^2.
\eeq
\end{lemma}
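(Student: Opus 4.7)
The plan is to mimic the three-term decomposition used in the proof of Lemma~\ref{lem2.2} and then replace the identity-covariance spectral argument with one that uses the $\bSigma_{\mbX}$-weighted spectral decomposition, picking up the condition-number factor via Lemma~\ref{angle}. Write
\[
|\hat\lambda_k-\lambda_k|\le L_1+L_2+L_3,
\]
where $L_1=|\langle\widehat\bM-\bM,\hat\bbeta_k\hat\bbeta_k^\T-\bbeta_k\bbeta_k^\T\rangle|$, $L_2=|\langle\widehat\bM-\bM,\bbeta_k\bbeta_k^\T\rangle|$, and $L_3=|\langle\bM,\hat\bbeta_k\hat\bbeta_k^\T-\bbeta_k\bbeta_k^\T\rangle|$. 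Under the sparsity assumption $\Vert\hat\bbeta_k\Vert_0,\Vert\bbeta_k\Vert_0\le s$, the bounds $L_1\le 2s\eta_k\epsilon$ and $L_2\le s\epsilon$ follow exactly as in Lemma~\ref{lem2.2}, via Lemmas~\ref{lem1} and~\ref{lem3}, the max-norm/$\ell_1$-norm H\"older inequality, and the event $\Vert\widehat\bM-\bM\Vert_{\max}\le\epsilon$.

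The substantive step is $L_3$. By Lemma~\ref{ed} (and the argument in the proof of Lemma~\ref{sub}) we have the spectral representation
\[
\bM=\bSigma_{\mbX}\Bigl(\sum_{j=1}^{p}\lambda_j\bbeta_j\bbeta_j^\T\Bigr)\bSigma_{\mbX},\qquad \bbeta_j^\T\bSigma_{\mbX}\bbeta_l=\delta_{jl}.
\]
Introduce the orthonormal vectors $\bu_j=\bSigma_{\mbX}^{1/2}\bbeta_j$ and set $\hat\bu_k=\bSigma_{\mbX}^{1/2}\hat\bbeta_k$ (assuming the normalization $\hat\bbeta_k^\T\bSigma_{\mbX}\hat\bbeta_k=1$, which is the natural theoretical counterpart of the $\widehat\bSigma_{\mbX}$-scaling in Algorithm~\ref{alg2}). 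Then $\hat\bbeta_k^\T\bM\hat\bbeta_k=\sum_j\lambda_j(\hat\bu_k^\T\bu_j)^2$ and $\sum_j(\hat\bu_k^\T\bu_j)^2=\Vert\hat\bu_k\Vert_2^2=1$, so exactly as in Lemma~\ref{lem2.2} we obtain
\[
L_3\le(\lambda_1+\lambda_k)\bigl(1-(\hat\bu_k^\T\bu_k)^2\bigr)=(\lambda_1+\lambda_k)\sin^2\Theta\bigl(\bSigma_{\mbX}^{1/2}\hat\bbeta_k,\bSigma_{\mbX}^{1/2}\bbeta_k\bigr).
\]

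Finally I would translate this $\bSigma_{\mbX}$-weighted angle back to the Euclidean angle $\eta_k=|\sin\Theta(\hat\bbeta_k,\bbeta_k)|$ using Lemma~\ref{angle}, which gives
\[
1-\cos\Theta\bigl(\bSigma_{\mbX}^{1/2}\hat\bbeta_k,\bSigma_{\mbX}^{1/2}\bbeta_k\bigr)\le\frac{\lambda_{\max}(\bSigma_{\mbX})}{\lambda_{\min}(\bSigma_{\mbX})}\bigl(1-\cos\Theta(\hat\bbeta_k,\bbeta_k)\bigr).
\]
Using $\sin^2\Theta=(1-\cos\Theta)(1+\cos\Theta)\le 2(1-\cos\Theta)$ on the left and $1-\cos\Theta=\sin^2\Theta/(1+\cos\Theta)\le\sin^2\Theta$ on the right yields $\sin^2\Theta(\bSigma_{\mbX}^{1/2}\hat\bbeta_k,\bSigma_{\mbX}^{1/2}\bbeta_k)\le 2\tfrac{\lambda_{\max}(\bSigma_{\mbX})}{\lambda_{\min}(\bSigma_{\mbX})}\eta_k^2$, so $L_3\le 2\tfrac{\lambda_{\max}(\bSigma_{\mbX})}{\lambda_{\min}(\bSigma_{\mbX})}(\lambda_1+\lambda_k)\eta_k^2$. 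Adding $L_1+L_2+L_3$ gives the stated bound.

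The main obstacle is the $L_3$ step: in the $\bSigma_{\mbX}=\bI$ case the eigenvectors were an orthonormal basis so the Parseval identity was automatic, whereas here one has to first pass to the $\bSigma_{\mbX}^{1/2}$-rotated coordinates (which uses the right normalization of $\hat\bbeta_k$) and then absorb the mismatch between the weighted and Euclidean angles. A small point to be careful about is the normalization of $\hat\bbeta_k$: Algorithm~\ref{alg2} scales by $\widehat\bSigma_{\mbX}$ rather than $\bSigma_{\mbX}$, but that discrepancy does not appear in this lemma's statement and can be subsumed in the overall analysis by treating $\hat\bbeta_k$ as $\bSigma_{\mbX}$-normalized here (the remaining $\widehat\bSigma_{\mbX}$-vs-$\bSigma_{\mbX}$ error is handled by the concentration results applied at the next level of the proof of Theorem~\ref{consistency.GEP}).
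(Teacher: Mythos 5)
Your proposal is correct and follows essentially the same route as the paper's proof: the identical three-term decomposition with $L_1,L_2$ handled by Lemmas~\ref{lem1} and~\ref{lem3}, and $L_3$ bounded via the $\bSigma_{\mbX}$-weighted spectral representation of $\bM$, the Parseval identity in the $\bSigma_{\mbX}^{1/2}$-rotated coordinates, and Lemma~\ref{angle} combined with $1-\cos^2\Theta\le 2(1-\cos\Theta)$ and $1-\cos\Theta\le\sin^2\Theta$ (the latter using the harmless sign convention $\cos\Theta(\hat\bbeta_k,\bbeta_k)\ge 0$). Your explicit remark about the $\widehat\bSigma_{\mbX}$-versus-$\bSigma_{\mbX}$ normalization of $\hat\bbeta_k$ is a point the paper leaves implicit, but you resolve it the same way the paper does.
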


\begin{proof}[Proof of Lemma~\ref{lem2.2.sigma}]

Note that 
\bea
&&|\hat\lambda_k-\lambda_k|=|\langle\widehat{\bM},\hat\bbeta_k\hat\bbeta_k^\T\rangle-\langle\bM,\bbeta_k\bbeta_k^\T\rangle|\\
&\le& |\langle\widehat{\bM}-\bM,\hat\bbeta_k\hat\bbeta_k^\T-\bbeta_k\bbeta_k^\T\rangle|+|\langle\widehat{\bM}-\bM,\bbeta_k\bbeta_k^\T\rangle|+|\langle\bM,\hat\bbeta_k\hat\bbeta_k^\T-\bbeta_k\bbeta_k^\T\rangle|\\
&\equiv& L_1+L_2+L_3
\eea

By Lemma~\ref{lem1},
\bea
L_1\le \Vert\widehat{\bM}-\bM\Vert_{\max}\Vert\vecc(\hat\bbeta_k\hat\bbeta_k^\T-\bbeta_k\bbeta_k^\T)\Vert_1\le 2s \eta_k\Vert\widehat{\bM}-\bM\Vert_{\max}\le 2s \eta_k\epsilon.
\eea

By Lemma~\ref{lem3},
\bea
L_2\le \Vert\widehat{\bM}-\bM\Vert_{\max}\Vert\vecc(\bbeta_k\bbeta_k^\T)\Vert_1\le s\Vert\widehat{\bM}-\bM\Vert_{\max}\le s\epsilon.
\eea

For $L_3$, note that $\bM=\bSigma_{\bX}\left(\sum_{j=1}^p\lambda_j\bbeta_j\bbeta_j^\T\right)\bSigma_{\bX}$ by the proof of Lemma~\ref{sub}, which implies that
$$
\langle\bM,\widehat{\bbeta}_k\widehat{\bbeta}_k^\T\rangle=\sum_{j=1}^p\lambda_j(\widehat{\bbeta}_k^\T\bSigma_{\bX}\bbeta_j)^2=\sum_{j=1}^p\lambda_j\cos^2\bTheta(\bSigma_{\bX}^{1/2}\hat\bbeta_k,\bSigma_{\bX}^{1/2}\bbeta_j)
$$
Also note that
$\sum_{j=1}^p\cos^2\bTheta(\bSigma_{\bX}^{1/2}\hat\bbeta_k,\bSigma_{\bX}^{1/2}\bbeta_j)=1$. Without loss of generality, assume that $\cos\bTheta(\bbeta_k,\hat\bbeta_k)>0$. We have that
\bean
L_3&=&|\sum_{j=1}^p\lambda_j\cos^2\bTheta(\bSigma_{\bX}^{1/2}\bbeta_j,\bSigma_{\bX}^{1/2}\hat\bbeta_k)-\lambda_k| \\
&\le&\sum_{j\ne k}\lambda_j\cos^2\bTheta(\bSigma_{\bX}^{1/2}\widehat{\bbeta}_k,\bSigma_{\bX}^{1/2}\bbeta_j)+\lambda_k|\cos^2\bTheta(\bSigma_{\bX}^{1/2}\widehat{\bbeta}_k,\bSigma_{\bX}^{1/2}\bbeta_k)-1|\\
&\le&\lambda_1\sum_{j\ne k}\cos^2\bTheta(\bSigma_{\bX}^{1/2}\widehat{\bbeta}_k,\bSigma_{\bX}^{1/2}\bbeta_j)+\lambda_k(1-\cos^2\bTheta(\bSigma_{\bX}^{1/2}\widehat{\bbeta}_k,\bSigma_{\bX}^{1/2}\bbeta_k))\\
&\le& (\lambda_1+\lambda_k)(1-\cos^2\bTheta(\bSigma_{\bX}^{1/2}\bbeta_k,\bSigma_{\bX}^{1/2}\hat\bbeta_k))\\
&=&(\lambda_1+\lambda_k)(1-\cos\bTheta(\bSigma_{\bX}^{1/2}\bbeta_k,\bSigma_{\bX}^{1/2}\hat\bbeta_k))(1+\cos\bTheta(\bSigma_{\bX}^{1/2}\bbeta_k,\bSigma_{\bX}^{1/2}\hat\bbeta_k))\\
&\le&2(\lambda_1+\lambda_k)(1-\cos\bTheta(\bSigma_{\bX}^{1/2}\bbeta_k,\bSigma_{\bX}^{1/2}\hat\bbeta_k))\\
&\le&2\dfrac{\lambda_{\max}(\bSigma_{\bX})}{\lambda_{\min}(\bSigma_{\bX})}(\lambda_1+\lambda_k)(1-\cos\bTheta(\bbeta_k,\hat\bbeta_k)),
\eean
where the last inequality follows from Lemma~\ref{angle}. Further note that $1-\cos\bTheta(\bbeta_k,\hat\bbeta_k)=\dfrac{\sin^2\bTheta(\bbeta_k,\hat\bbeta_k)}{1+\cos\bTheta(\bbeta_k,\hat\bbeta_k)}\le \eta_k^2$ and we have the desired conclusion.
\end{proof}

We consider the event $\Vert\widehat{\bM}-\bM\Vert_{\max}\le \epsilon,\Vert\widehat{\bSigma}_{\bX}-\bSigma_{\bX}\Vert_{\max}\le\epsilon$ for $\sqrt{2}s\epsilon<\min\{1/2,\lambda_{\min}(\bSigma_{\bX})\}$ and $\frac{\sqrt{2}s\epsilon}{\lambda_{\min}(\bSigma_{\bX})}+\frac{\sqrt{2}s\epsilon}{\lambda_{\min}(\bSigma_{\bX})\lambda_1}<a^*$ and $\dfrac{\sqrt{2}s\epsilon}{\lambda_{\min}(\bSigma_{\bX})}<\frac{\Delta}{2(1+\lambda_1^2)}$, where $\Delta$ is defined as in Condition~(C2) and $a^*=\min\{\frac{1}{2},\frac{\Delta}{\lambda_1+\lambda_2},\frac{\lambda_{\min}{(\bSigma_{\bX})}}{2}\}$. As a direct consequence, $\rho(\widehat{\bSigma}_{\bX}-\bSigma_{\bX},s)\le s\epsilon$. Also note that $\tilde\bolbeta_k=\dfrac{\widehat\bolbeta_k}{\Vert\hatbolbeta_k\Vert_2}$. 
In the RIFLE algorithm, $\bbeta_k^0$ is chosen to be sufficiently close to $\bbeta_k$, and the step size $\eta$ satisfies that $\eta\le\frac{\lambda_{\min}(\bSigma_{\bX})}{2}$ and
\beq
\sqrt{1+2\{(d/s')^{1/2}+d/s'\}}\sqrt{1-\frac{1}{16}\eta\lambda_{\min}(\bSigma_{\bX})\frac{\frac{\Delta}{\lambda_K}}{\kappa(\bSigma_{\bX})+1}}<1.
\eeq

Without loss of generality, in what follows we assume that $\cos\bTheta(\widehat{\bbeta}_j,\bbeta_k)>0$, because otherwise we can always consider $-\widehat{\bbeta}_j$, which spans the same susbspace as $\widehat{\bbeta}_j$.

\begin{lemma}\label{lem.GEP.beta1}
For the first direction $\widehat{\bolbeta}_1$, we have that $|\sin\Theta(\widehat{\bolbeta}_1,\bolbeta_1)|\le Cs\epsilon$ and $|\hat{\lambda}_1-\lambda_1|\le Cs\epsilon$. 
\end{lemma}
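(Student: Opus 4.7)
The plan is to invoke the generic bound in Lemma~\ref{lem.GEP1} with $\bA = \bM$, $\bB = \bolSigma_{\bX}$, $\widehat{\bA} = \widehat{\bM}$, $\widehat{\bB} = \widehat{\bolSigma}_{\bX}$, and $\bv^* = \bolbeta_1$. The main preparatory step is to control $\rho(\bE_{\bA}, s)$, $\rho(\bE_{\bB}, s)$, and hence $\delta(s)$ on the good event $\{\|\widehat{\bM} - \bM\|_{\max} \le \epsilon\} \cap \{\|\widehat{\bolSigma}_{\bX} - \bolSigma_{\bX}\|_{\max} \le \epsilon\}$. For any $\bu$ with $\|\bu\|_2 = 1$ and $\|\bu\|_0 \le s$, the Cauchy--Schwarz inequality gives $\|\bu\|_1 \le \sqrt{s}$, so $|\bu^\T \bE_{\bA} \bu| \le \|\bu\|_1^2 \|\widehat{\bM} - \bM\|_{\max} \le s\epsilon$, and analogously for $\bE_{\bB}$. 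Therefore $\rho(\bE_{\bA}, s), \rho(\bE_{\bB}, s) \le s\epsilon$ and $\delta(s) \le \sqrt{2}\, s\epsilon$.

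Next I would verify each hypothesis of Lemma~\ref{lem.GEP1} using the standing assumptions that $\sqrt{2} s\epsilon < \min\{1/2, \lambda_{\min}(\bolSigma_{\bX})\}$, $\frac{\sqrt{2}s\epsilon}{\lambda_{\min}(\bolSigma_{\bX})} + \frac{\sqrt{2}s\epsilon}{\lambda_{\min}(\bolSigma_{\bX})\lambda_1} < a^*$, and $\frac{\sqrt{2}s\epsilon}{\lambda_{\min}(\bolSigma_{\bX})} < \frac{\Delta}{2(1+\lambda_1^2)}$. The first inequality gives $\delta(s) < \min\{1/2, \lambda_{\min}(\bolSigma_{\bX})/2\}$, which also yields Assumption~\ref{assumption1} with constants $b, c$ bounded away from the critical thresholds for all sufficiently large $n$. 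The second inequality delivers $\frac{2\delta(s)}{\lambda_{\min}(\bolSigma_{\bX})} + \frac{2\delta(s)}{\lambda_{\min}(\bolSigma_{\bX}) \lambda_1(F)} < a^*$ (using $\lambda_1(F) \ge \lambda_1$ for the relevant sparse $F$), and the third gives $\xi^* > 2 \delta(s)/\mathrm{cr}(s)$ since $\mathrm{cr}(s) \ge \lambda_{\min}(\bolSigma_{\bX})$. The step-size condition $\eta \lambda_{\min}(\bolSigma_{\bX}) \le 1/2$ together with the stated choice of $\eta$ ensures $\nu^* < 1$. Finally, the closeness of $\widehat{\bolbeta}_1^{(0)}$ to $\bolbeta_1$ is exactly the initial-value hypothesis imposed before the lemma.

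Applying Lemma~\ref{lem.GEP1} then yields
\begin{equation*}
|\sin\Theta(\widehat{\bolbeta}_1, \bolbeta_1)| \;\le\; \frac{\sqrt{20}}{1-\nu^*}\cdot\frac{2}{\xi^*\{\mathrm{cr}(s) - \delta(s)\}}\,\delta(s) \;\le\; C s \epsilon,
\end{equation*}
where $C$ depends only on $\lambda_1, \lambda_K, \Delta, \lambda_{\min}(\bolSigma_{\bX}), \lambda_{\max}(\bolSigma_{\bX})$ and $\eta$. For the eigenvalue bound, I would invoke Lemma~\ref{lem2.2.sigma} with $k=1$ and $\eta_1 = |\sin\Theta(\widehat{\bolbeta}_1, \bolbeta_1)| \le Cs\epsilon$, which directly gives $|\widehat{\lambda}_1 - \lambda_1| \le s(2\eta_1 + 1)\epsilon + 2\kappa(\bolSigma_{\bX})(\lambda_1 + \lambda_1)\eta_1^2 \le Cs\epsilon$, absorbing the $\eta_1^2 = O(s^2 \epsilon^2)$ term into the linear-in-$s\epsilon$ bound since $s\epsilon$ is assumed small.

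The main obstacle is the bookkeeping: verifying the cascade of conditions in Lemma~\ref{lem.GEP1} requires carefully matching the scalings of $s\epsilon$ against the problem-dependent quantities $\lambda_{\min}(\bolSigma_{\bX})$, $\Delta$, $\lambda_1$, and $\eta$, and making sure the generic constant $C$ absorbed at the end depends only on these fixed parameters rather than on $n$ or $p$. Everything else is direct application of previously established machinery.
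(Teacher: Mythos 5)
Your proposal is correct and follows essentially the same route as the paper's proof: bound $\rho(\widehat{\bM}-\bM,s)$ and $\rho(\widehat{\bolSigma}_{\bX}-\bolSigma_{\bX},s)$ by $s\epsilon$ via the $\ell_1$--max-norm duality, conclude $\delta(s)\le\sqrt{2}\,s\epsilon$, invoke Lemma~\ref{lem.GEP1} under the standing assumptions on $\epsilon$ for the angle bound, and then apply Lemma~\ref{lem2.2.sigma} for the eigenvalue bound. Your write-up is merely more explicit than the paper's in verifying the individual hypotheses of Lemma~\ref{lem.GEP1}, which the paper compresses into ``under our assumptions about $\epsilon$.''
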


\begin{proof}[Proof of Lemma~\ref{lem.GEP.beta1}]
It is easy to see that 
\bea
\rho(\widehat{\bM}-\bM,s)=\sup_{\Vert\bu\Vert_2=1,\Vert\bu\Vert_0\le s} s|\bu^\T(\widehat{\bM}-\bM)\bu|\le \sup_{\Vert\bu\Vert_2=1,\Vert\bu\Vert_0\le s}\Vert\bu\Vert_1^2\cdot \Vert\widehat{\bM}-\bM\Vert_{\max}\le s\epsilon.
\eea

Similarly, $\rho(\widehat{\bSigma}_{\bX}-\bSigma_{\bX},s)\le s\epsilon$. Denote $\delta_1(s)=\sqrt{\rho^2(\widehat{\bM}-\bM,s)+\rho^2(\widehat{\bSigma}_{\bX}-\bSigma_{\bX},s)}$.
It follows that $\delta_1(s)\le \sqrt{2}s\epsilon$. Under our assumptions about $\epsilon$, we have that $|\sin\Theta(\widehat{\bolbeta}_1,\bolbeta_1)|\le Cs\epsilon$ by Lemma~\ref{lem.GEP1}. Lemma~\ref{lem2.2.sigma} further implies that $|\hat{\lambda}_1-\lambda_1|\le Cs\epsilon$.
\end{proof}

\begin{lemma}\label{lem.M.k+1}
If $|\sin\Theta(\widehat{\bolbeta}_j,\bolbeta_j)|\le Cs\epsilon$ for sufficiently small $\epsilon$, all $j\le k$, then $\delta(s)\le Cs\epsilon$.
\end{lemma}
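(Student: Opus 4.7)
The plan is to reduce $\delta(s)$ for the deflated iteration to bounds we already control: we have $\rho(\widehat{\bSigma}_{\bX}-\bSigma_{\bX},s)\le s\epsilon$ on the conditioning event (noted right before the lemma), so it suffices to bound $\rho(\widehat{\bM}_{k+1}-\bM_{k+1},s)$ by $Cs\epsilon$ and take the Euclidean combination. The natural decomposition is
\[
\widehat{\bM}_{k+1}-\bM_{k+1}=(\widehat{\bM}-\bM)-\sum_{l\le k}\bN_l,\qquad \bN_l=\widehat{\bSigma}_{\bX}\hat\lambda_l\widehat{\bolbeta}_l\widehat{\bolbeta}_l^\T\widehat{\bSigma}_{\bX}-\bSigma_{\bX}\lambda_l\bolbeta_l\bolbeta_l^\T\bSigma_{\bX},
\]
so by the triangle inequality I will control $\rho(\widehat{\bM}-\bM,s)$ and each $\rho(\bN_l,s)$ separately. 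The first piece is immediate: for any $s$-sparse unit vector $\bu$, $|\bu^\T(\widehat{\bM}-\bM)\bu|\le\|\bu\|_1^2\|\widehat{\bM}-\bM\|_{\max}\le s\epsilon$, exactly as in the proof of Lemma~\ref{lem.GEP.beta1}.

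For the deflation terms, I will telescope each $\bN_l$ into three pieces by inserting $\bSigma_{\bX}(\hat\lambda_l\widehat{\bolbeta}_l\widehat{\bolbeta}_l^\T)\widehat{\bSigma}_{\bX}$ and $\bSigma_{\bX}(\hat\lambda_l\widehat{\bolbeta}_l\widehat{\bolbeta}_l^\T)\bSigma_{\bX}$, giving
\[
\bN_l=(\widehat{\bSigma}_{\bX}-\bSigma_{\bX})\hat\lambda_l\widehat{\bolbeta}_l\widehat{\bolbeta}_l^\T\widehat{\bSigma}_{\bX}+\bSigma_{\bX}\bigl(\hat\lambda_l\widehat{\bolbeta}_l\widehat{\bolbeta}_l^\T-\lambda_l\bolbeta_l\bolbeta_l^\T\bigr)\widehat{\bSigma}_{\bX}+\bSigma_{\bX}\lambda_l\bolbeta_l\bolbeta_l^\T(\widehat{\bSigma}_{\bX}-\bSigma_{\bX}).
\]
For the sandwich terms involving $\widehat{\bSigma}_{\bX}-\bSigma_{\bX}$, I will exploit that both $\bu$ and $\widehat{\bolbeta}_l$ (or $\bolbeta_l$) are $s$-sparse with unit $\ell_2$ norm, so $|\bu^\T(\widehat{\bSigma}_{\bX}-\bSigma_{\bX})\widehat{\bolbeta}_l|\le s\epsilon$ by Cauchy--Schwarz in $\ell_1$--$\ell_\infty$, while the companion factor $|\widehat{\bolbeta}_l^\T\widehat{\bSigma}_{\bX}\bu|$ is bounded by $\lambda_{\max}(\bSigma_{\bX})+s\epsilon$ after one more split into $\bSigma_{\bX}$ plus $(\widehat{\bSigma}_{\bX}-\bSigma_{\bX})$ and the operator bound $\lambda_{\max}(\bSigma_{\bX})$; together with $\hat\lambda_l=O(1)$ from Condition (C3) and Lemma~\ref{lem2.2.sigma}, this piece contributes $Cs\epsilon$.

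For the rank-one difference in the middle, I further split $\hat\lambda_l\widehat{\bolbeta}_l\widehat{\bolbeta}_l^\T-\lambda_l\bolbeta_l\bolbeta_l^\T=(\hat\lambda_l-\lambda_l)\widehat{\bolbeta}_l\widehat{\bolbeta}_l^\T+\lambda_l(\widehat{\bolbeta}_l\widehat{\bolbeta}_l^\T-\bolbeta_l\bolbeta_l^\T)$. The first summand uses $|\hat\lambda_l-\lambda_l|\le Cs\epsilon$ from Lemma~\ref{lem2.2.sigma} (applicable since $|\sin\Theta(\widehat{\bolbeta}_l,\bolbeta_l)|\le Cs\epsilon\le 1$ by hypothesis). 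The second summand I write as $(\widehat{\bolbeta}_l-\bolbeta_l)\widehat{\bolbeta}_l^\T+\bolbeta_l(\widehat{\bolbeta}_l-\bolbeta_l)^\T$, then bound $\|\widehat{\bolbeta}_l-\bolbeta_l\|_2\le\sqrt{2}\,|\sin\Theta(\widehat{\bolbeta}_l,\bolbeta_l)|\le Cs\epsilon$ using the WLOG sign convention already adopted in the preceding lemmas, and push everything through with $\|\bSigma_{\bX}\bu\|_2\le\lambda_{\max}(\bSigma_{\bX})$ and $|\bolbeta_l^\T\widehat{\bSigma}_{\bX}\bu|\le\lambda_{\max}(\bSigma_{\bX})+s\epsilon$; this again gives $Cs\epsilon$.

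Summing the three pieces over the finitely many $l\le k\le K$ yields $\rho(\bN_l,s)\le Cs\epsilon$ and hence $\rho(\widehat{\bM}_{k+1}-\bM_{k+1},s)\le Cs\epsilon$, whereupon $\delta(s)\le\sqrt{2}\,Cs\epsilon$. The main technical obstacle is bookkeeping inside the quadratic form $\bu^\T\bN_l\bu$ for $s$-sparse $\bu$: one must avoid bounds of the form $\|\widehat{\bSigma}_{\bX}\|_{\mathrm{op}}$ or $\|\widehat{\bM}\|_{\mathrm{op}}$, which are not controlled in high dimensions, and instead always convert operator factors into inner products with $s$-sparse unit vectors, where either the $\ell_1$--$\ell_\infty$ bound (for deviation terms) or the $\ell_2$ bound via $\lambda_{\max}(\bSigma_{\bX})$ (for population terms) applies. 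Once every factor has been routed through one of these two channels, the proof is essentially a bookkeeping exercise.
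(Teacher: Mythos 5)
Your overall architecture matches the paper's proof: the same decomposition $\widehat{\bM}_{k+1}-\bM_{k+1}=(\widehat{\bM}-\bM)-\sum_{l\le k}\bN_l$, the same $\ell_1$--$\ell_\infty$ bound $\rho(\widehat{\bM}-\bM,s)\le s\epsilon$, and the same use of Lemma~\ref{lem2.2.sigma} for $|\hat\lambda_l-\lambda_l|$. Whether you telescope $\bN_l$ into three matrix pieces or (as the paper does) factor the scalar quadratic form $\bu^\T\bN_l\bu$ into a sum-times-difference is cosmetic. But there is one genuine gap: your bound $\|\widehat{\bolbeta}_l-\bolbeta_l\|_2\le\sqrt{2}\,|\sin\Theta(\widehat{\bolbeta}_l,\bolbeta_l)|$ is only valid for $\ell_2$-unit vectors, and in the generalized eigen-decomposition setting neither vector is $\ell_2$-normalized. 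The population vector satisfies $\bolbeta_l^\T\bSigma_{\bX}\bolbeta_l=1$ and the estimator is $\widehat{\bolbeta}_l=\widetilde{\bolbeta}_l/\sqrt{\widetilde{\bolbeta}_l^\T\widehat{\bSigma}_{\bX}\widetilde{\bolbeta}_l}$ with $\|\widetilde{\bolbeta}_l\|_2=1$. The angle is scale-invariant, so it says nothing about the discrepancy between the two normalizing constants, and $\|\widehat{\bolbeta}_l-\bolbeta_l\|_2$ can be large even when the angle is zero if the lengths differ.

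The repair is exactly the extra computation in the paper's proof: write $\bolbeta_l^*=\bolbeta_l/\|\bolbeta_l\|_2$ and decompose
\begin{equation*}
\|\widehat{\bolbeta}_l-\bolbeta_l\|_2\le\frac{1}{\sqrt{\widetilde{\bolbeta}_l^\T\widehat{\bSigma}_{\bX}\widetilde{\bolbeta}_l}}\Bigl\{\|\widetilde{\bolbeta}_l-\bolbeta_l^*\|_2+\Bigl|\tfrac{1}{\|\bolbeta_l\|_2}-\sqrt{\widetilde{\bolbeta}_l^\T\widehat{\bSigma}_{\bX}\widetilde{\bolbeta}_l}\Bigr|\cdot\|\bolbeta_l\|_2\Bigr\}.
\end{equation*}
The first term is a difference of $\ell_2$-unit vectors, so your $\sqrt{2}\,|\sin\Theta|$ bound applies there. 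The second term requires showing $\bigl|(\bolbeta_l^*)^\T\bSigma_{\bX}\bolbeta_l^*-\widetilde{\bolbeta}_l^\T\widehat{\bSigma}_{\bX}\widetilde{\bolbeta}_l\bigr|\le Cs\epsilon$, which follows from $\|\widehat{\bSigma}_{\bX}-\bSigma_{\bX}\|_{\max}\le\epsilon$ plus sparsity (for swapping $\widehat{\bSigma}_{\bX}$ with $\bSigma_{\bX}$) and the angle bound again (for swapping $\widetilde{\bolbeta}_l$ with $\bolbeta_l^*$ inside the quadratic form), together with $\sqrt{\widetilde{\bolbeta}_l^\T\widehat{\bSigma}_{\bX}\widetilde{\bolbeta}_l}\ge c\sqrt{\lambda_{\min}(\bSigma_{\bX})}$ once $s\epsilon$ is small. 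With this insertion your argument closes; without it, the key estimate $L_3=|\bu^\T\bSigma_{\bX}(\widehat{\bolbeta}_l-\bolbeta_l)|\le Cs\epsilon$ is unjustified.
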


\begin{proof}[Proof of Lemma~\ref{lem.M.k+1}]
It suffices to show that $\rho(\widehat{\bM}_{k+1}-\bM_{k+1},s)\le Cs\epsilon$.
Define $\bN_{l}=\widehat{\bSigma}_{\bX}\hat{\lambda}_l\hat{\bbeta}_l\hat{\bbeta}_l^\T\widehat{\bSigma}_{\bX}-\bSigma_{\bX}\lambda_l\bbeta_l\bbeta_l^\T\bSigma_{\bX}$. Then
\beq
\widehat{\bM}_{k+1}-\bM_{k+1}=(\widehat{\bM}-\bM)-\sum_{l\le k} \bN_l.
\eeq
It follows that 
\beq\label{lem.M.k+1.eq1}
\rho(\widehat{\bM}_{k+1}-\bM_{k+1},s)\le \rho(\widehat{\bM}-\bM,s)+\sum_{l\le k} \rho(\bN_l,s).
\eeq
According to the proof in Lemma~\ref{lem.GEP.beta1}, $\rho(\widehat{\bM}-\bM,s)\le s\epsilon$. 

For any vector $\bu$, we have
\bea
\bu^\T\bN_l\bu&=&\hat{\lambda}_l(\bu^\T\widehat{\bSigma}_{\bX}\hat{\bbeta}_l)^2-{\lambda}_l(\bu^\T{\bSigma}_{\bX}{\bbeta}_l)^2\\
&=&(\hat{\lambda}_l-\lambda_l)(\bu^\T\widehat{\bSigma}_{\bX}\widehat{\bbeta}_l)^2+\lambda_l\{(\bu^\T\widehat{\bSigma}_{\bX}\widehat\bbeta_l)^2-(\bu^\T\bSigma_{\bX}\bbeta_l)\}\equiv L_1+L_2
\eea
By Lemma~\ref{lem2.2.sigma}, we have that $|\hat\lambda_l-\lambda_l|\le Cs\epsilon$ when $\sin\Theta(\widehat{\bolbeta}_j,\bolbeta_j)\le Cs\epsilon$. Also,
\bea
|\bu^\T\widehat{\bSigma}_{\bX}\widehat\bbeta_l|&\le& |\bu^\T\bSigma_{\bX}\widehat\bbeta_l|+\Vert\bu\Vert_1\cdot\Vert\widehat\bbeta_l\Vert_1\Vert\widehat{\bSigma}_{\bX}-\bSigma_{\bX}\Vert_{\max}\le |\bu^\T\bSigma_{\bX}\bbeta_l|+s\epsilon\\
&\le& \sqrt{(\bu^\T\bSigma_{\bX}\bu)\cdot (\widehat\bbeta_l^\T\bSigma_{\bX}\widehat\bbeta_l)}+s\epsilon\le \lambda_{\max}(\bSigma_{\bX})+s\epsilon.
\eea
It follows that $L_1\le Cs\epsilon$. For $L_2$, note that $(\bu^\T\widehat{\bSigma}_{\bX}\widehat\bbeta_l)^2-(\bu^\T\bSigma_{\bX}\bbeta_l)=(\bu^\T\widehat{\bSigma}_{\bX}\widehat\bbeta_l+\bu^\T\bSigma_{\bX}\bbeta_l)(\bu^\T\widehat{\bSigma}_{\bX}\widehat\bbeta_l-\bu^\T\bSigma_{\bX}\bbeta_l)$. On one hand,
\bea
&&|\bu^\T\widehat{\bSigma}_{\bX}\widehat\bbeta_l+\bu^\T\bSigma_{\bX}\bbeta_l|\le |\bu^\T\widehat\bSigma_{\bX}\widehat{\bbeta}_l|+|\bu^\T\bSigma_{\bX}\bbeta_l|\\
&\le& \sqrt{\bu^\T\widehat{\bSigma}_{\bX}\bu\cdot \widehat{\bbeta}_l^\T\widehat{\bSigma}_{\bX}\widehat{\bbeta}_l}+\sqrt{\bu^\T\bSigma\bu\cdot \bbeta_l^\T\bSigma\bbeta_l}\\
&=&\sqrt{\bu^\T\widehat{\bSigma}_{\bX}\bu}+\sqrt{\bu^\T\bSigma_{\bX}\bu}=\sqrt{\bu^\T\bSigma_{\bX}\bu+\bu^\T(\widehat{\bSigma}_{\bX}-\bSigma_{\bX})\bu}+\sqrt{\bu^\T\bSigma\bu}\\
&\le& \sqrt{\lambda_{\max}(\bSigma_{\bX})+s\epsilon}+\sqrt{\lambda_{\max}(\bSigma_{\bX})}
\eea

On the other hand,
\bea
|\bu^\T\widehat{\bSigma}_{\bX}\hat\bbeta_l-\bu^\T\bSigma_{\bX}\bbeta_l|&\le& |\bu^\T(\widehat{\bSigma}_{\bX}-\bSigma_{\bX})\hat\bbeta_l|+|\bu^\T\bSigma_{\bX}(\hat{\bbeta}_l-\bbeta_l)|\\
&\le& s\epsilon+|\bu^\T\bSigma_{\bX}(\hat{\bbeta}_l-\bbeta_l)|\\
&\equiv& s\epsilon+L_3.
\eea

Further note that
\bea
L_3&\le&\sqrt{\bu^\T\bSigma_{\bX}\bu\cdot (\hat{\bbeta}_l-\bbeta_l)^\T\bSigma_{\bX}(\hat\bbeta_l-\bbeta_l)}\le \lambda_{\max}(\bSigma_{\bX})\cdot \Vert\hat{\bbeta}_l-\bbeta_l\Vert_2\\
&=&\lambda_{\max}(\bSigma_{\bX})\cdot \dfrac{1}{\sqrt{\tilde{\bbeta}_l^\T\widehat{\bSigma}_{\bX}\tilde{\bbeta}_l}}\Vert\tilde{\bbeta}_l-\sqrt{\tilde\bbeta^\T\widehat{\bSigma}_{\bX}\tilde{\bbeta}_l}\cdot\bbeta_l\Vert_2\\
&\le& \lambda_{\max}(\bSigma_{\bX})\cdot \dfrac{1}{\sqrt{\tilde{\bbeta}_l^\T\widehat{\bSigma}_{\bX}\tilde{\bbeta}_l}}\{\Vert\tilde{\bbeta}_l-\bbeta_l^*\Vert_2+|\dfrac{1}{\Vert\bbeta_l\Vert_2}-\sqrt{\tilde{\bbeta}_l^\T\widehat{\bSigma}_{\bX}\tilde{\bbeta}_l}|\cdot\Vert\bbeta_l\Vert_2\},
\eea
where $\bbeta_l^*=\dfrac{\bbeta_l}{\Vert\bbeta_l\Vert_2}$. By our assumption, $\Vert\tilde{\bbeta}_l-\bbeta_l^*\Vert_2\le \sqrt{2}\sin\Theta(\bbeta_l^*,\bbeta_l)\le Cs\epsilon$. For the second term, since $(\bbeta^*_l)^\T\bSigma_{\bX}\bbeta_l^*=\dfrac{\bbeta_l^\T\bSigma_{\bX}\bbeta_l}{\Vert\bbeta_l\Vert_2^2}=\dfrac{1}{\Vert\bbeta_l\Vert_2^2}$, we have $\Vert\bbeta_l\Vert_2=\dfrac{1}{\sqrt{(\bbeta^*_l)^\T\bSigma_{\bX}\bbeta_l^*}}$. It follows that,
\bea
&&|\dfrac{1}{\Vert\bbeta_l\Vert_2}-\sqrt{\tilde{\bbeta}_l^\T\widehat{\bSigma}_{\bX}\tilde{\bbeta}_l}|=|\sqrt{(\bbeta_l^*)^\T\bSigma_{\bX}\bbeta_l^*}-\sqrt{\tilde{\bbeta}_l^\T\widehat{\bSigma}_{\bX}\tilde{\bbeta}_l}|\\
&\le& |\sqrt{(\bbeta_l^*)^\T\bSigma_{\bX}\bbeta_l^*}-\sqrt{\tilde{\bbeta}_l^\T\bSigma_{\bX}\tilde{\bbeta}_l}|+|\tilde{\bbeta}_l^\T(\bSigma_{\bX}-\widehat{\bSigma}_{\bX})\tilde{\bbeta}_l|/(\sqrt{\tilde{\bbeta}^\T\bSigma_{\bX}\tilde{\bbeta}}+\sqrt{\tilde{\bbeta}^\T\widehat{\bSigma}_{\bX}\tilde\bbeta})\\
&\le& \dfrac{|(\bbeta_l^*)^\T\bSigma_{\bX}\bbeta_l^*-\tilde{\bbeta}_l^\T\widehat{\bSigma}_{\bX}\tilde{\bbeta}_l|}{\sqrt{(\bbeta_l^*)^\T\bSigma_{\bX}\bbeta_l^*}+\sqrt{\tilde{\bbeta}_l^\T\widehat{\bSigma}_{\bX}\tilde{\bbeta}_l}}+\dfrac{s\epsilon}{\lambda_{\min}(\bSigma_{\bX})}.
\eea
Because $\sqrt{(\bbeta_l^*)^\T\bSigma_{\bX}\bbeta_l^*}+\sqrt{\tilde{\bbeta}_l^\T\widehat{\bSigma}_{\bX}\tilde{\bbeta}_l}\ge \lambda_{\min}(\bSigma_{\bX})$, it suffices to find a bound for $|(\bbeta_l^*)^\T\bSigma_{\bX}\bbeta_l^*-\tilde{\bbeta}_l^\T\widehat{\bSigma}_{\bX}\tilde{\bbeta}_l|$.
\bea
&&|(\bbeta_l^*)^\T\bSigma_{\bX}\bbeta_l^*-\tilde{\bbeta}_l^\T\widehat{\bSigma}_{\bX}\tilde{\bbeta}_l|\le |(\bbeta_l^*)^\T\bSigma_{\bX}\bbeta_l^*-\tilde{\bbeta}_l^\T{\bSigma}_{\bX}\tilde{\bbeta}_l|+s\epsilon\\
&\le&|(\bbeta_l^*-\tilde{\bbeta}_l)^\T\bSigma_{\bX}\bbeta_l^*|+|\tilde{\bbeta}_l^\T\bSigma_{\bX}(\bbeta_l^*-\tilde{\bbeta}_l)|+s\epsilon\\
&\le& \sqrt{(\bbeta_l^*-\tilde{\bbeta}_l)^\T\bSigma_{\bX}(\bbeta_l^*-\tilde{\bbeta}_l)\cdot (\bbeta_l^*)^\T\bSigma_{\bX}\bbeta_l^*}+\sqrt{\tilde{\bbeta}_l^\T\bSigma_{\bX}\tilde{\bbeta}_l\cdot (\bbeta_l^*-\tilde{\bbeta}_l)^\T\bSigma_{\bX}(\bbeta_l^*-\tilde{\bbeta}_l)}\nonumber\\
&&+s\epsilon\\
&\le& 2\lambda_{\max}(\bSigma_{\bX})\cdot \Vert\bbeta_l^*-\tilde{\bbeta}_l\Vert_2+s\epsilon\\
&\le &\sqrt{2\{1-\cos\Theta(\bbeta^*_l,\tilde\bbeta_l)\}}+s\epsilon\le \sqrt{2}|\sin\Theta(\bbeta_l^*,\tilde\bbeta_l)|+s\epsilon\le Cs\epsilon,
\eea
which also implies that $\dfrac{1}{\sqrt{\tilde{\bbeta}_l^\T\widehat{\bSigma}_{\bX}\tilde{\bbeta}_l}}\le \dfrac{1}{\bbeta_l^*\bSigma_{\bX}\bbeta_l^*-Cs\epsilon}\le \dfrac{2}{\lambda_{\min}(\bSigma_{\bX})}$ if $s\epsilon\le \dfrac{\lambda_{\min(\bSigma_{\bX})}}{2}$. Finally, by \eqref{lem.M.k+1.eq1} we have the desired conclusion.
\end{proof}

Now we define $cr_k(s)=\inf_{F:|F|\le s}cr(\bM_{k,F},\bSigma_F)$. We have the following lemma.
\begin{lemma}\label{lem.sin.k}
Assume that $|\sin\Theta(\hat{\bbeta}_l,\bbeta_l)|<Cs\epsilon$ for any $l<k$. If $\dfrac{2s\epsilon}{\lambda_{\min}(\bSigma_{\bX})}+\dfrac{2s\epsilon}{\lambda_{\min}(\bSigma_{\bX})\lambda_1}<\min\{\frac{1}{2},\dfrac{\Delta}{\lambda_1+\lambda_2},\dfrac{\lambda_{\min}(\bSigma_{\bX})}{2},\dfrac{\Delta}{2(1+\lambda_1^2)}cr_k(s)\}$, we have that $\sin\Theta(\hat\bbeta_k,\bbeta_k)\le Cs\epsilon$.
\end{lemma}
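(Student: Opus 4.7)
The plan is to reduce Lemma~\ref{lem.sin.k} to the generic convergence result of the penalized generalized eigen-decomposition (Lemma~\ref{lem.GEP1}) applied to the deflated matrix pair $(\bM_k,\bSigma_{\bX})$, using Lemma~\ref{lem.M.k+1} to control the perturbation introduced by the previously extracted directions. First, I would invoke Lemma~\ref{lem.M.k+1} with the inductive hypothesis $|\sin\Theta(\widehat\bbeta_l,\bbeta_l)|\le Cs\epsilon$ for all $l<k$ to conclude $\rho(\widehat\bM_k-\bM_k,s)\le Cs\epsilon$. Combined with the bound $\rho(\widehat\bSigma_{\bX}-\bSigma_{\bX},s)\le s\epsilon$ coming from $\Vert\widehat\bSigma_{\bX}-\bSigma_{\bX}\Vert_{\max}\le\epsilon$ via the usual $\Vert\bu\Vert_1\le\sqrt{s}$ argument, this yields $\delta(s)=\sqrt{\rho^2(\widehat\bM_k-\bM_k,s)+\rho^2(\widehat\bSigma_{\bX}-\bSigma_{\bX},s)}\le Cs\epsilon$.

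Next, I would recast the problem at step $k$ as a generalized eigenvalue problem for the matrix pair $(\bM_k,\bSigma_{\bX})$. By the construction of $\bM_k$ through deflation (see Lemma~\ref{sub}), $\bbeta_k$ is the leading generalized eigenvector of $(\bM_k,\bSigma_{\bX})$ with eigenvalue $\lambda_k$, and the second generalized eigenvalue equals $\lambda_{k+1}$, so the spectral gap is at least $\Delta$ by Condition (C2). The input conditions of Lemma~\ref{lem.sin.k} are tailored precisely so that all four hypotheses of Lemma~\ref{lem.GEP1}---namely $\delta(s)<\min\{1/2,\lambda_{\min}(\bSigma_{\bX})/2\}$, the perturbation bound $\frac{2\delta(s)}{\lambda_{\min}(\bSigma_{\bX})}+\frac{2\delta(s)}{\lambda_{\min}(\bSigma_{\bX})\lambda_1(F)}<a^*$ (with $a^*=\min\{1/2,\Delta/(\lambda_1+\lambda_2),\lambda_{\min}(\bSigma_{\bX})/2\}$), the spectral gap condition $\xi^*>2\delta(s)/cr_k(s)$, and the $\nu^*<1$ step-size condition on $\eta$ imposed in the theorem's statement---hold. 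One then verifies that the initial value condition $(\widehat\bbeta_k^0)^\T\bbeta_k/\Vert\widehat\bbeta_k^0\Vert_2\ge 1-\theta^*(\bM_k,\bSigma_{\bX})$ is satisfied under the theorem's hypothesis on $\theta$, so Lemma~\ref{lem.GEP1} delivers
\[
|\sin\Theta(\widehat\bbeta_k,\bbeta_k)|\le \frac{\sqrt{20}}{1-\nu^*}\cdot\frac{2}{\xi^*\{cr_k(s)-\delta(s)\}}\delta(s)\le Cs\epsilon,
\]
using that $1-\nu^*$, $\xi^*$, and $cr_k(s)-\delta(s)$ are bounded below by constants depending only on $\lambda_1,\lambda_K,\Delta,\lambda_{\min}(\bSigma_{\bX}),\lambda_{\max}(\bSigma_{\bX}),\eta$.

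The main technical obstacle is the verification that the deflation does not disturb the quantities appearing in Lemma~\ref{lem.GEP1}. Specifically, the restricted condition number $cr_k(s)$ involves the deflated $\bM_k$ rather than $\bM$, so I need to check that $cr_k(s)$ admits a uniform lower bound across $k=1,\dots,K$ under Conditions (C2)--(C3); this follows because the deflation subtracts a rank-$(k-1)$ matrix whose generalized eigenvalues all coincide with $\lambda_1,\dots,\lambda_{k-1}$ on $\spn(\bbeta_1,\dots,\bbeta_{k-1})$ and acts as zero on its $\bSigma_{\bX}$-orthogonal complement. A second subtlety is that $\delta(s)$ for the $k$-th step is driven by $\widehat\bM_k-\bM_k$, whose bound from Lemma~\ref{lem.M.k+1} inherits an accumulation constant depending on $k$; since $k\le K$ with $K$ fixed, this constant can be absorbed into $C$. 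With these points resolved, the stated bound $|\sin\Theta(\widehat\bbeta_k,\bbeta_k)|\le Cs\epsilon$ follows immediately, completing the inductive step needed for Theorem~\ref{consistency.GEP}.
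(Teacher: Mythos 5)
Your proposal is correct and follows essentially the same route as the paper: the paper's own proof of this lemma is the one-line "Combine Lemma~\ref{lem.M.k+1} with Lemma~\ref{lem.GEP1}," i.e., use the deflation-error bound to get $\delta(s)\le Cs\epsilon$ for $(\widehat\bM_k,\widehat\bSigma_{\bX})$ and then invoke the generic RIFLE convergence result. Your additional verifications (the lower bound on $cr_k(s)$ after deflation and the absorption of the $k$-dependent accumulation constant since $K$ is fixed) are exactly the details the paper leaves implicit.
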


\begin{proof}[Proof of Lemma~\ref{lem.sin.k}]
Combine Lemma~\ref{lem.M.k+1} with Lemma~\ref{lem.GEP1} and we have the desired conclusion.
\end{proof}

\begin{proof}[Proof of Theorem~\ref{consistency.GEP}]
Combining Lemma~\ref{lem.GEP.beta1} with Lemma~\ref{lem.sin.k}, we have that if $\Vert\widehat{\bM}-\bM\Vert_{\max}\le\epsilon$, $\Vert\widehat{\bSigma}_{\bX}-\bSigma_{\bX}\Vert_{\max}\le\epsilon$, then $|\sin\Theta(\hat{\bbeta}_k,\bbeta_k)|\le Cs\epsilon, k=1,\ldots,K$. By Theorem~\ref{th:main} we have the desired conclusion. 
\end{proof}


\end{document}